\newtheorem{remark}{Remark}
\newtheorem{example}{Example}
\newtheorem{lemma}{Lemma}
\newtheorem{prop}{Proposition}
\newtheorem{theorem}{Theorem}
\newcommand{\ga}[1]{\textcolor{blue}{{#1}}}
\begin{document}
		\title[Stability of ILMs in the lattice with competing powers]{Stability of intrinsic localized modes on the 
			lattice with competing power nonlinearities}
	
	\author[G. L. Alfimov]{Georgy L. Alfimov}
	\address[G. L. Alfimov]{National Research University of Electronic Technology MIET, Zelenograd, Moscow 124498, Russia}
	\email{galfimov@yahoo.com}
		
	\author[P. A. Korchagin]{Pavel A. Korchagin}
	\address[P. A. Korchagin]{LLC “LSRM”, Zelenograd, Moscow 124482, Russia}
	\email{korchagin@lsrm.ru}
	
	\author[D. E. Pelinovsky]{Dmitry E. Pelinovsky}
	\address[D. E. Pelinovsky]{Department of Mathematics and Statistics, McMaster University, Hamilton, Ontario, Canada, L8S 4K1}
	\email{pelinod@mcmaster.ca}
	
	\begin{abstract}
		We study the discrete nonlinear Schr\"{o}dinger equation with competing powers $(p,q)$ satisfying $2 \leq p < q$. The physically relevant cases are given by $(p,q) = (2,3)$, $(p,q) = (3,4)$, and $(p,q) = (3,5)$. In the anticontinuum limit, all intrinsic localized modes are compact and can be classified by their codes, which record one of two nonzero (smaller and larger) states and their sign alternations. By using the spectral stability analysis, we prove that the codes for larger states of the same sign are spectrally and nonlinearly (orbitally) stable, whereas the codes for smaller states of the alternating signs are spectrally stable but have eigenvalues of negative Krein signature. We also identify numerically the spectrally stable codes which consist of stacked combinations of the sign-definite larger states and the sign-alternating smaller states. 
	\end{abstract}   
	
	\maketitle


\section{Introduction}\label{Sec:Problem}

The Discrete Nonlinear Schr\"odinger (DNLS) equation, 
\begin{align}
    i\frac{d\Psi_n}{d t}+C\Delta_2\Psi_n+\left|\Psi_n\right|^2\Psi_n=0, \label{Eq:DNLS}
   & .
\end{align}
where $\Psi_n(t) \in \mathbb{C}$, $t \in \mathbb{R}$, $n \in \mathbb{Z}$, $C > 0$, and $\Delta_2\Psi_n \equiv \Psi_{n+1}-2\Psi_n+\Psi_{n-1}$, 
is a fundamental lattice model that has emerged in various physical contexts (see, e.g., the  surveys \cite{HTs99,EJ2003} and the book \cite{K2009}). In nonlinear optics, this equation  describes arrays of parallel Kerr waveguides fabricated on a common substrate \cite{ChLS03,DiscrSol08}.  In  Bose-Einstein condensate (BEC) theory, it is used to describe the dynamics of the condensate cloud in a sigar-shape optical trap \cite{TS01,KB04,MO06}. In this context, the lattice equation (\ref{Eq:DNLS}) has been derived from the Gross-Pitaevskii equation with periodic potential by using the basis of Wannier functions \cite{AKKS04}, see also the book \cite{Pel11}.  

A particular attention is given to specific solutions of the DNLS equation which are spatially localized and periodic in time. These solutions are called {\it intrinsic localized modes} (ILMs) or, alternatively, {\it  discrete solitons}. The simplest class of  ILMs is represented by the stationary oscillations of the frequency $\omega$ written in the form
\begin{gather}
\Psi_n(t) = \Phi_n e^{i\omega t},\quad \lim_{n \to \pm \infty} \Phi_n = 0. \label{Eq:SolitonCond}
\end{gather}
The existence and stability of ILMs have been extensively studied in the DNLS equation in the anticontinuum limit (ACL) (see, e.g., \cite{ABK04,PKF05,Yo16}). The ACL was introduced in the context of 
breathers of the discrete Klein--Gordon equations in \cite{MA1994,A1997} 
but it was found particularly powerful in the context of the DNLS equation. 

Many other physically relevant lattice models differ from the cubic 
DNLS equation (\ref{Eq:DNLS}) by their type of nonlinearity. Dynamics of electromagnetic field in array of waveguides in photorefractive crystals is described by the DNLS equation with a {\it saturable} nonlinearity \cite{HMSK04}.  Approximation of the saturable nonlinearity by two power terms yields the cubic--quintic DNLS equation:
\begin{gather}
    i\frac{d\Psi_n}{d t}+C\Delta_2\Psi_n+\varkappa\left|\Psi_n\right|^2\Psi_n-\Gamma\left|\Psi_n\right|^4\Psi_n=0. 
    \label{Eq:DNLS-CQ}
\end{gather} 
If  $\varkappa$ and $\Gamma$ are positive constants, Eq.~(\ref{Eq:DNLS-CQ}) includes two competing powers of opposite signs (referred to as DNLS with {\it competing nonlinearity}).  It is known for $C > 0$ that the ``plus'' term $\sim\left|\Psi_n\right|^2\Psi_n$ is a {\em focusing} nonlinearity and the ``minus'' term $\sim\left|\Psi_n\right|^4\Psi_n$ is a {\em defocusing} nonlinearity. The interplay of these two factors results in new features of the model with either saturable or cubic--quintic nonnlinearity, e.g.  multistability of steady-states \cite{CTCM2006,TD10,CP11} and mobility of localized excitations \cite{MHM07,MHM08,MVM13,AT2019,MCKC06,MCKC08,AKLP19}.

Recently, two more versions of DNLS with competing nonlinearity have received wide discussion in the physical literature. Both of them have been used to describe a mixture of two BECs in the presence of quantum fluctuations (QFs). It was shown in \cite{P15} that the correction to  the mean-field energy due to QFs can stabilize the BEC mixture.  The correction depends on the dimensionality of the system \cite{AP16}.  Theoretical predictions were confirmed by experiments,  in 2D \cite{Experiment01} and 3D \cite{Experiment02} cases.

(i) In the case of 1D systems (i.e. assuming that the trap is  very narrow in the transverse direction), the correction due to QFs gives an additional {\it attractive} nonlinear term $\sim |\Psi|\Psi$ to the cubic repulsive term \cite{ZYZZ2021,KZ2024}. Expanding the wavefunction with respect to the basis of Wannier functions \cite{AKKS04}, one arrives at the lattice equation \cite{Droplets01,Droplets02,SuzantoStudies}
\begin{gather}
    i\frac{d\Psi_n}{d t}+C\Delta_2\Psi_n+\varkappa\left|\Psi_n\right|\Psi_n-\Gamma\left|\Psi_n\right|^2\Psi_n=0,
    \label{Eq:Droplets01}
\end{gather}
where  $C$, $\varkappa$ and $\Gamma$ are positive. 

(ii) If the BEC is loaded in the 3D cigar-type (elongated) trap, then the correction due to QFs gives an additional {\it repulsive} nonlinear term $\sim |\Psi|^3\Psi$ to the cubic attractive term \cite{Edm20,Droplets03,Deb22,Nat22,AKA25}.  Averaging the 3D wavefunction over the transverse dimensions and applying the Wannier functions expansion (see \cite{Droplets03} for detail) leads to the lattice equation
\begin{gather}
    i\frac{d\Psi_n}{d t}+C\Delta_2\Psi_n+\varkappa\left|\Psi_n\right|^2\Psi_n-\Gamma\left|\Psi_n\right|^3\Psi_n=0,\label{Eq:Droplets02}
\end{gather}
where  $C$, $\varkappa$ and $\Gamma$ are positive. 

The DNLS equations (\ref{Eq:DNLS-CQ}), (\ref{Eq:Droplets01}), and (\ref{Eq:Droplets02}) are the particular cases of a more general model:
\begin{gather}
    i\frac{d\Psi_n}{d t}+C\Delta_2\Psi_n+ \varkappa\left|\Psi_n\right|^{p-1}\Psi_n-\Gamma\left|\Psi_n\right|^{q-1}\Psi_n=0,\label{Eq:Gen_pq}
\end{gather}
where $C$, $\varkappa$, and $\Gamma$ are positive constants, whereas $p,q\in \mathbb{N}$ satisfy $2 \leq p < q$. Since the wave function must remain spatially localized, ILMs are natural objects of the study. Assuming the solution of the form (\ref{Eq:SolitonCond}) with the $t$-dependent amplitudes $\{\Phi_n(t)\}_{n \in \mathbb{Z}}$, it follows from (\ref{Eq:Gen_pq}) that
\begin{gather}
    i\frac{d\Phi_n}{d t} + C\Delta_2 \Phi_n -\omega \Phi_n + \varkappa\left|\Phi_n\right|^{p-1} \Phi_n - \Gamma \left|\Phi_n\right|^{q-1} \Phi_n=0.
    \label{Eq:Gen_psi}
\end{gather}
If $\omega > 0$, we apply the time rescaling $t\to \omega t$ in  Eq.~\eqref{Eq:Gen_psi}, followed by the transformation
\begin{gather*}
    \varepsilon = \frac{C}{\omega}, \quad \gamma = \frac{\Gamma}{\omega} \left(\frac{\omega}{\varkappa}\right)^{\frac{q-1}{p-1}}, \quad \Phi_n = \left(\frac{\omega}{\varkappa}\right)^{\frac{1}{p-1}} u_n,
\end{gather*}
to obtain the normalized DNLS equation:
\begin{gather}
    i\frac{du_n}{d t} + \varepsilon\Delta_2u_n- u_n+\left|u_n\right|^{p-1} u_n-\gamma\left|u_n\right|^{q-1} u_n=0.
    \label{Eq:u_gen_pq_t}
\end{gather}
Steady states for the time-dependent DNLS equation~(\ref{Eq:u_gen_pq_t}) satisfy the difference equation
\begin{gather}
    \varepsilon\left(u_{n+1}-2u_n+u_{n-1}\right)- u_n+\left|u_n\right|^{p-1} u_n-\gamma\left|u_n\right|^{q-1} u_n=0. 
    \label{Eq:u_gen_pq}
\end{gather}
We represent the sequence $\{ u_n \}_{n \in \mathbb{Z}}$ for solutions to  Eq.~(\ref{Eq:u_gen_pq}) as a bi-infinite vector ${\bf u}=(\ldots u_{-1},u_0,u_1,\ldots)$. It follows from (\ref{Eq:u_gen_pq}) that 
\begin{gather*}
J=\overline{u}_nu_{n+1}-u_n\overline{u}_{n+1}
\end{gather*}
does not depend on $n$, where the bar means the complex conjugation. 
If $\lim\limits_{n \to \pm \infty} u_n = 0$, then $J=0$ so that either $u_n=0$ or
\begin{gather*}
\frac{u_{n+1}}{u_{n}}=\frac{\overline{u}_{n+1}}{\overline{u}_{n}},
\end{gather*}
and the arguments of $u_{n+1}$ and $u_n$ are equal modulo $\pi$.  Therefore, without loss of generality  we can assume that $u_n \in\mathbb{R}$ for any $n\in\mathbb{Z}$ so that ${\bf u} \in \mathbb{R}^{\mathbb{Z}}$.

In this study, we address the problem of stability of ILMs in the normalized DNLS equation with the competing nonlinearity (\ref{Eq:u_gen_pq_t}). The stability property of the ILMs is very important for physical applications, since only stable objects can be observed in experiments.  In a general formulation, this problem can hardly be solved analytically. For the  DNLS equation (\ref{Eq:DNLS}), the stability of solutions under arbitrary value of the coupling parameter $C$ can only be established numerically \cite{K2009}. However, the stability problem can be analyzed in the ACL for small values of the coupling parameter \cite{PKF05}. The method of  \cite{PKF05} was extended in \cite{RNKF09} to wider class of DNLS-type equations including the saturable nonlinearity. It is the purpose of this work to analyze the stability problem for the competing nonlinearity. The particular cases of the model (\ref{Eq:Gen_psi}), namely the cases $(p,q) = (2,3)$ in (\ref{Eq:Droplets01}) 
and $(p,q) = (3,4)$ in (\ref{Eq:Droplets02}), were recently studied in \cite{AKA25}, where bifurcations of ILMs were classified numerically, 
and in \cite{SuzantoStudies}, where stability of dark solitons was considered 
in the ACL analytically and numerically. 

Stability of ILMs is related to the question of minimization of the energy $H({\bf u})$ with or without a constraint of fixed mass $Q({\bf u})$, where
\begin{align}
\label{energy}
H({\bf u}) &= \sum_{n \in \mathbb{Z}} \varepsilon |u_{n+1}-u_n|^2  - \frac{2}{p+1} |u_n|^{p+1} + \frac{2 \gamma}{q+1} |u_n|^{q+1}, \\
\label{mass}
Q({\bf u}) &= \sum_{n \in \mathbb{Z}} |u_n|^2. 
\end{align} 
Both the energy and mass are conserved quantities of the DNLS equation (\ref{Eq:u_gen_pq_t}). The difference equation (\ref{Eq:u_gen_pq}) is the Euler--Lagrange equation for the critical points of the augmented energy 
\begin{equation}
\label{aug-energy}
\Lambda({\bf u}) = H({\bf u}) + Q({\bf u}).
\end{equation}
We distinguish between the spectral stability of ILMs, for which the spectrum of a linearized operator is a subset of $i \mathbb{R}$, and the nonlinear (orbital) stability of ILMs, for which perturbations to the orbit $\{ e^{i \alpha} {\bf u}\}_{\alpha \in \mathbb{R}}$ do not grow in time. See the book \cite{GP25} for the introduction to the stability analysis of ILMs.

The main analytical result of this study is the following theorem.

\begin{theorem}
	\label{th-main} 
	Let $p,q \in \mathbb{N}$ be fixed such that $2 \leq p < q$. For every $\gamma \in (0,\gamma_{{p,q}})$, where 
	 \begin{gather}
	\gamma_{{p,q}} \equiv \left(\frac{p-1}{q-1}\right)\cdot\left(\frac{q-p}{q-1}\right)^{\frac{q-p}{p-1}};\label{Eq:BoundGamma}
	\end{gather}
	there exists $\varepsilon_0 > 0$ and $C_0 > 0$ such that for every $\varepsilon \in (0,\varepsilon_0)$, there exists a solution ${\bf u} \in \ell^2(\mathbb{Z})$ of the difference equation (\ref{Eq:u_gen_pq}) such that 
	\begin{equation}
	\label{bound-ILMs}
	\| {\bf u} - {\bf u}^{(0)} \|_{\ell^2(\mathbb{Z})} \leq C_0 \varepsilon,
	\end{equation}	
where ${\bf u}^{(0)} = (\ldots,0,0,\tilde{\bf u},0,0,\ldots)$ with $\tilde{\bf u} \in \mathbb{R}^N$ of any length $N \geq 1$ defined by nonzero roots $\pm a$ and $\pm A$ with $0 < a  < A$ of the function 
\begin{equation}
\label{function}
\mathbb{R} \ni u \to f(u) = u (1 - |u|^{p-1} + \gamma |u|^{q-1}) \in \mathbb{R}.
\end{equation} 
Moreover, we have for every $\varepsilon \in (0,\varepsilon_0)$,
\begin{itemize}
	\item[(A)] The solution ${\bf u}$ with either
	$\tilde{\bf u} = (+A,+A,\dots, +A)$ or  $\tilde{\bf u} = (-A,-A,\dots,-A)$ is a minimizer of augmented energy $\Lambda$, hence it is spectrally and orbitally stable.

\item[(B)] The solution ${\bf u}$ with either 
$\tilde{\bf u} = (+a,-a,\dots,\pm a)$ or $\tilde{\bf u} = (-a,+a,\dots,\mp a)$ is spectrally stable but it is a constrained minimizer of energy $H$ for fixed mass $Q$ (which is orbitally stable) if and only if $N = 1$.  

\item[(C)] If 
\begin{equation}
\label{crit-1}
\frac{a^2}{f'(a)} + \frac{(N-1) A^2}{f'(A)} < 0,
\end{equation}
then the solution ${\bf u}$ with $\tilde{\bf u}$ consisting of either $(N-1)$ elements $(+A,+A,\dots,+A)$ and one element $+a$ or $(N-1)$ elements $(-A,-A,\dots,-A)$ and one element $-a$ (in any order) is a constrained minimizer of energy $H$ for fixed mass $Q$, hence it is spectrally and orbitally stable.

\item[(D)] If 
\begin{equation}
\label{crit-2}
\frac{(N-1) a^2}{f'(a)} + \frac{A^2}{f'(A)} > 0,
\end{equation}
then the solution ${\bf u}$ with $\tilde{\bf u}$ consisting of either $(N-1)$ elements  $(+a,-a,\dots,\pm a)$ and one element $\pm A$ or $(N-1)$ elements $(-a,+a,\dots,\mp a)$ and one element $\mp A$ (in any order but preserving the sign alternation) is spectrally stable but it is not a constrained minimizer of energy $H$ for fixed mass $Q$ for $N \geq 2$.
\end{itemize}
\end{theorem}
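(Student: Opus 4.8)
The plan is to combine persistence of the compact states at small coupling with a spectral count organized around two Hessian blocks, treating separately the two code families that turn out not to be (constrained) minimizers. I would first analyze the algebra at $\varepsilon=0$: writing $f(u)=u\,g(u)$ with $g(u)=1-|u|^{p-1}+\gamma|u|^{q-1}$, the even function $g$ has $g(0)=1$, $g(u)\to+\infty$, and a single positive interior minimum, so solving $g(u^\ast)=g'(u^\ast)=0$ gives $(u^\ast)^{p-1}=(q-1)/(q-p)$ and hence exactly the threshold $\gamma_{p,q}$ of (\ref{Eq:BoundGamma}) at which the two positive roots $a<A$ of $g$ collide. For $\gamma\in(0,\gamma_{p,q})$ one thus obtains the roots $0,\pm a,\pm A$ of $f$, and since $f'=g+u g'$ gives $f'(0)=1$, $f'(\pm a)=a\,g'(a)<0$ and $f'(\pm A)=A\,g'(A)>0$ ($a$, resp.\ $A$, being the smaller, resp.\ larger, root of $g$), the on-site equations $f(u_n)=0$ have non-degenerate roots with a definite sign pattern — the feature that drives (A)--(D). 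Persistence then follows from the implicit function theorem with $\varepsilon$ as parameter: at $\varepsilon=0$ equation (\ref{Eq:u_gen_pq}) decouples and its Jacobian at a compact ${\bf u}^{(0)}$ is the boundedly invertible diagonal operator $-\operatorname{diag}(f'(u_n^{(0)}))$ on $\ell^2(\mathbb{Z})$ (entries in $\{-1,-f'(a),-f'(A)\}$), producing a smooth branch ${\bf u}(\varepsilon)$ satisfying (\ref{bound-ILMs}), whose exponentially decaying tails legitimize the $\ell^2$-framework.

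Next I would linearize the flow (\ref{Eq:u_gen_pq_t}) at ${\bf u}$, separate real and imaginary parts, and obtain the self-adjoint operators $\mathcal{M}_\pm$, with the Hessian of $\Lambda$ equal to $\mathcal{M}_+\oplus\mathcal{M}_-$, $\mathcal{M}_-{\bf u}=0$, essential spectrum $[1,1+4\varepsilon]$, a two-dimensional generalized kernel, and constraint scalar $D=\langle\mathcal{M}_+^{-1}{\bf u},{\bf u}\rangle$. Writing $b_1,\dots,b_N$ for the entries of $\tilde{\bf u}$, one then computes at leading order in $\varepsilon$: (i) the number of negative eigenvalues $n(\mathcal{M}_+)=\#\{n:|u_n^{(0)}|=a\}$, since $\mathcal{M}_+^{(0)}$ is diagonal with occupied entries $f'(u_n^{(0)})$; (ii) via a Lyapunov--Schmidt reduction of the $N$-fold kernel of $\mathcal{M}_-^{(0)}$, that the nonzero near-kernel of $\mathcal{M}_-$ is $\varepsilon$ times a tridiagonal matrix $\mathcal{R}$ ($\mathcal{R}_{jj}=(b_{j+1}+b_{j-1})/b_j$, $\mathcal{R}_{j,j\pm1}=-1$, $b_0=b_{N+1}=0$) that annihilates $(b_1,\dots,b_N)$ and is congruent, via $\operatorname{diag}(b_j)$, to the matrix with quadratic form $\sum_{j=1}^{N-1}b_jb_{j+1}(y_j-y_{j+1})^2$ — hence semidefinite with one-dimensional kernel, positive for sign-definite codes and negative for sign-alternating ones, so $n(\mathcal{M}_-)=0$ for (A),(C) and $n(\mathcal{M}_-)=N-1$ for (B),(D); (iii) $D=\sum_j b_j^2/f'(b_j)+O(\varepsilon)$, whose leading term is $NA^2/f'(A)>0$ for (A), $Na^2/f'(a)<0$ for (B), and exactly the left-hand side of (\ref{crit-1}), resp.\ (\ref{crit-2}), for (C), resp.\ (D). Feeding this into the Krein-signature index identity $k_r+2k_c+2k_i^-=n(\mathcal{M}_+)+n(\mathcal{M}_-)-n(D)$ (see \cite{GP25}) returns $0$ for (A), for (C) under (\ref{crit-1}), and for (B) with $N=1$; and $2(N-1)$ for (B) with $N\geq2$ and for (D) under (\ref{crit-2}).

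From this the variational statements are immediate. For (A), $n(\mathcal{M}_+)=n(\mathcal{M}_-)=0$ makes $\mathcal{M}_+\oplus\mathcal{M}_-\geq0$ with kernel exactly the gauge direction, so ${\bf u}$ is a local minimizer of $\Lambda$ modulo phase, hence orbitally and (a fortiori) spectrally stable. For (C), and for (B) with $N=1$, one has $n(\mathcal{M}_+)=1$, $n(\mathcal{M}_-)=0$, $D<0$; the codimension-one restriction $\mathcal{M}_+|_{\{{\bf u}\}^\perp}$ is then nonnegative — the single negative direction of $\mathcal{M}_+$ is absorbed by the mass constraint precisely because $\langle\mathcal{M}_+^{-1}{\bf u},{\bf u}\rangle<0$ — so ${\bf u}$ is a constrained minimizer of $H$ at fixed $Q$, again orbitally and spectrally stable. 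For (B) with $N\geq2$ and for (D), $n(\mathcal{M}_-)=N-1>0$ shows the constrained-minimizer property must fail, since the mass constraint acts only on the real part.

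The remaining and main task is spectral stability of (B) with $N\geq2$ and of (D). Here the near-zero eigenvalues of the full linearization obey $\lambda^2=-\varepsilon\nu+o(\varepsilon)$ with $\nu$ in the spectrum of $\mathcal{R}\operatorname{diag}(f'(b_j))$, which is similar to $\widehat{\mathcal{R}}\,G$ where $\widehat{\mathcal{R}}=\operatorname{diag}(b_j)\,\mathcal{R}\,\operatorname{diag}(b_j)$ and $G=\operatorname{diag}(f'(b_j)/b_j^2)$. For these (sign-alternating) codes $\widehat{\mathcal{R}}\leq0$, so writing $\widehat{\mathcal{R}}=-S^2$ on the orthogonal complement $V$ of its one-dimensional kernel (the constant vector) makes $\widehat{\mathcal{R}}G|_V$ similar to the symmetric matrix $-SGS|_V$; hence the relevant $\nu$'s are real, and — by Sylvester's law — they are all strictly positive iff $G|_V$ is negative definite. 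This is automatic for (B), where $G$ is a negative multiple of the identity, and for (D) it amounts to removing the single positive direction of the indefinite form $G$ by the hyperplane $(1,\dots,1)^\perp$, which happens exactly when $\langle G^{-1}(1,\dots,1),(1,\dots,1)\rangle=\sum_j b_j^2/f'(b_j)>0$, i.e.\ exactly (\ref{crit-2}). Consequently $\lambda^2<0$ along all non-gauge near-zero modes, $k_r=k_c=0$, and the count then forces $k_i^-=N-1$: these codes are spectrally stable and carry exactly $N-1$ purely imaginary eigenvalue pairs of negative Krein signature. I expect the hardest part to be the rigorous Lyapunov--Schmidt bookkeeping — verifying that the infinitely many near-zero eigenvalues of $\mathcal{M}_-$ and of the full linearization are indeed captured, uniformly away from the (bounded-below) essential spectrum, by the finite matrices $\mathcal{R}$ and $\widehat{\mathcal{R}}G$, with $O(\varepsilon)$ remainders too small to spoil the strict signs of $\nu$ and $D$ — together with pinning down the exact form of the index identity and of the constraint scalar.
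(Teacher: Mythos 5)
Your proposal is correct and follows the same overall architecture as the paper: persistence of compact ACL states by the implicit function theorem (the paper's Proposition \ref{prop-existence}), reduction of the linearization to an $N\times N$ problem built from $\tilde{\bf L}^+ = {\rm diag}(f'(b_j))$ and the tridiagonal matrix you call $\mathcal{R}$ (the paper's $\tilde{\bf L}^-$, Lemma \ref{theorem-1}), and the Hamiltonian--Krein index machinery of \cite{GP25} with the constraint scalar $D=\langle ({\bf L}^+_\varepsilon)^{-1}{\bf u},{\bf u}\rangle$, whose leading term $\sum_j b_j^2/f'(b_j)$ reproduces the criteria (\ref{crit-1})--(\ref{crit-2}) exactly as in (\ref{neg-3})--(\ref{neg-4}) and Propositions \ref{prop-3}--\ref{prop-4}. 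Where you genuinely deviate is at the lemma level, and the deviations are sound: (i) you obtain $n(\tilde{\bf L}^-)$ from the congruence ${\rm diag}(b)\,\mathcal{R}\,{\rm diag}(b)$ with quadratic form $\sum_{j} b_j b_{j+1}(y_j-y_{j+1})^2$, which is more elementary and self-contained than the paper's appeal to discrete Sturm oscillation theory (Lemma \ref{theorem-2}) and, for the purpose of the count, lets you bypass the sign-alternation persistence argument that the paper proves in Proposition \ref{prop-existence} precisely to apply Sturm theory to the full operator ${\bf L}^-_\varepsilon$; (ii) for the sign-alternating families you symmetrize $\widehat{\mathcal{R}}G$ via a square root of $-\widehat{\mathcal{R}}|_V$ and then apply Sylvester plus the standard constrained-inertia criterion $\langle G^{-1}\mathbf{1},\mathbf{1}\rangle>0$, whereas the paper projects with $\Pi_0$ onto $\tilde{\bf u}^\perp$ and applies Sylvester to the generalized problem (\ref{Gener-projected}) --- equivalent mechanics, yours slightly more explicit; (iii) you use the single combined identity $k_r+2k_c+2k_i^-=n({\bf L}^+_\varepsilon)+n({\bf L}^-_\varepsilon)-\sigma$, while the paper's split identities (\ref{count-1})--(\ref{count-2}) are finer (they separate $N_r^{\pm}$), which is what the paper exploits in the borderline examples of Section \ref{Sec:NegativeIndex}. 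The one step you correctly flag as outstanding --- that the near-zero spectrum of the full problem is captured by the finite reduction with real corrections, so that multiple positive $\nu$'s do not split into complex pairs --- is exactly what the paper settles in Lemma \ref{theorem-1} together with the semi-simplicity/Krein-definiteness remarks after Propositions \ref{prop-1}--\ref{prop-4}; your symmetrized form $-SG_VS$ supplies precisely the definiteness needed to run that argument, so closing this gap is bookkeeping rather than a missing idea.
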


\begin{remark}
	The existence part of Theorem \ref{th-main} is proven as Proposition \ref{prop-existence}. The stability part of Theorem \ref{th-main} with items (A), (B), (C), and (D) is proven as Propositions \ref{prop-1}, \ref{prop-2}, \ref{prop-3}, and \ref{prop-4}. Each proposition also implies that all other ILMs of the same class as in (A), (B), (C), (D) are spectrally unstable, e.g. (A) the solution ${\bf u}$ with 
	$\tilde{\bf u} = (\pm A,\pm A,\dots, \pm A)$ are spectrally unstable if there exists at least one sign alternation, etc. 
\end{remark}

To simplify the formalism, we will use the code 
$\mathcal{A} = (A_+a_-A_-a_+)$ with $a_{\pm} = \pm a$ and $A_{\pm} = \pm A$ 
instead of the vector $\tilde{u} = (+A,-a,-A,+a)$. In this way, we can label 
the whole branch of ILMs for $\varepsilon \in (0,\varepsilon_0)$ by $\mathcal{A}$, whereas the 
vector $\tilde{\bf u}$ is only relevant for the limiting vector ${\bf u}^{(0)}$ as $\varepsilon \to 0$, see (\ref{bound-ILMs}).  

Items (A) and (B) of Theorem \ref{th-main} suggest the universal stability of the following codes of any length $N \geq 1$:
\begin{gather}
\mathcal{A}_A=(A_+A_+\ldots A_+),\quad \mathcal{A}_a=(a_+a_-\ldots a_\pm). \label{Eq:uniform}
\end{gather}
Items (C) and (D) include the first members of the sequence of stacked modes:
\begin{gather}
{\mathcal A}_{n,m}^+=(\underbrace{A_+A_+\ldots A_+}_{n} \underbrace{a_+a_-\ldots a_\pm}_{m}),\quad {\mathcal A}_{n,m}^-=(\underbrace{A_+A_+\ldots A_+}_{n} \underbrace{a_-a_+\ldots a_\pm}_{m}). \label{Eq:Stacked}
\end{gather}
We have performed the full numerical study of spectrally stable codes, focusing on the physically relevant cases $(p,q) = (2,3)$, $(p,q) = (3,4)$, and $(p,q) = (3,5)$. Our numerical finding is that for values of $\gamma$ close to $\gamma_{p,q}$, there exist universally stable codes $\mathcal{A}^-_{k+1,k}$ for odd $N = 2k + 1$ in addition to universally stable codes $\mathcal{A}_A$ and $\mathcal{A}_a$ in (\ref{Eq:uniform}). Furthermore, for even $N=2k$ and for values of $\gamma$ close to $\gamma_{p,q}$ the ILM with codes $\mathcal{A}^+_{k,k}$ are stable for large values of $q$ as follows: for $q > 7$ in the case $p=2$, 
for $q > 5$ in the case  $p=3$, and for any $q > p$ in the case of $p \geq 4$.

In addition, we discovered many stable configurations for small values of $\gamma$. Eigenvalues of the spectral stability problem for small $\gamma > 0$ are very different in magnitude since $a \to 1$ and $A \to \infty$ as $\gamma \to 0$, which explains stability of many codes in addition to $\mathcal{A}_A$, $\mathcal{A}_a$ in (\ref{Eq:uniform}), and their stable stacked versions described above. We defer for further study the asymptotic analysis of the spectral stability of ILMs in the limits $\gamma \to \gamma_{p,q}$ and $\gamma \to 0$.

The rest of this paper is organized as follows. Existence of ILMs in the ACL is studied in Section \ref{Sect:StatStates}, where the first part of Theorem \ref{th-main} is proven. We divide all codes into groups of {\em equivalent} and {\em irreducible} codes and give the count of all irreducible codes for $N \geq 1$. 

Stability of ILMs in the anticontinuum limit is studied in Section \ref{Sec:StabilityACL}, where the second part of Theorem \ref{th-main} is proven by using analysis of the truncated spectral stability problem and persistence of eigenvalues in the full spectral stability problem. Moreover, we combine the classification of the spectrally stable codes with the count of negative eigenvalues in the second variation of the augmented energy $\Lambda$ and in the second variation of the constrained energy $H$ for fixed mass $Q$, which are 
included to get conclusions on the nonlinear (orbital) stability of ILMs.

Numerical results visualizing simplest ILMs and their bifurcations are shown in Section \ref{Sect:Continuation}. Numerical results about spectral stability 
of ILMs are summarized in Section \ref{Sect:Total}. We give the complete count of spectrally stable ILMs in tables and show bifurcations of stable and unstable eigenvalues for some ILMs in figures. We focus on the physically relevant cases $(p,q) = (2,3)$, $(p,q) = (3,4)$, and $(p,q) = (3,5)$, but also give results for $(p,q) = (3,6)$ for completeness of presentation. Section \ref{Sect:Conclusion} concludes the paper with summary and open problems for future study.

\section{Existence of ILMs in the anticontinuum limit}
\label{Sect:StatStates}

We explore the anticontinuum limit (ACL) in the difference equation (\ref{Eq:u_gen_pq}), which corresponds to the limit of uncoupled lattice sites as $\varepsilon \to 0$. If $\varepsilon = 0$, then $u_n$ for each $n \in \mathbb{Z}$ is a real root of $f(u) : \mathbb{R} \to \mathbb{R}$ given by (\ref{function}). We have 
\begin{itemize}
    \item[(a)] three real roots $\{0,+1,-1\}$ of $f$ if $\gamma=0$;
    \item[(b)] five real roots $\{0,+a,-a,+A,-A\}$ of $f$ with $0<a<A$ if $\gamma \in (0,\gamma_{p,q})$, where $\gamma_{p,q}$ is given by (\ref{Eq:BoundGamma}); 
    \item[(c)] three real roots $\{ 0, +u_{p,q}, -u_{p,q}\}$ of $f$ if $\gamma=\gamma_{{p,q}}$, where
    \begin{gather*}
        u_{p,q} \equiv \left(\frac{q-1}{q-p}\right)^{1/(p-1)}.
    \end{gather*}
    \item[(d)] one real root $\{ 0 \}$ of $f$ if $\gamma>\gamma_{{p,q}}$.
\end{itemize}

In case (d), no spatially decaying solutions of the difference equation (\ref{Eq:u_gen_pq}) exist for any $\varepsilon>0$. Indeed, if we multiply (\ref{Eq:u_gen_pq}) by $\bar{u}_n$ and sum over $n \in \mathbb{Z}$, we obtain 
$$
-\varepsilon \sum_{n \in \mathbb{Z}} |u_{n+1}-u_n|^2 - \sum_{n \in \mathbb{Z}} |u_n|^2 (1-|u_n|^{p-1} + \gamma |u_n|^{q-1}) = 0,
$$
where the left-hand side is strictly negative for nonzero solutions if $\gamma > \gamma_{p,q}$ and $\varepsilon > 0$. 

Cases (a) and (c) represent boundaries of the interval $(0,\gamma_{{p,q}})$ which are sensitive to perturbations. Therefore, in this study, we focus on case (b) and assume $\gamma \in (0,\gamma_{{p,q}})$. We obtain the following elementary result.

\begin{lemma}
	\label{lem-1}
	For any $p,q$ with $q>p$, and $0<\gamma<\gamma_{{p,q}}$, we have  $f'(a) < 0$ and $f'(A) > 0$.
\end{lemma}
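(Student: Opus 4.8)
The plan is to reduce everything to the single-variable function $g(u) := 1 - u^{p-1} + \gamma u^{q-1}$ on the half-line $u > 0$, since for $u > 0$ we have $f(u) = u\,g(u)$ and the positive roots $a,A$ of $f$ are precisely the positive roots of $g$. Differentiating, $f'(u) = g(u) + u\,g'(u)$, so at any positive root $u_\ast$ of $g$ one gets $f'(u_\ast) = u_\ast\,g'(u_\ast)$; because $a,A > 0$, the assertions $f'(a) < 0$ and $f'(A) > 0$ are equivalent to $g'(a) < 0$ and $g'(A) > 0$.

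Next I would analyze the sign of $g'$. A direct computation gives $g'(u) = u^{p-2}\bigl(-(p-1) + (q-1)\gamma\,u^{q-p}\bigr)$, which (using $q > p$ and $u > 0$) vanishes at the unique point $u_c := \bigl(\tfrac{p-1}{(q-1)\gamma}\bigr)^{1/(q-p)}$, is negative on $(0,u_c)$ and positive on $(u_c,\infty)$. Hence $g$ is strictly decreasing on $(0,u_c]$ and strictly increasing on $[u_c,\infty)$, with $g(0^+) = 1 > 0$ and $g(u) \to +\infty$ as $u \to \infty$.

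Then I would invoke case (b) of the preceding discussion: for $\gamma \in (0,\gamma_{p,q})$ the function $f$, and therefore $g$, has exactly the two positive roots $a < A$ (equivalently, the minimal value $g(u_c)$ is negative, which is exactly the inequality $\gamma < \gamma_{p,q}$). Since $g$ is positive at $0^+$ and strictly monotone on each side of $u_c$, its two sign changes must occur one on each side of $u_c$, i.e. $0 < a < u_c < A$. Combined with the sign of $g'$ from the previous step, this yields $g'(a) < 0$ and $g'(A) > 0$, and hence $f'(a) = a\,g'(a) < 0$ and $f'(A) = A\,g'(A) > 0$, as claimed.

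There is essentially no serious obstacle — the statement is, as noted, elementary; the only point requiring a little care is the claim that the two positive roots of $g$ straddle its unique critical point, which is exactly what the unimodality of $g$ delivers. An alternative that sidesteps the shape of $g$ is to substitute the root relation $1 = a^{p-1} - \gamma a^{q-1}$ directly into $f'(a) = 1 - p\,a^{p-1} + q\gamma\,a^{q-1}$ to get $f'(a) = a^{p-1}\bigl((q-1)\gamma\,a^{q-p} - (p-1)\bigr)$, and similarly for $A$; but one then still has to establish $a^{q-p} < \tfrac{p-1}{(q-1)\gamma} < A^{q-p}$, which again amounts to comparing $a$ and $A$ with $u_c$. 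I would take the monotonicity argument as the main line of proof.
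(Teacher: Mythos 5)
Your proof is correct. It differs from the paper's argument mainly in how strictness of the inequalities is obtained. The paper works directly with $f$: since $f>0$ on $(0,a)\cup(A,\infty)$ and $f<0$ on $(a,A)$, the sign changes give the weak inequalities $f'(a)\le 0$ and $f'(A)\ge 0$, and the strict versions are then deduced by observing that $f(u_\ast)=0$ and $f'(u_\ast)=0$ can hold simultaneously at a positive $u_\ast$ only when $\gamma=\gamma_{p,q}$ (the tangency case $a=A$), which is excluded. You instead factor $f(u)=u\,g(u)$ with $g(u)=1-u^{p-1}+\gamma u^{q-1}$, note $f'(u_\ast)=u_\ast g'(u_\ast)$ at any positive root, and use the unimodality of $g$ (unique critical point $u_c$, strictly decreasing then increasing, $g(0^+)=1$, $g(+\infty)=+\infty$) together with case (b) to force $0<a<u_c<A$, which yields the strict signs directly. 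Both arguments take the existence of the two distinct positive roots for $\gamma\in(0,\gamma_{p,q})$ as the input; your route makes the strictness completely explicit by locating the roots relative to $u_c$, whereas the paper's route delegates it to the (asserted, not computed) characterization of the degenerate case $\gamma=\gamma_{p,q}$. Your parenthetical remark that $g(u_c)<0$ is equivalent to $\gamma<\gamma_{p,q}$ is also correct and in fact reproves the root-count statement of case (b), so your version is slightly more self-contained, at the cost of a short computation of $g'$ and $u_c$ that the paper avoids.
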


\begin{proof}
	
	If $0<\gamma<\gamma_{{p,q}}$, then there are two positive roots of $f$ at $a$ and $A$. It is obvious that $f(u) > 0$ for $u \in (0,a) \cup (A,+\infty)$ and $f(u) < 0$ for $u \in (a,A)$. This implies that
	\begin{align*}
	&f'(a)=1-pa^{p-1}+\gamma qa^{q-1}\leq 0, \\
	&f'(A)=1-pA^{p-1}+\gamma qA^{q-1}\geq 0.
	\end{align*}
	To show that $f'(a) \neq 0$, we check that the system of equations
	\begin{gather*}
	f(a)=0,\quad f'(a)=0
	\end{gather*}
	has a solution on $(0,\infty)$ if and only if $\gamma=\gamma_{{p,q}}$ (in which case $a=A$). The same argument applies to show that $f'(A) \neq 0$. Hence, we have $f'(a) < 0$ and $f'(A) > 0$.  
\end{proof}

Let ${\bf u}^{(0)} \in \mathbb{R}^{\mathbb{Z}}$ be a bi-infinite solution of the difference equation (\ref{Eq:u_gen_pq}) when  $\varepsilon=0$.  For $\gamma \in (0,\gamma_{{p,q}})$, each element of ${\bf u}^{(0)}$ may take any of the values $\{0,+a,-a,+A,-A\}$ independently of others. Since we consider ILMs, we assume that  ${\bf u}^{(0)}$ contains a {\it finite number} of nonzero components $\pm a$ and $\pm A$. {\em Moreover, we assume that the ILM at the ACL contains only $N$ nonzero components in the $N$ consequent elements of ${\bf u}^{(0)}$}. Without loss of generality, we allocate the nonzero components between $n = 1$ and $n = N$ and introduce the vector $\tilde{\bf u} = (u_1^{(0)},u_2^{(0)},\ldots,u_N^{(0)}) \in \mathbb{R}^N$ such that 
${\bf u}^{(0)} = (\ldots,0,0,\tilde{\bf u},0,0,\ldots)$. For the vector $\tilde{\bf u}$, we also introduce the code $\mathcal{A}$ with symbols $a_{\pm} = \pm a$ and $A_{\pm} = \pm A$. 

\begin{remark} 
	If ${\bf u}$ is a solution of (\ref{Eq:u_gen_pq}) for any $\varepsilon \in \mathbb{R}$, then ${\bf R} \mathbf{u}$, $-{\mathbf{u}}$, and $-{\bf R} \mathbf{u}$ are also solutions of (\ref{Eq:u_gen_pq}) for any $\varepsilon \in \mathbb{R}$, 
	where ${\bf R}$ is the reversibility operator given by $({\bf R u})_n = u_{-n}$. This is due to reversibility of $\Delta_2$ about any node $n \in \mathbb{Z}$ and the sign symmetry of the nonlinear terms. Existence and stability of each of the four equivalent ILMs are identical to each other. We only pick one of the four equivalent ILMs and call it the irreducible ILM.
\end{remark}

To count the total number of irreducible ILMs of the same length $N$, 
we have the following elementary result.

\begin{lemma}
	\label{lem-2}
For each odd $N = 2k + 1$ with $k \in \mathbb{N}$, there exist 
$$ 
16^k+ 4^k
$$ 
irreducible codes of length $N$. For each even $N = 2k$ with $k \in \mathbb{N}$, there exist 
$$
\frac14 (16^k+2\cdot 4^k)
$$ 
irreducible codes of length $N$. 
\end{lemma}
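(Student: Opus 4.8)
\textbf{Proof proposal for Lemma \ref{lem-2}.}
The plan is to count equivalence classes of length-$N$ words over the four-letter alphabet $\{a_+,a_-,A_+,A_-\}$ by means of Burnside's (Cauchy--Frobenius) lemma. By the standing assumption that an ILM at the ACL has exactly $N$ nonzero entries occupying $N$ consecutive sites, the set of codes of length $N$ is in bijection with the set of all such words, of which there are $4^N$. The symmetries ${\bf R}{\bf u}$, $-{\bf u}$, $-{\bf R}{\bf u}$ of the difference equation (\ref{Eq:u_gen_pq}), after the translation invariance of (\ref{Eq:u_gen_pq}) is used to renormalize the position of the nonzero block to sites $1,\dots,N$, descend to an action on words of the group $G=\{e,\sigma,\tau,\sigma\tau\}\cong\mathbb{Z}_2\times\mathbb{Z}_2$, where $\sigma$ reverses the order of the letters, $(c_1,\dots,c_N)\mapsto(c_N,\dots,c_1)$, and $\tau$ flips every sign, interchanging $a_+\leftrightarrow a_-$ and $A_+\leftrightarrow A_-$ in each position (these two commute, and $\sigma\tau$ is their composition). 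The number of irreducible codes of length $N$ is exactly the number of $G$-orbits, which Burnside's formula gives as $\tfrac14\bigl(|\mathrm{Fix}(e)|+|\mathrm{Fix}(\sigma)|+|\mathrm{Fix}(\tau)|+|\mathrm{Fix}(\sigma\tau)|\bigr)$; note that this is valid irrespective of whether a given orbit actually has four elements, so it is insensitive to the slight imprecision in calling an orbit ``the four equivalent ILMs''.

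Next I would compute the four fixed-point counts. Plainly $|\mathrm{Fix}(e)|=4^N$. Since $\tau$ fixes no letter ($\tau(a_+)=a_-\neq a_+$, and similarly for the others), it fixes no word at all, so $|\mathrm{Fix}(\tau)|=0$. A word is fixed by $\sigma$ iff it is a palindrome, and such a word is freely determined by its first $\lceil N/2\rceil$ letters, whence $|\mathrm{Fix}(\sigma)|=4^{\lceil N/2\rceil}$, i.e.\ $4^{k+1}$ for $N=2k+1$ and $4^{k}$ for $N=2k$. A word is fixed by $\sigma\tau$ iff $c_{N+1-i}=\tau(c_i)$ for all $i$; for even $N=2k$ the letters $c_1,\dots,c_k$ are free and determine $c_{k+1},\dots,c_N$ consistently, giving $|\mathrm{Fix}(\sigma\tau)|=4^{k}$, while for odd $N=2k+1$ the middle letter would have to satisfy $c_{k+1}=\tau(c_{k+1})$, which is impossible, so $|\mathrm{Fix}(\sigma\tau)|=0$.

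Finally, substituting into Burnside's formula and using $4^{2k+1}=4\cdot 16^{k}$ and $4^{2k}=16^{k}$, I get for odd $N=2k+1$ that the count is $\tfrac14\bigl(4\cdot 16^{k}+4^{k+1}\bigr)=16^{k}+4^{k}$, and for even $N=2k$ that it is $\tfrac14\bigl(16^{k}+4^{k}+4^{k}\bigr)=\tfrac14\bigl(16^{k}+2\cdot 4^{k}\bigr)$, as claimed. There is no serious obstacle in this argument; the only points requiring a little care are that the translation invariance is used \emph{only} to fix the location of the nonzero block, so that the residual symmetry group acting on words is precisely the four-element group $G$ and nothing larger, and that $\tau$ (and, for odd length, $\sigma\tau$) act without fixed points precisely because the sign flip moves every nonzero state. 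A direct enumeration (palindromic versus non-palindromic words, with or without the $\tau$-twisted palindrome symmetry) yields the same formulas and serves as a cross-check, e.g.\ $N=1$ gives $1+1=2$ codes, namely $\mathcal{A}_A$ and $\mathcal{A}_a$, and $N=2$ gives $\tfrac14(16+8)=6$ codes.
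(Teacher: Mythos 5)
Your proof is correct, and it takes a genuinely different route from the paper. The paper argues by a direct hand-count: for odd $N$ it restricts to the set $\mathcal{G}_+$ of codes with positive central symbol (so that exactly one representative of each sign-reflection class is retained), then splits $\mathcal{G}_+$ into ${\bf R}$-symmetric codes and pairs $(\tilde{\bf u},{\bf R}\tilde{\bf u})$, and similarly for even $N$ it uses the set $\mathcal{F}_+$ of codes with positive first symbol together with a separate bookkeeping of ${\bf R}$-symmetric and $-{\bf R}$-symmetric codes. You instead apply Burnside's (Cauchy--Frobenius) lemma to the action of $G=\{e,\sigma,\tau,\sigma\tau\}\cong\mathbb{Z}_2\times\mathbb{Z}_2$ on the $4^N$ words, computing the four fixed-point counts $4^N$, $4^{\lceil N/2\rceil}$, $0$, and $4^{k}$ or $0$ (according to the parity of $N$), which immediately yields $16^k+4^k$ and $\tfrac14(16^k+2\cdot 4^k)$. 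Your fixed-point computations are right: $\tau$ moves every letter, so it (and, for odd $N$, $\sigma\tau$ via the middle letter) has no fixed words, which is exactly the paper's observation that no code can equal its own sign reflection. The Burnside route buys uniformity (one formula covering both parities, insensitive to orbits of size $2$ or $4$) and would generalize painlessly to larger alphabets or additional symmetries; the paper's explicit quotient argument is more elementary and makes the representative choice (positive center or first symbol) concrete, which is convenient when one actually wants to list the irreducible codes as in Examples \ref{ex-2} and \ref{ex-1}. You were also right to flag that translation is used only to normalize the block to sites $1,\dots,N$, so the residual group acting on words is precisely $G$; your cross-checks for $N=1$ and $N=2$ agree with the paper.
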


\begin{proof}
Given that the zero symbol is not used in the codes, each symbol must be one of four types: $A_+$, $A_-$, $a_+$, or $a_-$, and there exists no vector $\tilde{\bf u}$ satisfying $\tilde{\bf u} = - \tilde{\bf u}$. We say that the vector $\tilde{\bf u}$ is {\it ${\bf R}$-symmetric} if ${\bf R \tilde{u}}= \tilde{\bf u}$. Also we say that $\tilde{\bf u}$ is {\it $-{\bf R}$-symmetric} if $-{\bf R \tilde{u}} = \tilde{\bf u}$. No vectors can be both ${\bf R}$-symmetric and $-{\bf R}$-symmetric.

Let $N = 2k+1$, $k \in \mathbb{N}$.  Consider the set ${\mathcal G}_+$ that consists of  $2\cdot 4^{2k}$ codes with positive center symbol,  (i.e. $A_+$ or $a_+$). Note, that ${\mathcal G}_+$ does not contain any $-{\bf R}$-symmetric element. For any triple  $(\tilde{\bf u},-\tilde{\bf u}, -{\bf R \tilde{u}})$  exactly one element belongs to ${\mathcal G}_+$. However, the set ${\mathcal G}_+$ contains $2\cdot 4^k$ ${\bf R}$-symmetric elements.  Nonsymmetric $2\cdot 4^{2k}-2\cdot 4^{k}$ elements of ${\mathcal G}_+$ can be split into pairs $(\tilde{\bf u},{\bf R \tilde{u}})$. Taking only one representative from each pair  $(\tilde{\bf u},{\bf R \tilde{u}})$ and returning $2\cdot 4^k$ ${\bf R}$-symmetric codes one concludes that the number of irreducible codes is
\begin{gather*}
\frac12\left(2\cdot 4^{2k}-2\cdot4^{k}\right)+2\cdot 4^k=16^k+4^k.
\end{gather*}

Let $N=2k$,  $k \in \mathbb{N}$. Consider the set ${\mathcal F}_+$ that consists of codes with positive first symbol. Evidently, ${\mathcal F}_+$ consists of $2\cdot 4^{2k-1}$ elements and for any pair  $(\tilde{\bf u},-\tilde{\bf u})$ exactly one element belongs to ${\mathcal F}_+$. The set ${\mathcal F}_+$ includes $2\cdot 4^{k-1}$ ${\bf R}$-symmetric codes and $2\cdot 4^{k-1}$ $-{\bf R}$-symmetric codes. Each vector $\tilde{\bf u}\in {\mathcal F}_+$ that is neither ${\bf R}$-symmetric nor $-{\bf R}$-symmetric has exactly one counterpart  in ${\mathcal F}_+$ that is either ${\bf R}$-symmetric or $-{\bf R}$-symmetric. This means that the number of irreducible codes is
\begin{gather*}
\frac12 (2\cdot 4^{2k-1}-2\cdot 4^{k-1}-2\cdot 4^{k-1})+2\cdot 4^{k-1}+2\cdot 4^{k-1}=\frac14 (16^k+2\cdot 4^k).
\end{gather*}
This completes the proof.
\end{proof}

\begin{example}
	\label{ex-2}
	For $N = 1$, the $2$ irreducible codes are given by 
	$$
	(a_+)  \;\; \mbox{\rm and} \;\; (A_+).
	$$
	For $N = 2$, the $6$ irreducible codes are given by 
	$$
	(a_+,a_+), \;\; (a_+,a_-), \;\; (a_+,A_+), \;\; (a_+,A_-), 
	\;\; (A_+,A_+), \;\; \mbox{\rm and} \;\; (A_+,A_-).
	$$
	The numbers of irreducible codes for $N = 1$ and $N = 2$ agree with Lemma \ref{lem-2}.
\end{example}

\begin{example}
	\label{ex-1}
	If $(a_+,a_-,A_+)$ is the code for $N = 3$, then 
	$$
(A_+,a_-,a_+), \;\; (a_-,a_+,A_-), \;\; \mbox{\rm and} \;\; 
(A_-,a_+,a_-)
$$ 
are equivalent codes. The code $(a_+,a_-,A_+)$ is one of the $18$ irreducible codes for $N = 3$. 
\end{example}

The existence result for the ILMs in the ACL is obtained by the implicit function theorem in the space of real-valued bi-infinite solutions $\mathbf{u} \in \mathbb{R}^{\mathbb{Z}}$ of the difference equation (\ref{Eq:u_gen_pq}) \cite{MA1994,A1997}. Moreover, it follows that the vector $\mathbf{u}$ is close to the limiting vector ${\bf u}^{(0)}$ for small $\varepsilon > 0$ and that the sign alternation of the vector $\tilde{\bf u} \in \mathbb{R}^N$ gives the sign alternation of the vector ${\bf u} \in \mathbb{R}^{\mathbb{Z}}$ for small $\varepsilon > 0$ \cite{ABK04,PKF05}. The existence result is given by the following proposition, which also yields the first assertion of Theorem \ref{th-main}.

\begin{prop}
	\label{prop-existence}
	Let $p,q \in \mathbb{N}$ be fixed such that $2 \leq p < q$. 
	For every $\gamma \in (0,\gamma_{{p,q}})$, there exists $\varepsilon_0 > 0$ and $C_0 > 0$ such that for every $\varepsilon \in (0,\varepsilon_0)$, there exists 
	a solution ${\bf u} \in \ell^2(\mathbb{Z})$ of the difference 
	equation (\ref{Eq:u_gen_pq}) such that 
\begin{equation}
\label{bound-u}
	\| {\bf u} - {\bf u}^{(0)} \|_{\ell^2(\mathbb{Z})} \leq C_0 \varepsilon,
\end{equation}
	where ${\bf u}^{(0)} = (\ldots,0,0,\tilde{\bf u},0,0,\ldots)$ with  $\tilde{\bf u} \in \mathbb{R}^N$ of any length $N \geq 1$. Moreover, 
	the number of sign alternations in ${\bf u}$ is equal to the number of sign alternations in $\tilde{\bf u}$.
\end{prop}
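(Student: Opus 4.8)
The plan is to set up the difference equation (\ref{Eq:u_gen_pq}) as a fixed-point/zero-finding problem $F(\mathbf{u},\varepsilon)=0$ on the Banach space $\ell^2(\mathbb{Z})$ and to apply the implicit function theorem at the point $(\mathbf{u}^{(0)},0)$. Define the map
\[
F:\ell^2(\mathbb{Z})\times\mathbb{R}\to\ell^2(\mathbb{Z}),\qquad
F(\mathbf{u},\varepsilon)_n = \varepsilon(u_{n+1}-2u_n+u_{n-1}) - f(u_n),
\]
where $f$ is given by (\ref{function}). The nonlinear term acts componentwise, and since $p\geq 2$ the map $u\mapsto f(u)$ is smooth with $f(0)=0$; one checks that $\mathbf{u}\mapsto (f(u_n))_{n\in\mathbb{Z}}$ is a smooth (in fact $C^\infty$) map from $\ell^2(\mathbb{Z})$ to itself, using that $\ell^2\hookrightarrow\ell^\infty$ so the components stay bounded and the superlinear terms are controlled. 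The discrete Laplacian $\Delta_2$ is a bounded linear operator on $\ell^2(\mathbb{Z})$, so $F$ is smooth, and $F(\mathbf{u}^{(0)},0)=0$ because each component $u_n^{(0)}$ is a root of $f$.

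The key step is to verify that the Fréchet derivative $D_{\mathbf{u}}F(\mathbf{u}^{(0)},0)$ is invertible on $\ell^2(\mathbb{Z})$. At $\varepsilon=0$ this derivative is the diagonal operator with entries $-f'(u_n^{(0)})$: the $\varepsilon\Delta_2$ term drops out. The diagonal entry is $-f'(0)=-1$ at the infinitely many sites outside the block $\{1,\dots,N\}$, and $-f'(a)>0$ or $-f'(A)<0$ at the $N$ sites inside the block, by Lemma \ref{lem-1} (which guarantees $f'(a)<0$ and $f'(A)>0$, so in particular both are nonzero). Hence every diagonal entry is bounded away from zero uniformly in $n$, so the diagonal operator is a bounded invertible operator on $\ell^2(\mathbb{Z})$ with bounded inverse. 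The implicit function theorem then gives $\varepsilon_0>0$ and a $C^1$ branch $\varepsilon\mapsto\mathbf{u}(\varepsilon)\in\ell^2(\mathbb{Z})$ with $\mathbf{u}(0)=\mathbf{u}^{(0)}$, solving $F(\mathbf{u}(\varepsilon),\varepsilon)=0$. The Lipschitz bound (\ref{bound-u}) with some $C_0>0$ follows from $\mathbf{u}\in C^1$ in $\varepsilon$ near $\varepsilon=0$, possibly after shrinking $\varepsilon_0$.

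It remains to transfer the sign structure: since $\mathbf{u}(\varepsilon)\to\mathbf{u}^{(0)}$ in $\ell^2$, hence in $\ell^\infty$, as $\varepsilon\to 0$, each component $u_n(\varepsilon)$ is close to $u_n^{(0)}$, so for $\varepsilon$ small enough the components corresponding to nonzero entries of $\tilde{\mathbf{u}}$ keep their sign (they are bounded away from $0$, being near $\pm a$ or $\pm A$), while the components outside the block are $O(\varepsilon)$ and need not vanish but are small. The number of sign alternations in $\mathbf{u}(\varepsilon)$, counted along the indices where $u_n(\varepsilon)\ne 0$, therefore equals the number of sign alternations in $\tilde{\mathbf{u}}$ for $\varepsilon$ small; this is the standard argument in \cite{ABK04,PKF05}, using that the tails decay and do not introduce spurious alternations once $\varepsilon_0$ is chosen small enough.

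The main obstacle is the uniform-in-$n$ invertibility of $D_{\mathbf{u}}F(\mathbf{u}^{(0)},0)$ on the infinite-dimensional space $\ell^2(\mathbb{Z})$: one must be careful that the diagonal entries are bounded away from zero \emph{uniformly}, not just nonzero sitewise. Outside the block this is automatic since the entry is identically $-1$; inside the block there are only finitely many sites. The smoothness of the superlinear Nemytskii operator on $\ell^2$ is routine but should be stated, as is the choice of $\ell^2(\mathbb{Z})$ (rather than $\ell^\infty$) as the right space in which the implicit function theorem applies with localized solutions. Everything else is a direct application of the implicit function theorem as in \cite{MA1994,A1997}.
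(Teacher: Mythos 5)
Your existence argument is essentially the paper's: the paper also sets up the problem as a perturbation of the decoupled limit and applies the implicit function theorem in $\ell^2(\mathbb{Z})$, using Lemma \ref{lem-1} ($f'(0)=1$, $f'(\pm a)<0$, $f'(\pm A)>0$) to invert the linearization, which at $\varepsilon=0$ is a diagonal operator with entries uniformly bounded away from zero. That part of your proposal is fine, including the Lipschitz bound $\|{\bf u}-{\bf u}^{(0)}\|_{\ell^2}\leq C_0\varepsilon$ from $C^1$ dependence on $\varepsilon$.

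The gap is in the second assertion, the equality of the number of sign alternations. You argue that the block components stay close to $\pm a$, $\pm A$ and hence keep their signs, which is correct, but for the tail components ($n\leq 0$ and $n\geq N+1$) you only observe that they are $\mathcal{O}(\varepsilon)$ and then assert that they ``do not introduce spurious alternations.'' Smallness in magnitude gives no information about sign: a priori the tail could oscillate, and since the proposition counts flips in the full bi-infinite vector ${\bf u}$ (and this count is used crucially later, via Sturm theory in Lemma \ref{theorem-2}, to determine $n({\bf L}^-_\varepsilon)$), the sign of every tail component must be controlled. The paper closes exactly this gap: it rewrites the equations on $\mathbb{Z}_-$ as
\begin{equation*}
\left(1-h(u_n)-\varepsilon\Delta_2^-\right)u_n=\varepsilon\, u_1\,\delta_{n,0},\qquad n\in\mathbb{Z}_-,
\end{equation*}
with $h(u)=|u|^{p-1}-\gamma|u|^{q-1}$, inverts the operator for small $\varepsilon$, and obtains $u_n=\varepsilon^{|n|+1}u_1^{(0)}\left[1+\mathcal{O}(\varepsilon)\right]$ for $n\in\mathbb{Z}_-$ (and symmetrically $u_n=\varepsilon^{n-N}u_N^{(0)}\left[1+\mathcal{O}(\varepsilon)\right]$ for $n\geq N+1$), so each tail component has the same sign as the adjacent edge value $u_1^{(0)}$ or $u_N^{(0)}$ because $\varepsilon>0$. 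You need this (or an equivalent positivity/Neumann-series argument for the tail) to conclude; citing \cite{ABK04,PKF05} without reproducing the tail sign analysis leaves the second half of the proposition unproven.
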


\begin{proof}
	By writing ${\bf u} = {\bf u}^{(0)} + {\bf v}$ with real-valued ${\bf v} \in \ell^2(\mathbb{Z})$, we rewrite the difference 
	equation (\ref{Eq:u_gen_pq}) in the equivalent form:
\begin{equation}
-\varepsilon \left(v_{n+1}-2v_n+v_{n-1}\right) + f'(u_n^{(0)}) v_n = \varepsilon \left( u_{n+1}^{(0)} - 2 u_n^{(0)} + u_{n-1}^{(0)} \right) - g(u_n^{(0)},v_n),
\label{Eq:v}
\end{equation}	
where $f(u) = u - |u|^{p-1} u + \gamma |u|^{q-1} u$ and $g(u,v) =  f(u + v) - f(u) - f'(u) v$. Since $p,q \in \mathbb{N}$ and $2 \leq p < q$, the mapping $v \to g(u,v)$ is $C^1$ for a given $u \in \mathbb{C}$ and 
there is $C > 0$ such that 
$$
|g(u_n^{(0)},v_n)| \leq C|v_n|^2, \quad \forall n \in \mathbb{Z},
$$
as long as $\|{\bf v} \|_{\ell^2} \leq 1$. We can rewrite (\ref{Eq:v}) in the abstract form:
\begin{equation}
\label{eq:v-abstract}
\mathcal{L} {\bf v} = {\bf H}({\bf v}), 
\end{equation}
with 
\begin{align*}
\mathcal{L} v_n &= f'(u_n^{(0)}) v_n - \varepsilon \Delta_2 v_n, \\
H_n({\bf v}) &= \varepsilon \Delta_2 u_n^{(0)} - g(u_n^{(0)},v_n).
\end{align*}
The mapping $\ell^2(\mathbb{Z}) \ni {\bf v} \to {\bf H}({\bf v}) \in \ell^2(\mathbb{Z})$ is $C^1$ since $\ell^2(\mathbb{Z})$ is a Banach algebra with respect to multiplication and
$$
\| \Delta_2 {\bf u}^{(0)} \|_{\ell^2} \leq 4 \|{\bf  u}^{(0)} \|_{\ell^2}. 
$$
On the other hand, we have $f'(0) = 1$, $f'(\pm a) < 0$, $f'(\pm A) > 0$ by Lemma \ref{lem-1}. Hence, $\mathcal{L}$ is an invertible operator in $\ell^2(\mathbb{Z})$ for sufficiently small $\varepsilon > 0$ with a bounded inverse. By the implicit function theorem in $\ell^2(\mathbb{Z})$, there exist $\varepsilon_0 > 0$ and $C_0 > 0$ such that there exists a unique solution 
to (\ref{eq:v-abstract})  satisfying $\| {\bf v} \|_{\ell^2} \leq C_0 \varepsilon$ for every $\varepsilon \in (0,\varepsilon_0)$. This yields the first assertion and results in the bound (\ref{bound-u}). 

It remains to prove preservation of the sign alternation in ${\bf u}$ from ${\bf u}^{(0)}$. To do so, we introduce $\Delta_2^-$ on $\ell^2(\mathbb{Z}_-)$ 
with $\mathbb{Z}_- = \{ \dots, -2,-1,0\}$ subject to the Dirichlet condition for $n = 1$. The difference equations (\ref{Eq:u_gen_pq}) for $n \in \mathbb{Z}_-$ can be rewritten in the form 
\begin{equation}
(1 - h(u_n) - \varepsilon \Delta_2^-) u_n = \varepsilon u_1 \delta_{n,0}, \quad n \in \mathbb{Z}_-,
\label{Eq:w}
\end{equation}
where $\delta_{n,0} = 1$ for $n = 0$ and $\delta_{n,0} = 0$ for $n \leq -1$, whereas $h(u) = |u|^{p-1} - \gamma |u|^{q-1}$. Since $p,q \in \mathbb{N}$ and $2 \leq p < q$, there is $C > 0$ such that 
$$
|h(u_n)| \leq C |u_n|, \quad \forall n \in \mathbb{Z}_-,
$$
as long as $\|{\bf u} \|_{\ell^2(\mathbb{Z}_-)} \leq 1$. Since $\|{\bf u} \|_{\ell^2(\mathbb{Z}_-)} = \mathcal{O}(\varepsilon)$ and $u_1 = \mathcal{O}(1)$ as $\varepsilon \to 0$ by the bound (\ref{bound-u}), it follows from (\ref{Eq:w}) that 
$$
u_n = \varepsilon^{|n|+1} u_1^{(0)} \left[ 1 + \mathcal{O}(\varepsilon) \right], \quad n \in \mathbb{Z}_-,
$$
so that the sign of $u_n$ for every $n \in \mathbb{Z}_-$ is the same as the sign of $u_1^{(0)}$ since $\varepsilon > 0$. Similarly, we get 
$$
u_n = \varepsilon^{n-N} u_N^{(0)} \left[ 1 + \mathcal{O}(\varepsilon) \right], \quad n \in \mathbb{Z}_+ = \{ N+1,N+2,\dots\}.
$$
Hence, the sign of $u_n$ for every $n \in \mathbb{Z}_+$ is the same as the sign of $u_N^{(0)}$ since $\varepsilon > 0$. This yields the equality between the number of sign alternations in ${\bf u}$ and that in $\tilde{\bf u}$.
\end{proof}

\section{Stability of ILMs in the anticontinuum limit}
\label{Sec:StabilityACL}

Let $\mathbf{u} \in \ell^2(\mathbb{Z})$ be the spatial profile of the ILM given by Proposition \ref{prop-existence}. Assuming that $\mathbf{u}$ is real-valued, 
we substitute ${\bf u} \to {\bf u} = {\bf v}(t) + i {\bf w}(t)$ with real-valued ${\bf v}(t), {\bf w}(t) \in \ell^2(\mathbb{Z})$ to the solution of the time-dependent DNLS equation (\ref{Eq:u_gen_pq_t}). Linearizing at the linear powers of ${\bf v}$, ${\bf w}$, we obtain the linearized DNLS equation in the form
\begin{align}
\frac{dv_n}{dt} + \varepsilon \Delta_2 w_n - w_n + |u_n|^{p-1} w_n - \gamma |u_n|^{q-1} w_n = 0, \label{linear-stab-1}\\
-\frac{dw_n}{dt} + \varepsilon \Delta_2 v_n - v_n + p |u_n|^{p-1} v_n - \gamma q  |u_n|^{q-1} v_n = 0. \label{linear-stab-2}
\end{align}
Separating the variables as ${\bf v}(t) = {\bf v} e^{i \omega t}$ and ${\bf w}(t) = i {\bf w} e^{i \omega t}$ in (\ref{linear-stab-1})--(\ref{linear-stab-2}), we obtain the eigenvalue problem for the time-independent eigenvector $(\mathbf{v},\mathbf{w}) \in \ell^2(\mathbb{Z}) \times \ell^2(\mathbb{Z})$ in the matrix form:
\begin{gather}
\omega\left(\begin{array}{c}
{\bf v}\\{\bf w}
\end{array}
\right)=
\left(\begin{array}{cc}
{\bf 0}& {\bf L}^-_\varepsilon\\
{\bf L}^+_\varepsilon& {\bf 0}
\end{array}
\right)
\left(\begin{array}{c}
{\bf v}\\{\bf w}
\end{array}
\right).
\label{spectral-stab}
\end{gather}
where the matrix operators ${\bf L}^-_\varepsilon$ and ${\bf L}^+_\varepsilon$ are given by 
\begin{gather*}
{\bf L}^-_\varepsilon=\left(
\begin{array}{ccccc}
\ddots&   \vdots    &   \vdots    &    \vdots    &  \vdots    \\[2mm]
\cdots&       D^-_{n-1} &     -\varepsilon        &   0             &    \cdots\\[2mm]
\cdots&                     -\varepsilon&    D^-_{n} &               -\varepsilon&    \ldots\\[2mm]
\cdots&                       0&              -\varepsilon&D^-_{n+1} &     \ldots\\[2mm]
\cdots&    \vdots   &       \vdots&         \vdots&        \ddots
\end{array}
\right),~
{\bf L}^+_\varepsilon=\left(
\begin{array}{ccccc}
\ddots&   \vdots    &   \vdots    &    \vdots    &  \vdots    \\[2mm]
\cdots&       D^+_{n-1} &     -\varepsilon        &   0             &    \cdots\\[2mm]
\cdots&                     -\varepsilon&    D^+_{n} &               -\varepsilon&    \ldots\\[2mm]
\cdots&                       0&              -\varepsilon&D^+_{n+1} &     \ldots\\[2mm]
\cdots&    \vdots   &       \vdots&         \vdots&        \ddots
\end{array}
\right).
\end{gather*}
with
\begin{gather*}
D_n^-=2\varepsilon+1-|u_n|^{p-1}+\gamma |u_n|^{q-1},\quad D_n^+=2\varepsilon+1-p|u_n|^{p-1}+\gamma q |u_n|^{q-1}.
\end{gather*}
The ILM with the spatial profile $\mathbf{u}$ is called spectrally stable 
if ${\rm Im}~\omega=0$ for all eigenvalues $\omega$. The eigenvalue problem (\ref{spectral-stab}) can be rewritten in the equivalent scalar forms
\begin{gather}
{\bf L}^-_\varepsilon{\bf L}^+_\varepsilon{\bf v} = \lambda  {\bf v},   
\quad {\bf L}^+_\varepsilon{\bf L}^-_\varepsilon{\bf w} = \lambda  {\bf w},
\label{Eq:Eigen02}
\end{gather}
where $\lambda \equiv \omega^2$. If the linear operator ${\bf L}^+_\varepsilon : \ell^2(\mathbb{Z}) \to \ell^2(\mathbb{Z})$ is invertible, e.g. for small $\varepsilon > 0$ as in the proof of Proposition \ref{prop-existence}, 
then the second eigenvalue problem in (\ref{Eq:Eigen02}) can be rewritten 
as the generalized eigenvalue problem for the eigenvector ${\bf w} \in \ell^2(\mathbb{Z})$:
\begin{gather}
{\bf L}^-_\varepsilon{\bf w} = \lambda ({\bf L}^+_\varepsilon)^{-1} {\bf w}.
\label{Gener}
\end{gather}
Since $\lambda = \omega^2$, the neutrally stable eigenvalues $\omega \in \mathbb{R}$ correspond to positive eigenvalues $\lambda > 0$ of the generalized eigenvalue problem (\ref{Gener}), whereas the unstable eigenvalues $\omega \notin \mathbb{R}$ correspond to either negative eigenvalues $\lambda < 0$ or complex eigenvalues $\lambda \notin \mathbb{R}$.

\begin{remark}
	The eigenvalue problem (\ref{spectral-stab}) contains the same spectrum 
	if $\mathbf{u}$ is replaced ${\bf R} \mathbf{u}$, $-{\mathbf{u}}$, and $-{\bf R} \mathbf{u}$, where ${\bf R}$ is the reversibility operator given by $({\bf R u})_n = u_{-n}$. This further suggests that the spectral stability of the four equivalent ILMs is identical, so that we can consider only one of the four equivalent ILMs.
\end{remark}

\subsection{Truncated eigenvalue problem}
\label{Sec:StabilityACLGen}

By Proposition \ref{prop-existence}, the spatial profile ${\bf u} \in \ell^2(\mathbb{Z})$ is close to the limiting profile ${\bf u}^{(0)} = (\ldots,0,0,\tilde{\bf u},0,0,\ldots)$ with $\tilde{\bf u} \in \mathbb{R}^N$ 
of any length $N \geq 1$. As is explained in Section \ref{Sect:StatStates}, we only consider the nonzero elements in $\tilde{\bf u}$ located between $n = 1$ and $n = N$.

If $\varepsilon=0$, then ${\bf L}^-_0$ and ${\bf L}^+_0$ are diagonal matrix operators such as ${\bf L}^-_0$ has a block of $N$ zero diagonal elements with all other diagonal elements at $1$ and  ${\bf L}^+_0$ has a block of $N$ diagonal elements $(\tilde{D}_1^+,\tilde{D}_2^+,\ldots,\tilde{D}_N^+)$, with all other diagonal elements at $1$, where 
\begin{gather}
\tilde{D}_n^+=\left[
\begin{array}{ll}
f'(a),& \quad \mbox{if}~\tilde{u}_n=\pm a,\\ [2mm]
f'(A),&\quad \mbox{if}~\tilde{u}_n=\pm A,
\end{array}
\right.
\label{Eq:D_n^+}
\end{gather}
with $f'(u) \equiv 1- p u^{p-1} +\gamma q u^{q-1}$ for $u \in (0,\infty)$.

To get a nontrivial truncated eigenvalue problem as $\varepsilon \to 0$, we 
note that $\tilde{u}_n \neq 0$ for $n = 1,\ldots,N$. By using Eq.~(\ref{Eq:u_gen_pq}), we can write 
$$
D_n^- = 2 \varepsilon + 1 - |u_n|^{p-1} + \gamma |u_n|^{q-1} = \varepsilon \frac{u_{n-1} + u_{n+1}}{u_n}, \quad n = 1, \dots, N.
$$
Truncating at the leading order with $u_n^{(0)} = \tilde{u}_n$ for $n = 1,\dots,N$, we introduce 
\begin{gather}
\label{Eq:D_n^-}
\tilde{D}_n^-=\frac{\tilde{u}_{n-1}+\tilde{u}_{n+1}}{\tilde{u}_n}, \quad n=1,\ldots N, 
\end{gather}
subject to the boundary conditions $\tilde{u}_0=\tilde{u}_{N+1}=0$.
Truncation of the generalized eigenvalue problem  (\ref{Gener}) at $n = 1,\ldots,N$  yields the following problem at the leading order:
\begin{gather}
\tilde{\bf L}^-\tilde{\bf w} = \tilde{\lambda}  (\tilde{\bf L}^+)^{-1} \tilde{\bf w}. \label{Gener-matrix}
\end{gather}
where $N\times N$ matrices $\tilde{\bf L}^-$ and $\tilde{\bf L}^+$ are given by 
\begin{gather*}
\tilde{\bf L}^-=\left(
\begin{array}{ccccc}
\tilde{D}_1^-&   -1  &   0   &   \cdots    & 0   \\[2mm]
-1&      \tilde{D}_2^-& -1  &              0  &    \vdots\\[2mm]
0&            -1&     \ddots&         -1&    0\\[2mm]
\vdots&             0&               -1&   \ddots  & -1\\[2mm]
0&   \cdots          &         0&                  -1&     \tilde{D}_N^-
\end{array}
\right),~
\tilde{\bf L}^+=\left(
\begin{array}{ccccc}
\tilde{D}_1^+&   0  &   0   &   \cdots    & 0   \\[2mm]
0&      \tilde{D}_2^+& 0  &              0  &    \vdots\\[2mm]
0&           0&     \ddots&        0&    0\\[2mm]
\vdots&             0&              0&   \ddots  &0\\[2mm]
0&   \cdots          &         0&                0&     \tilde{D}_N^+
\end{array}
\right),
\end{gather*}
with $\tilde{D}_n^+$ and $\tilde{D}_n^-$ given by (\ref{Eq:D_n^+}) and (\ref{Eq:D_n^-}), respectively. The following lemma gives the relation 
between eigenvalues of (\ref{Gener}) and (\ref{Gener-matrix}).

\begin{lemma}
	\label{theorem-1}
	Let $\{\tilde{\lambda}_1, \dots, \tilde{\lambda}_N \}$ be eigenvalues of the truncated eigenvalue problem  (\ref{Gener-matrix}). If $\tilde{\lambda}_k$ for some $k \in \{1,2,\ldots,N\}$ is simple, then the generalized eigenvalue problem (\ref{Gener}) admits a simple eigenvalue $\lambda_k$ given by 
	\begin{gather}
	\label{eigen-expansion}
	\lambda_k = \varepsilon\tilde{\lambda}_k + \mathcal{O}(\varepsilon^2).
	\end{gather}
Moreover, if $\tilde{\lambda}_k \in \mathbb{R}$, then $\lambda_k \in \mathbb{R}$.
\end{lemma}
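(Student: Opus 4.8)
The plan is to treat the generalized eigenvalue problem \eqref{Gener} as a root-finding problem for $\lambda$ near $\lambda = 0$ and to carry out a Lyapunov--Schmidt reduction onto the $N$-dimensional subspace $X_0 := \mathrm{span}\{e_1,\dots,e_N\} \subset \ell^2(\mathbb{Z})$ spanned by the standard basis vectors at the sites $n = 1,\dots,N$ occupied by $\tilde{\bf u}$. At $\varepsilon = 0$, the operator ${\bf L}^-_0$ is diagonal with the $N$ zero entries on $X_0$ and all remaining diagonal entries equal to $1$, while $({\bf L}^+_0)^{-1}$ is diagonal with entries $(\tilde D^+_n)^{-1}$ on $X_0$, finite and nonzero by Lemma~\ref{lem-1}, and $1$ elsewhere; hence $\lambda = 0$ is an eigenvalue of \eqref{Gener} of multiplicity exactly $N$ with eigenspace $X_0$, isolated from the rest of the spectrum (which clusters near $\lambda = 1$). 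For small $\varepsilon > 0$ the operator ${\bf L}^+_\varepsilon$ is invertible, as in the proof of Proposition~\ref{prop-existence}, and the block $Q {\bf L}^-_\varepsilon Q$ on $X_0^\perp$, where $Q = I - P$ and $P$ is the orthogonal projection onto $X_0$, is a bounded perturbation of the identity, hence invertible with uniformly bounded inverse; the same holds for $Q{\bf L}^+_\varepsilon Q$.

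First I would write ${\bf w} = {\bf w}_0 + {\bf w}_1$ with ${\bf w}_0 = P{\bf w} \in X_0$ and ${\bf w}_1 = Q{\bf w} \in X_0^\perp$, and project \eqref{Gener} with $P$ and $Q$. The $Q$-projected equation is uniquely solvable for ${\bf w}_1$ as a linear function of $({\bf w}_0, \lambda)$ for $\varepsilon$ and $|\lambda|$ small, because its right-hand side is $\mathcal{O}(\varepsilon + |\lambda|)\|{\bf w}_0\|$ — the coupling block $Q {\bf L}^-_\varepsilon P$ consists only of the off-diagonal entries $-\varepsilon$ — so that ${\bf w}_1 = \mathcal{O}(\varepsilon + |\lambda|)\|{\bf w}_0\|$; in particular ${\bf w}_0 = 0$ forces ${\bf w}_1 = 0$. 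Substituting ${\bf w}_1$ into the $P$-projected equation yields the reduced $N \times N$ eigenvalue problem $M(\lambda,\varepsilon)\,{\bf w}_0 = 0$ with
\[
M(\lambda,\varepsilon) = P {\bf L}^-_\varepsilon P - \lambda\, P ({\bf L}^+_\varepsilon)^{-1} P + \mathcal{R}(\lambda,\varepsilon), \qquad \mathcal{R}(\lambda,\varepsilon) = \mathcal{O}\bigl(\varepsilon(\varepsilon + |\lambda|)\bigr),
\]
where $\mathcal{R}$ collects the contribution of ${\bf w}_1$ (one factor $\varepsilon$ from the coupling block $P{\bf L}^-_\varepsilon Q$ and one factor $\mathcal{O}(\varepsilon + |\lambda|)$ from ${\bf w}_1$). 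The key computation is that $P {\bf L}^-_\varepsilon P = \varepsilon\, \tilde{\bf L}^- + \mathcal{O}(\varepsilon^2)$: the within-block off-diagonal entries of ${\bf L}^-_\varepsilon$ equal $-\varepsilon$ exactly, and the within-block diagonal entries satisfy $D^-_n = \varepsilon (u_{n-1}+u_{n+1})/u_n = \varepsilon \tilde D^-_n + \mathcal{O}(\varepsilon^2)$ by Eq.~\eqref{Eq:u_gen_pq} together with the bound $\|{\bf u}-{\bf u}^{(0)}\|_{\ell^2} = \mathcal{O}(\varepsilon)$ of Proposition~\ref{prop-existence} (which keeps $u_n$ bounded away from $0$ for $n = 1,\dots,N$); likewise $P ({\bf L}^+_\varepsilon)^{-1} P = (\tilde{\bf L}^+)^{-1} + \mathcal{O}(\varepsilon)$ by a Neumann-series (Schur-complement) estimate based on the invertibility of $Q{\bf L}^+_\varepsilon Q$. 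After the rescaling $\lambda = \varepsilon\mu$, this gives $M(\varepsilon\mu,\varepsilon) = \varepsilon\bigl(\tilde{\bf L}^- - \mu\,(\tilde{\bf L}^+)^{-1} + \mathcal{O}(\varepsilon)\bigr)$, so a number $\lambda = \varepsilon\mu$ near $0$ is an eigenvalue of \eqref{Gener} if and only if $G(\mu,\varepsilon) := \det\bigl(\tilde{\bf L}^- - \mu\,(\tilde{\bf L}^+)^{-1} + \mathcal{O}(\varepsilon)\bigr) = 0$, with matching algebraic multiplicities.

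Then I would conclude by the implicit function theorem applied to $G$. At $\varepsilon = 0$, $G(\mu,0) = \det\bigl(\tilde{\bf L}^- - \mu\,(\tilde{\bf L}^+)^{-1}\bigr)$ is, up to the nonzero factor $\det(\tilde{\bf L}^+)^{-1}$, the characteristic polynomial of the truncated problem \eqref{Gener-matrix}, whose roots are $\tilde\lambda_1,\dots,\tilde\lambda_N$. If $\tilde\lambda_k$ is simple, it is a simple root of $G(\cdot,0)$, i.e. $\partial_\mu G(\tilde\lambda_k,0)\neq 0$, so there is a unique $C^1$ branch $\mu = \mu(\varepsilon)$ with $\mu(0) = \tilde\lambda_k$ and $G(\mu(\varepsilon),\varepsilon) = 0$ for $\varepsilon$ small; it produces a simple eigenvalue $\lambda_k := \varepsilon\,\mu(\varepsilon) = \varepsilon\tilde\lambda_k + \mathcal{O}(\varepsilon^2)$ of \eqref{Gener}, which is \eqref{eigen-expansion}. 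For the last assertion, every entry of $M(\lambda,\varepsilon)$ is real for real $\lambda$ and $\varepsilon$, since it is built from the real ILM profile ${\bf u}$, so $G(\cdot,\varepsilon)$ is a polynomial with real coefficients; hence $\overline{\mu(\varepsilon)}$ also solves $G(\cdot,\varepsilon) = 0$ and satisfies $\overline{\mu(0)} = \tilde\lambda_k \in \mathbb{R}$, so by uniqueness of the branch $\overline{\mu(\varepsilon)} = \mu(\varepsilon)$, i.e. $\lambda_k \in \mathbb{R}$.

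The main obstacle is the $\varepsilon$-order bookkeeping in the Lyapunov--Schmidt step: one must verify that the block coupling, of size $\mathcal{O}(\varepsilon)$, perturbs the reduced characteristic polynomial only at order $\varepsilon^{2}$ relative to its leading part $\varepsilon^{N}\det\bigl(\tilde{\bf L}^- - \mu\,(\tilde{\bf L}^+)^{-1}\bigr)$, so that the truncated matrices $\tilde{\bf L}^{\pm}$ genuinely capture the leading behaviour. This rests on three ingredients already available: the identity $D^-_n = \varepsilon(u_{n-1}+u_{n+1})/u_n$, valid precisely because $\tilde u_n \neq 0$ for $n = 1,\dots,N$; the strict signs $f'(\pm a) < 0$, $f'(\pm A) > 0$ of Lemma~\ref{lem-1}, which keep $\tilde{\bf L}^+$ (and $Q{\bf L}^+_\varepsilon Q$) invertible; and the $\mathcal{O}(\varepsilon)$-closeness of ${\bf u}$ to ${\bf u}^{(0)}$ together with the $C^1$-dependence of the ILM branch on $\varepsilon$, both coming from the implicit function theorem argument in the proof of Proposition~\ref{prop-existence}. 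A secondary point is to justify that the finite-dimensional reduction preserves algebraic multiplicities near the isolated eigenvalue $\lambda = 0$ of \eqref{Gener}, which is the standard Riesz-projection / Schur-complement argument, so that a simple zero of the reduced determinant corresponds to a simple eigenvalue of the full problem.
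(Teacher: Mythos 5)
Your proposal is correct and follows essentially the same route as the paper: a reduction onto the $N$-dimensional subspace spanned by the sites of $\tilde{\bf u}$ (the paper eliminates the tail components ${\bf w}_\pm$ by the implicit function theorem, which is exactly your Lyapunov--Schmidt step with $P$ and $Q$), the rescaling $\lambda = \varepsilon\tilde\lambda$ leading to the perturbed matrix problem $[\tilde{\bf L}^- + \mathcal{O}(\varepsilon)]{\bf c} = \tilde\lambda[\tilde{\bf L}^+ + \mathcal{O}(\varepsilon)]^{-1}{\bf c}$, and persistence of a simple root of the characteristic polynomial via the implicit function theorem. Your reality argument via real coefficients of the reduced determinant and uniqueness of the branch is an equivalent rephrasing of the paper's conjugate-pair-plus-multiplicity argument, so there is nothing substantive to add.
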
 

\begin{proof}
We have 
	$$
\Pi_{N \times N} {\bf L}_{\varepsilon}^- = \varepsilon \tilde{\bf L}^- + \mathcal{O}(\varepsilon^2), \quad 	
\Pi_{N \times N} {\bf L}_{\varepsilon}^+ = \tilde{\bf L}^+ + \mathcal{O}(\varepsilon),
	$$
where $\Pi_{N \times N}$ is the truncation of the matrix operator 
defined in $\ell^2(\mathbb{Z})$ to the matrix in $\mathbb{M}^{N \times N}$ 
by using the rows and columns between $n = 1$ and $n = N$. 
In the limit $\varepsilon \to 0$, we obtain by Proposition \ref{prop-existence} that
$$
{\bf L}_{\varepsilon = 0}^- = {\rm diag}({\bf I}_{\leq 0},{\bf 0}_{N \times N},{\bf I}_{\geq N+1}), \qquad {\bf L}_{\varepsilon = 0}^+= {\rm diag}({\bf I}_{\leq 0},\tilde{\bf L}^+,{\bf I}_{\geq N+1}),
$$
where ${\bf I}_{\leq 0}$ and ${\bf I}_{\geq N+1}$ are identity operators for $n \in \mathbb{Z}_-$ and $n \in \mathbb{Z}_+$ defined in the proof of Proposition \ref{prop-existence}, whereas ${\bf 0}_{N \times N}$ is the zero matrix for $1 \leq n \leq N$. It is clear that the spectrum of the generalized eigenvalue 
problem (\ref{Gener}) consists of only two eigenvalues: $\lambda = 0$ of multiplicity $N$ in the subspace $U_N = {\rm span}({\bf e}_1,{\bf e}_2,\dots,{\bf e}_N) \subset \ell^2(\mathbb{Z})$ and 
$\lambda = 1$ of infinite multiplicity in the subspaces 
$\ell^2(\mathbb{Z}_-)$ and $\ell^2(\mathbb{Z}_+)$. Since 
$$
\ell^2(\mathbb{Z}) = \ell^2(\mathbb{Z}_-) \oplus U_N \oplus \ell^2(\mathbb{Z}_+),
$$
we can proceed with the perturbation theory for the zero eigenvalue and use the decomposition 
$$
{\bf w} = c_1 {\bf e}_1 + c_2 {\bf e}_2 + \dots c_N {\bf e}_N + {\bf w}_- + {\bf w}_+, 
$$
where ${\bf w}_- \in \ell^2(\mathbb{Z}_-)$ and ${\bf w}_+ \in \ell(\mathbb{Z}_+)$. By the implicit function theorem (similar to the one used in the proof of Proposition \ref{prop-existence}), there exist $\varepsilon_0 > 0$, $C_0 > 0$, and $\lambda_0 > 0$ such that for every $\varepsilon \in (0,\varepsilon_0)$ and $|\lambda| < \lambda_0$, there exists a unique mapping 
$\mathbb{C}^N \ni {\bf c}=(c_1,c_2,\dots,c_N) \to ({\bf w}_-,{\bf w}_+) \in \ell(\mathbb{Z}_-) \oplus \ell(\mathbb{Z}_+)$ satisfying
$$
\| {\bf w}_- \|_{\ell^2(\mathbb{Z}_-)} + \| {\bf w}_+ \|_{\ell^2(\mathbb{Z}_+)} \leq C_0 \varepsilon \| {\bf c} \|_{\mathbb{C}^N}. 
$$
Rescaling the eigenvalues as $\lambda = \varepsilon \tilde{\lambda}(\varepsilon)$ and using the unique mapping above, we get 
the matrix eigenvalue problem in the form 
\begin{equation}
\label{perturbed-eig}
\left[ \tilde{\bf L}^- + \mathcal{O}(\varepsilon) \right] {\bf c} = 
\tilde{\lambda}(\varepsilon) \left[ \tilde{\bf L}^+ + \mathcal{O}(\varepsilon) \right]^{-1} {\bf c}.
\end{equation}
The truncated version of the matrix eigenvalue problem (\ref{perturbed-eig}) coincides with (\ref{Gener-matrix}) and admits eigenvalues $\{\tilde{\lambda}_1, \dots, \tilde{\lambda}_N \}$. Simple roots of the characteristic polynomial for the matrix eigenvalue problem (\ref{perturbed-eig}) persist with respect to $\mathcal{O}(\varepsilon)$ perturbation terms and yield the expansion (\ref{eigen-expansion}). This is again proven with the implicit function 
theorem for roots of the characteristic polynomial. Furthermore, since complex eigenvalues are symmetric about real axis, that is, both $\lambda$ and $\bar{\lambda}$ are eigenvalues, then if $\tilde{\lambda}_k$ is simple and real, then $\lambda_k$ is simple and real for small $\varepsilon > 0$, as $\lambda_k$ cannot split into a pair of two complex eigenvalues due to preservation of multiplicity in the roots of the characteristic polynomial with respect to $\mathcal{O}(\varepsilon)$ perturbation terms. 
\end{proof}

\begin{remark}
	If $\tilde{\lambda}_k$ is a multiple eigenvalue, then $\lambda_k$ is still located near $\tilde{\lambda}_k$ due to Puiseux expansions. However, if the algebraic multiplicity exceeds the geometric multiplicity of $\tilde{\lambda}_k$, the correction terms may not be real-valued even if $\tilde{\lambda}_k \in \mathbb{R}$.
\end{remark}

\subsection{Stability results based on the truncated eigenvalue problem}

Eigenvalues of the truncated eigenvalue problem (\ref{Gener-matrix}) 
are real if either $\tilde{\bf L}^-$ or $\tilde{\bf L}^+$ is positive definite. 
Therefore, we first study the number of negative and positive eigenvalues 
in these matrices, denoted as $n(\tilde{\bf L}^{\pm})$ and $p(\tilde{\bf L}^{\pm})$ respectively.

For $\tilde{\bf L}^+$, let $K$ be the number of elements $\pm a$ in the code $\tilde{\bf u}$. By Lemma \ref{lem-1}, we have 
\begin{equation}
\label{neg-2}
n(\tilde{\bf L}^+) = K, \quad p(\tilde{\bf L}^+) = N-K.
\end{equation}  
For $\tilde{\bf L}^-$, let $N_0({\bf z})$ be the number of flips (changes of sign between nearest components) of ${\bf z}$ for either ${\bf z} \in \mathbb{R}^N$ or ${\bf z} \in \ell^2(\mathbb{Z})$. The following 
lemma characterizes $n(\tilde{\bf L}^-)$ and $p(\tilde{\bf L}^-)$. 

\begin{lemma}
	\label{theorem-2}
$\tilde{\bf L}^-$ admits a simple zero eigenvalue, whereas
	\begin{equation}
	\label{neg-1}
	n(\tilde{\bf L}^-) = N_0(\tilde{\bf u}), \quad p(\tilde{\bf L}^-) = N-N_0(\tilde{\bf u})-1. 
	\end{equation}
Similarly, the spectrum of operator ${\bf L}^-_{\varepsilon} : \ell^2(\mathbb{Z}) \to \ell^2(\mathbb{Z})$ includes a simple zero eigenvalue, $N_0(\tilde{\bf u})$ negative eigenvalues, and the rest of its spectrum is strictly positive and bounded away from zero for $\varepsilon > 0$ by $C_0 \varepsilon$ with $C_0 > 0$.
\end{lemma}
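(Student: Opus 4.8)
The plan is to establish the spectral statements for the finite matrix $\tilde{\bf L}^-$ first and then transfer them to the operator ${\bf L}^-_\varepsilon$ on $\ell^2(\mathbb{Z})$ by the same perturbation scheme used in Lemma \ref{theorem-1}.

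\textbf{Step 1 (zero eigenvalue of $\tilde{\bf L}^-$ and its simplicity).} First I would note directly from the definition (\ref{Eq:D_n^-}), together with $\tilde u_0 = \tilde u_{N+1} = 0$, that $\tilde{\bf L}^-\tilde{\bf u} = {\bf 0}$, since the $n$-th entry equals $\tilde D_n^-\tilde u_n - \tilde u_{n-1} - \tilde u_{n+1} = (\tilde u_{n-1}+\tilde u_{n+1}) - \tilde u_{n-1} - \tilde u_{n+1} = 0$. Thus $0$ is an eigenvalue with eigenvector $\tilde{\bf u}$. Because $\tilde{\bf L}^-$ is tridiagonal with all off-diagonal entries equal to $-1\neq 0$, the $n$-th equation of $\tilde{\bf L}^-{\bf v}=\mu{\bf v}$ expresses $v_{n+1}$ through $v_n$ and $v_{n-1}$, so any eigenvector is determined by $v_1$, and a nonzero eigenvector must have $v_1\neq 0$; hence every eigenvalue is geometrically (and, by symmetry, algebraically) simple, in particular the zero eigenvalue is simple.

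\textbf{Step 2 (inertia of $\tilde{\bf L}^-$).} Since every component of $\tilde{\bf u}$ is nonzero, $D_{\bf u} := {\rm diag}(\tilde u_1,\dots,\tilde u_N)$ is invertible, so by Sylvester's law of inertia $\tilde{\bf L}^-$ has the same numbers of negative, zero, and positive eigenvalues as $M := D_{\bf u}\tilde{\bf L}^-D_{\bf u}$. A direct computation using (\ref{Eq:D_n^-}) shows $M$ is tridiagonal with $M_{nn} = c_{n-1}+c_n$ and $M_{n,n\pm1} = -c_n$, where $c_n := \tilde u_n\tilde u_{n+1}$ and $c_0=c_N=0$; equivalently $\langle M{\bf x},{\bf x}\rangle = \sum_{n=1}^{N-1} c_n (x_n-x_{n+1})^2$. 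The linear map ${\bf x}\mapsto(x_1-x_2,\dots,x_{N-1}-x_N)$ is onto $\mathbb{R}^{N-1}$ with one-dimensional kernel spanned by $(1,1,\dots,1)$, so in these coordinates the quadratic form is the diagonal form $\sum_{n=1}^{N-1}c_n y_n^2$ with all $c_n\neq0$; hence the inertia of $M$ (and of $\tilde{\bf L}^-$) is $(\#\{n:c_n>0\},\,\#\{n:c_n<0\},\,1)$. Finally $c_n>0$ exactly when $\tilde u_n,\tilde u_{n+1}$ have the same sign and $c_n<0$ exactly when they have opposite signs, so $\#\{n:c_n<0\}=N_0(\tilde{\bf u})$ and $\#\{n:c_n>0\}=N-1-N_0(\tilde{\bf u})$, which is (\ref{neg-1}) (the nullity being $1$ is consistent with Step 1).

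\textbf{Step 3 (the operator ${\bf L}^-_\varepsilon$, and the main obstacle).} For the infinite-dimensional operator I would first identify $\sigma_{\rm ess}({\bf L}^-_\varepsilon)=[1,1+4\varepsilon]$: since $|u_n|\to0$ as $n\to\pm\infty$, $D_n^-\to1+2\varepsilon$, so ${\bf L}^-_\varepsilon$ differs by multiplication by an $\ell^\infty$-sequence vanishing at infinity (a compact operator on $\ell^2$) from the constant-coefficient operator with Fourier symbol $1+2\varepsilon-2\varepsilon\cos\theta$, and Weyl's theorem applies; after shrinking $\varepsilon_0$ this is bounded away from $0$. Next, a computation identical to Step 1 but using the full equation (\ref{Eq:u_gen_pq}) gives ${\bf L}^-_\varepsilon{\bf u}={\bf 0}$, and ${\bf u}\in\ell^2(\mathbb{Z})$ by Proposition \ref{prop-existence}, so $0$ is an eigenvalue. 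Finally, exactly as in the proof of Lemma \ref{theorem-1}, ${\bf L}^-_0={\rm diag}({\bf I}_{\le0},{\bf 0}_{N\times N},{\bf I}_{\ge N+1})$ has a zero eigenvalue of multiplicity $N$ and $1$ of infinite multiplicity, and a Lyapunov--Schmidt reduction (using $\Pi_{N\times N}{\bf L}^-_\varepsilon=\varepsilon\tilde{\bf L}^-+\mathcal{O}(\varepsilon^2)$ and that the coupling to the complement is $\mathcal{O}(\varepsilon)$ while the complement operator stays invertible near $\lambda=0$) shows the $N$ small eigenvalues are $\varepsilon\mu_j+\mathcal{O}(\varepsilon^2)$ with $\mu_1,\dots,\mu_N$ the eigenvalues of $\tilde{\bf L}^-$. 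By Steps 1--2, $N_0(\tilde{\bf u})$ of the $\mu_j$ are negative, one is zero (and simple), and $N-1-N_0(\tilde{\bf u})$ are positive; since ${\bf L}^-_\varepsilon$ is self-adjoint all its eigenvalues are real, so for $\varepsilon\in(0,\varepsilon_0)$ it has exactly $N_0(\tilde{\bf u})$ negative eigenvalues (of order $\varepsilon$), the simple zero eigenvalue $0={\bf L}^-_\varepsilon{\bf u}$, and the remaining spectrum is positive and bounded below by $C_0\varepsilon$, where $C_0:=\tfrac12\min\{\mu_j:\mu_j>0\}$ when that set is nonempty (and by $\tfrac12$ for the essential spectrum). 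The main obstacle I anticipate is not any single step but making the Lyapunov--Schmidt bookkeeping for the $N$-fold eigenvalue precise while invoking the exact identity ${\bf L}^-_\varepsilon{\bf u}=0$ to pin down the split-off zero eigenvalue; the inertia count of Step 2 is the one genuinely new computation and deserves care.
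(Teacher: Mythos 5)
Your proposal is correct, and it reaches the same counts as the paper, but by a genuinely different route at the key step. The paper proves the inertia formula (\ref{neg-1}) by observing that $\tilde{\bf L}^-\tilde{\bf u} = {\bf 0}$ and ${\bf L}^-_\varepsilon {\bf u} = {\bf 0}$ and then invoking the discrete Sturm comparison/oscillation theorem (cited to \cite{LL}), which says that the number of negative eigenvalues of a discrete Schr\"odinger operator equals the number of sign alternations of the eigenvector for the zero eigenvalue; applied directly to both $\tilde{\bf L}^-$ and ${\bf L}^-_\varepsilon$ (together with $N_0({\bf u}) = N_0(\tilde{\bf u})$ from Proposition \ref{prop-existence}), this gives $n(\tilde{\bf L}^-) = n({\bf L}^-_\varepsilon) = N_0(\tilde{\bf u})$ in one stroke, and a brief perturbation remark from ${\bf L}^-_{\varepsilon=0} = {\rm diag}({\bf I}_{\leq 0},{\bf 0}_{N\times N},{\bf I}_{\geq N+1})$ handles the $\mathcal{O}(\varepsilon)$ positive eigenvalues and the spectrum near $1$. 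You instead compute the inertia of $\tilde{\bf L}^-$ by hand: simplicity of all eigenvalues from the Jacobi (nonzero off-diagonal) structure, and the count via the congruence $M = D_{\bf u}\tilde{\bf L}^- D_{\bf u}$, whose quadratic form $\sum_{n=1}^{N-1} \tilde u_n \tilde u_{n+1}(x_n-x_{n+1})^2$ makes Sylvester's law give exactly $(N-1-N_0(\tilde{\bf u}),\,N_0(\tilde{\bf u}),\,1)$; this "ground-state transform" computation is correct and is the genuinely new ingredient relative to the paper. For ${\bf L}^-_\varepsilon$ you then transfer the finite-dimensional inertia by the Lyapunov--Schmidt scheme of Lemma \ref{theorem-1}, pin the zero eigenvalue with the exact identity ${\bf L}^-_\varepsilon{\bf u}={\bf 0}$, and control the essential spectrum $[1,1+4\varepsilon]$ via Weyl's theorem (where, in fact, no shrinking of $\varepsilon_0$ is needed since it is bounded below by $1$). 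What each approach buys: yours is self-contained and elementary, avoiding the external oscillation theory and making the simplicity of the zero eigenvalue explicit; the paper's Sturm-theoretic argument is shorter, applies to the infinite lattice operator without any perturbative bookkeeping for the negative-eigenvalue count, and therefore extends beyond the strict small-$\varepsilon$ regime, whereas your transfer to ${\bf L}^-_\varepsilon$ is intrinsically tied to the anticontinuum limit (which, to be fair, is all the lemma claims).
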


\begin{proof}
	The spectral problems for both $\tilde{\bf L}^-$ and ${\bf L}^-_{\varepsilon}$ belong to the class of discrete Schr\"{o}dinger equations:
	$$
-\Delta_2 v_n + (\tilde{D}_n^- - 2) v_n = \lambda v_n, \quad n = 1,\dots,N, 
\quad v_0 = v_{N+1} = 0
	$$
	and
	$$
	-\varepsilon \Delta_2 v_n + v_n - |u_n|^{p-1} v_n + \gamma |u_n|^{q-1} v_n = \lambda v_n, \quad n \in \mathbb{Z},
	$$
	where $\lambda$ is the spectral parameter. It follows from 
	(\ref{Eq:D_n^-}) that $\tilde{\bf L}^-\tilde{\bf u}={\bf 0}$. 
	Furthermore, it follows from (\ref{Eq:u_gen_pq}) that 
	${\bf L}^-_{\varepsilon} {\bf u}={\bf 0}$. By Proposition \ref{prop-existence}, we have $N_0({\bf u})= N_0(\tilde{\bf u})$. 
	Sturm's comparison theorem for the discrete Schr\"{o}dinger equations \cite{LL} 
	states that the number of negative eigenvalues of either $\tilde{\bf L}^-$ or ${\bf L}^-_{\varepsilon}$ is equal to the number of sign alternation 
	in the eigenvector for the zero eigenvalue, either $\tilde{\bf u}$ or ${\bf u}$. This yields the result for $n(\tilde{\bf L}^-) = n({\bf L}^-_{\varepsilon})$ for small $\varepsilon > 0$. Since $\tilde{\bf L}^-$ is the $N\times N$ matrix with a simple zero eigenvalue, we have 
	$p(\tilde{\bf L}^-)= N - 1 -  n(\tilde{\bf L}^-)$. Since ${\bf L}^-_{\varepsilon}$ is a linear operator in $\ell^2(\mathbb{Z})$ 
	satisfying ${\bf L}_{\varepsilon = 0}^- = {\rm diag}({\bf I}_{\leq 0},{\bf 0}_{N \times N},{\bf I}_{\geq N+1})$, its spectrum consists of 
	$n({\bf L}^-_{\varepsilon})$ negative eigenvalues, a simple zero eigenvalue, and $N-1-n({\bf L}^-_{\varepsilon})$ positive eigenvalues of the $\mathcal{O}(\varepsilon)$ order, as well as the rest of the spectrum which is located near $1$ for small $\varepsilon > 0$. Hence, the strictly positive spectrum is bounded away from zero for $\varepsilon > 0$ by $C_0 \varepsilon$ with $C_0 > 0$.
\end{proof}

The following four propositions present definite results on 
the count of real eigenvalues in the truncated eigenvalue problem (\ref{Gener-matrix}) in the cases when either $\tilde{\bf L}^-$ or $\tilde{\bf L}^+$ is positive definite. If there exist negative eigenvalues $\lambda$, then the result of Lemma \ref{theorem-1} implies that the ILM is spectrally unstable in the ACL. If no negative eigenvalues exist, we are able to conclude on the 
spectral stability of the corresponding ILMs in the ACL. This yields assertions (A), (B), (C), and (D) in Theorem \ref{th-main}.

\begin{prop}
	\label{prop-1}
	If the code $\mathcal{A}$ includes only symbols $A_+$ and $A_-$, then 
	the truncated eigenvalue problem (\ref{Gener-matrix}) admits $N_0(\tilde{\bf u})$ negative eigenvalues, a simple zero eigenvalue, and $N - N_0(\tilde{\bf u}) -1$ positive eigenvalues. 
\end{prop}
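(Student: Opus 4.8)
The plan is to observe that the hypothesis trivializes the matrix $\tilde{\bf L}^+$. If the code $\mathcal{A}$ contains only the symbols $A_+$ and $A_-$, then $\tilde{u}_n = \pm A$ for every $n = 1, \dots, N$, so by (\ref{Eq:D_n^+}) we have $\tilde{D}_n^+ = f'(A)$ for all $n$, i.e. $\tilde{\bf L}^+ = f'(A)\, {\bf I}_N$. By Lemma \ref{lem-1}, $f'(A) > 0$, hence $\tilde{\bf L}^+$ is positive definite and $(\tilde{\bf L}^+)^{-1} = [f'(A)]^{-1}\, {\bf I}_N$. The generalized eigenvalue problem (\ref{Gener-matrix}) therefore reduces to the standard symmetric eigenvalue problem $f'(A)\, \tilde{\bf L}^- \tilde{\bf w} = \tilde{\lambda}\, \tilde{\bf w}$; equivalently, the eigenvalues of (\ref{Gener-matrix}) are exactly $\tilde{\lambda} = f'(A)\, \mu$, where $\mu$ ranges over the eigenvalues of the real symmetric matrix $\tilde{\bf L}^-$, with the same eigenvectors and the same algebraic multiplicities.

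Next I would invoke Lemma \ref{theorem-2}, which gives the inertia of $\tilde{\bf L}^-$: it has $N_0(\tilde{\bf u})$ negative eigenvalues, a simple zero eigenvalue with eigenvector $\tilde{\bf u}$ (by (\ref{Eq:D_n^-})), and $N - N_0(\tilde{\bf u}) - 1$ positive eigenvalues. Since multiplication by the positive constant $f'(A)$ preserves both the sign of each eigenvalue and its multiplicity, the count transfers verbatim to the eigenvalues $\tilde{\lambda}$ of (\ref{Gener-matrix}): there are $N_0(\tilde{\bf u})$ negative ones, a simple zero, and $N - N_0(\tilde{\bf u}) - 1$ positive ones, which is precisely the assertion.

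There is essentially no obstacle here: this is the simplest of the four propositions precisely because $\tilde{\bf L}^+$ collapses to a positive scalar multiple of the identity, so no interaction between $\tilde{\bf L}^-$ and $\tilde{\bf L}^+$ has to be analyzed. The only minor points to record are that the generalized problem is genuinely well-posed (the operator $(\tilde{\bf L}^+)^{-1}$ exists because $f'(A) \neq 0$) and that the zero eigenvalue remains simple, both immediate. One may append, via Lemma \ref{theorem-1}, that each simple eigenvalue $\tilde{\lambda}_k$ persists as a simple real eigenvalue $\lambda_k = \varepsilon \tilde{\lambda}_k + \mathcal{O}(\varepsilon^2)$ of the full problem (\ref{Gener}); thus the code $\mathcal{A}_A$ (no sign alternation, $N_0(\tilde{\bf u}) = 0$) has no negative $\lambda$ and the corresponding ILM is spectrally stable in the ACL, while any code of this class with at least one sign alternation is spectrally unstable.
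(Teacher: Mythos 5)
Your proof is correct, and it rests on the same two pillars as the paper's: the positive definiteness of $\tilde{\bf L}^+$ when no symbols $a_\pm$ occur, and the inertia of $\tilde{\bf L}^-$ supplied by Lemma \ref{theorem-2}. The route differs in one small but genuine way: the paper only uses that $\tilde{\bf L}^+$ is positive definite and then invokes Sylvester's inertia theorem to transfer the signature of $\tilde{\bf L}^-$ to the generalized problem (\ref{Gener-matrix}); you instead exploit the stronger structural fact that $\tilde{\bf L}^+ = f'(A)\,{\bf I}_N$ is a positive scalar matrix, which collapses (\ref{Gener-matrix}) to the standard symmetric problem $f'(A)\,\tilde{\bf L}^- \tilde{\bf w} = \tilde{\lambda}\,\tilde{\bf w}$ and even yields the eigenvalues exactly as $\tilde{\lambda} = f'(A)\mu$. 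Your version is more elementary (no inertia theorem needed) and gives explicit eigenvalues, but it is less portable: the Sylvester argument is what the paper reuses for Proposition \ref{prop-3} (and, after projection, Proposition \ref{prop-4}), where $\tilde{\bf L}^+$ has mixed diagonal entries $f'(a)$ and $f'(A)$ and is not a scalar multiple of the identity. One caveat on your closing sentence about persistence: Lemma \ref{theorem-1} as stated covers simple $\tilde{\lambda}_k$, whereas the codes with several sign alternations can produce multiple eigenvalues of (\ref{Gener-matrix}); the paper handles persistence of the positive eigenvalues through their semi-simplicity (positivity of both quadratic forms), as noted in the remark following the proposition, so your spectral-stability addendum should be phrased that way — though this is outside what Proposition \ref{prop-1} itself asserts.
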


\begin{proof}
	It follows from (\ref{neg-2}) with $K = 0$ that the matrix $\tilde{\bf L}^+$ is positive-definite. In this case, the generalized eigenvalue problem (\ref{Gener-matrix}) only admits real eigenvalues and Sylvester's inertial theorem counts the number of negative, zero, and positive eigenvalues $\tilde{\lambda}$ from those of $\tilde{\bf L}^-$ \cite{Gelfand}. This yields the result by the count (\ref{neg-1}) in Lemma \ref{theorem-2}.
\end{proof}

\begin{remark}
	The spectrally stable codes correspond to $N_0(\tilde{\bf u}) = 0$, which is realized at the nonalternating symbols, e.g. $(A_+)$, $(A_-)$, $(A_+A_+)$, $(A_-A_-)$, $(A_+A_+A_+)$, $(A_-A_-A_-)$ etc. Since the quadratic forms 
	$\tilde{\bf L}^- \tilde{\bf w}  \cdot \tilde{\bf w} $ and  $(\tilde{\bf L}^+)^{-1} \tilde{\bf w}  \cdot \tilde{\bf w} $ are strictly positive for a positive eigenvalue $\tilde{\lambda}$ with the eigenvector $\tilde{\bf w} \neq {\bf 0}$, then the positive eigenvalues of (\ref{Gener-matrix}) are semi-simple. Hence, they persist as real positive eigenvalues of (\ref{Gener}) according to Lemma \ref{theorem-1}. This yields the spectral stability of the corresponding ILMs. All other codes which include only symbols $A_+$ and $A_-$ are spectrally unstable. 
\end{remark}

\begin{prop}
	\label{prop-2}
	If the code $\mathcal{A}$ includes only symbols $a_+$ and $a_-$, then 
	the truncated eigenvalue problem (\ref{Gener-matrix}) admits $N - N_0(\tilde{\bf u}) -1$ negative eigenvalues, a simple zero eigenvalue, and $N_0(\tilde{\bf u})$ positive eigenvalues. 
\end{prop}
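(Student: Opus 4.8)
The plan is to mirror the proof of Proposition \ref{prop-1}, exploiting the fact that here the code $\mathcal{A}$ consists exclusively of the symbols $a_\pm$. In this case every diagonal entry of $\tilde{\bf L}^+$ equals $f'(a)$, so by Lemma \ref{lem-1} one has the especially clean identity $\tilde{\bf L}^+ = f'(a)\,{\bf I}_N$ with $f'(a) < 0$; equivalently, in the language of (\ref{neg-2}) the number of $\pm a$-symbols is $K = N$ and $\tilde{\bf L}^+$ is negative definite.

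First I would substitute $\tilde{\bf L}^+ = f'(a)\,{\bf I}_N$ into the truncated generalized eigenvalue problem (\ref{Gener-matrix}) to reduce it to the ordinary symmetric eigenvalue problem
\begin{gather*}
\tilde{\bf L}^-\,\tilde{\bf w} = \frac{\tilde\lambda}{f'(a)}\,\tilde{\bf w}.
\end{gather*}
Thus the eigenvalues are $\tilde\lambda = f'(a)\,\mu$, where $\mu$ runs over the eigenvalues of $\tilde{\bf L}^-$; since $f'(a) < 0$, the sign of each $\tilde\lambda$ is opposite to the sign of the corresponding $\mu$, while the zero eigenvalue is preserved and stays simple because $f'(a) \neq 0$. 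Equivalently, and for later use in Propositions \ref{prop-3}--\ref{prop-4} where $\tilde{\bf L}^+$ is no longer a scalar matrix, one writes (\ref{Gener-matrix}) as $(-\tilde{\bf L}^-)\tilde{\bf w} = \tilde\lambda\,(-(\tilde{\bf L}^+)^{-1})\tilde{\bf w}$ with $-(\tilde{\bf L}^+)^{-1}$ symmetric positive definite and applies Sylvester's inertia theorem \cite{Gelfand} to conclude that the inertia of the spectrum $\{\tilde\lambda_k\}$ equals the inertia of $-\tilde{\bf L}^-$.

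Then I would read off the counts from Lemma \ref{theorem-2}, which gives $n(\tilde{\bf L}^-) = N_0(\tilde{\bf u})$, a simple zero eigenvalue, and $p(\tilde{\bf L}^-) = N - N_0(\tilde{\bf u}) - 1$. Flipping signs via $\tilde\lambda = f'(a)\mu$ converts the $N_0(\tilde{\bf u})$ negative eigenvalues of $\tilde{\bf L}^-$ into $N_0(\tilde{\bf u})$ positive eigenvalues $\tilde\lambda$, and the $N - N_0(\tilde{\bf u}) - 1$ positive eigenvalues of $\tilde{\bf L}^-$ into $N - N_0(\tilde{\bf u}) - 1$ negative eigenvalues $\tilde\lambda$, with the simple zero eigenvalue unchanged. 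This is exactly the claimed count.

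There is essentially no analytical obstacle here — the content is purely the sign bookkeeping forced by $f'(a) < 0$ — but it is worth recording two consequences for Theorem \ref{th-main}. First, this count shows that an $a$-only code is free of negative eigenvalues $\tilde\lambda$ (hence, by Lemma \ref{theorem-1} together with the semisimplicity of the positive eigenvalues of (\ref{Gener-matrix}), spectrally stable in the ACL) if and only if $N - N_0(\tilde{\bf u}) - 1 = 0$, i.e. $N_0(\tilde{\bf u}) = N - 1$, which is precisely the fully sign-alternating code of item (B); all other $a$-only codes have at least one negative $\tilde\lambda$ and are spectrally unstable. Second, the $N_0(\tilde{\bf u}) = N-1$ positive eigenvalues $\tilde\lambda$ of the fully alternating code will later be seen to carry negative Krein signature, which is what ultimately obstructs orbital stability for $N \geq 2$; that refinement, however, belongs to the subsequent analysis combining this proposition with the inertia of ${\bf L}^-_\varepsilon$ from Lemma \ref{theorem-2}.
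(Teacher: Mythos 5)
Your proof is correct, and it reaches the paper's count by the same underlying bookkeeping: the inertia of $\tilde{\bf L}^-$ from Lemma \ref{theorem-2} together with the sign flip forced by $f'(a)<0$. The one genuine difference is in how you realize the sign flip. The paper only uses that $\tilde{\bf L}^+$ is negative definite (i.e.\ $K=N$ in (\ref{neg-2})) and then invokes Sylvester's inertia theorem to transfer the inertia of $-\tilde{\bf L}^-$ to the generalized spectrum. You instead observe the stronger fact that for an $a$-only code every diagonal entry of $\tilde{\bf L}^+$ is the same number $f'(a)$, so $\tilde{\bf L}^+=f'(a)\,{\bf I}_N$ and (\ref{Gener-matrix}) collapses to the ordinary symmetric problem $\tilde{\bf L}^-\tilde{\bf w}=(\tilde\lambda/f'(a))\tilde{\bf w}$; the eigenvalues are then literally $\tilde\lambda=f'(a)\mu$ with $\mu\in\sigma(\tilde{\bf L}^-)$, which gives realness, simplicity of the zero eigenvalue, and the count without any inertia theorem. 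This is marginally more elementary and even slightly more informative (you get the eigenvalues themselves, not just their signs), but it is special to the uniform code: for mixed codes the diagonal of $\tilde{\bf L}^+$ is not constant, which is why the paper's definiteness-plus-Sylvester formulation -- which you also record as the equivalent version $(-\tilde{\bf L}^-)\tilde{\bf w}=\tilde\lambda\,(-(\tilde{\bf L}^+)^{-1})\tilde{\bf w}$ -- is the form that carries over to Propositions \ref{prop-3} and \ref{prop-4}. Your closing remarks on stability of the fully alternating code and on Krein signature are consistent with the paper's remark following the proposition and with item (B) of Theorem \ref{th-main}, though they are not needed for the statement itself.
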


\begin{proof}
It follows from (\ref{neg-2}) with $K = N$ that the matrix $\tilde{\bf L}^+$ is negative-definite. In this case, the generalized eigenvalue problem (\ref{Gener-matrix}) only admits real eigenvalues and Sylvester's inertial theorem counts the number of negative, zero, and positive eigenvalues $\tilde{\lambda}$ from those for $-\tilde{\bf L}^-$ \cite{Gelfand}. This yields the result by the count (\ref{neg-1}) in Lemma \ref{theorem-2}.
\end{proof}

\begin{remark}
	The spectrally stable codes correspond to $N = N_0(\tilde{\bf u}) + 1$, 
	which is realized at the alternating combination of symbols $a_+$ and $a_-$, e.g. $(a_+)$, $(a_-)$, $(a_+a_-)$, $(a_+a_-a_+)$, $(a_-a_+a_-)$, etc. Since the quadratic forms $\tilde{\bf L}^- \tilde{\bf w}  \cdot \tilde{\bf w} $ and  $(\tilde{\bf L}^+)^{-1} \tilde{\bf w}  \cdot \tilde{\bf w} $ are strictly negative for a positive eigenvalue $\tilde{\lambda}$ with the eigenvector $\tilde{\bf w} \neq {\bf 0}$, then the positive eigenvalues of (\ref{Gener-matrix}) are semi-simple. Hence, they persist as real positive eigenvalues of (\ref{Gener}) according to Lemma \ref{theorem-1}. This yields the spectral stability of the corresponding ILMs. 
	All other codes which include only symbols $a_+$ and $a_-$ are spectrally unstable. 
\end{remark}

\begin{prop}
	\label{prop-3}
	Assume that $(\tilde{\bf L}^+)^{-1} \tilde{\bf u} \cdot \tilde{\bf u} \neq 0$ and define 
	$$
	\sigma_0 = \left\{ \begin{array}{ll} 1, \quad & \mbox{\rm if} \quad (\tilde{\bf L}^+)^{-1} \tilde{\bf u} \cdot \tilde{\bf u} < 0, \\
	0, \quad & \mbox{\rm if} \quad  (\tilde{\bf L}^+)^{-1} \tilde{\bf u} \cdot \tilde{\bf u} > 0. \end{array} \right.
	$$
	If the code $\mathcal{A}$ includes only symbols $a_+$ and $A_+$, then 
	the truncated eigenvalue problem (\ref{Gener-matrix}) admits $K - \sigma_0$ negative eigenvalues, a simple zero eigenvalue, and $N - K - (1-\sigma_0)$ positive eigenvalues.
\end{prop}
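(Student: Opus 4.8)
The plan is to reduce the generalized eigenvalue problem (\ref{Gener-matrix}), whose ``mass matrix'' $(\tilde{\bf L}^+)^{-1}$ is indefinite, to a \emph{definite} matrix pencil on a hyperplane, so that eigenvalue counting follows from Sylvester's inertia theorem exactly as in Propositions \ref{prop-1} and \ref{prop-2}. Write $S := \tilde{\bf L}^-$ and $T := (\tilde{\bf L}^+)^{-1}$, so that (\ref{Gener-matrix}) reads $S\tilde{\bf w} = \tilde{\lambda} T\tilde{\bf w}$. Since the code $\mathcal{A}$ contains only the symbols $a_+$ and $A_+$, the vector $\tilde{\bf u}$ is sign-definite, hence $N_0(\tilde{\bf u}) = 0$, and Lemma \ref{theorem-2} gives that $S$ is positive semi-definite with a simple zero eigenvalue and $\ker S = \mathrm{span}(\tilde{\bf u})$ (recall $S\tilde{\bf u} = {\bf 0}$ from (\ref{Eq:D_n^-})). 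By (\ref{neg-2}), $T$ is invertible with $n(T) = n(\tilde{\bf L}^+) = K$ and $p(T) = N - K$; moreover $T$ is diagonal, so $T\tilde{\bf u}\cdot\tilde{\bf u} = K a^2/f'(a) + (N-K) A^2/f'(A)$, whose sign is exactly what $\sigma_0$ records. First I would dispose of the zero eigenvalue: its eigenspace is $\ker S = \mathrm{span}(\tilde{\bf u})$, and a Jordan chain would require $S\tilde{\bf w}_1 = T\tilde{\bf u}$ to be solvable, which, since $\mathrm{range}(S) = (\ker S)^\perp = \tilde{\bf u}^\perp$, happens iff $T\tilde{\bf u}\cdot\tilde{\bf u} = 0$. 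The standing hypothesis $(\tilde{\bf L}^+)^{-1}\tilde{\bf u}\cdot\tilde{\bf u}\neq 0$ excludes this, so $\tilde{\lambda}=0$ is algebraically simple.

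The core step is the reduction to a hyperplane. Set $W := (T\tilde{\bf u})^\perp$. For $x\in W$ we have $Tx\cdot\tilde{\bf u} = x\cdot T\tilde{\bf u} = 0$, so $\mathrm{span}(\tilde{\bf u})$ and $W$ are $T$-orthogonal; since $T\tilde{\bf u}\cdot\tilde{\bf u}\neq 0$ forces $\tilde{\bf u}\notin W$, this gives the $T$-orthogonal splitting $\mathbb{R}^N = \mathrm{span}(\tilde{\bf u})\oplus W$. Because $W\cap\ker S = \{{\bf 0}\}$, the quadratic form $S|_W$ is positive definite, and $T|_W$ is non-degenerate. I claim the nonzero eigenvalues of $(S,T)$ on $\mathbb{R}^N$ coincide, with algebraic multiplicities, with those of the pencil $S|_W\phi = \tilde\lambda\, T|_W\phi$ on $W$. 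Indeed, if $S\tilde{\bf w} = \tilde{\lambda}T\tilde{\bf w}$ with $\tilde{\lambda}\neq 0$, then $T\tilde{\bf w} = \tilde{\lambda}^{-1}S\tilde{\bf w}\in\mathrm{range}(S) = \tilde{\bf u}^\perp$, so $\tilde{\bf w}\cdot T\tilde{\bf u} = 0$, i.e.\ $\tilde{\bf w}\in W$, and it is trivially an eigenvector of the restricted pencil. Conversely, if $\phi\in W\setminus\{{\bf 0}\}$ satisfies $S\phi - \tilde{\lambda}T\phi\perp W$ (the restricted-pencil equation) with $\tilde{\lambda}\neq 0$, then $S\phi - \tilde{\lambda}T\phi\in W^\perp = \mathrm{span}(T\tilde{\bf u})$, say $=c\,T\tilde{\bf u}$; pairing with $\tilde{\bf u}$ and using $S\tilde{\bf u}={\bf 0}$ and $\phi\cdot T\tilde{\bf u}=0$ gives $0 = c\,(T\tilde{\bf u}\cdot\tilde{\bf u})$, hence $c=0$ and $S\phi = \tilde{\lambda}T\phi$ exactly. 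The same computation run along a Jordan chain shows the whole chain lies in $W$; since the restricted pencil is definite, hence diagonalizable, there are no Jordan blocks at nonzero $\tilde{\lambda}$, so the two spectra and all multiplicities match.

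Finally, the count. The restricted pencil with $S|_W>0$ is definite: writing $S|_W = CC^{T}$ with $C$ invertible, its eigenvalues $\tilde\lambda$ are the reciprocals of the eigenvalues of the symmetric matrix $C^{-1}(T|_W)(C^{T})^{-1}$, which is congruent to $T|_W$; hence they are all real, semisimple and nonzero, with $n(T|_W)$ of them negative and $p(T|_W)$ positive. By additivity of inertia over the $T$-orthogonal splitting, $n(T) = n(T|_{\mathrm{span}(\tilde{\bf u})}) + n(T|_W)$ and likewise for $p$; since the one-dimensional form $T|_{\mathrm{span}(\tilde{\bf u})}$ contributes one positive square when $\sigma_0 = 0$ and one negative square when $\sigma_0 = 1$, we obtain $n(T|_W) = K - \sigma_0$ and $p(T|_W) = N - K - (1-\sigma_0)$. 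Together with the simple zero eigenvalue from the first paragraph, this yields $K-\sigma_0$ negative, one zero, and $N-K-(1-\sigma_0)$ positive eigenvalues of (\ref{Gener-matrix}), all real, which is the assertion.

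The only delicate point is the reduction step: showing that passing to the codimension-one subspace $W$ removes exactly the simple zero eigenvalue and neither loses nor creates eigenvalues or Jordan structure at the nonzero ones. This is precisely where the hypothesis $(\tilde{\bf L}^+)^{-1}\tilde{\bf u}\cdot\tilde{\bf u}\neq 0$ is indispensable — it is what makes $\mathrm{span}(\tilde{\bf u})$ a non-degenerate $T$-summand so that $S\tilde{\bf u}={\bf 0}$ pins down the $T$-orthogonal hyperplane $W$. Everything else is the same inertia bookkeeping used in Propositions \ref{prop-1} and \ref{prop-2}.
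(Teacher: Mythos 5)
Your argument is correct, and it follows the same overall strategy as the paper --- remove the kernel direction $\tilde{\bf u}$ of $\tilde{\bf L}^-$, restrict to a codimension-one subspace on which $\tilde{\bf L}^-$ is positive definite, and count with Sylvester's inertia theorem --- but it executes the reduction differently, and in a technically sharper way. The paper projects onto the Euclidean complement $\tilde{\bf u}^{\perp}$ with $\Pi_0$, invokes Fredholm's theorem for simplicity of the zero eigenvalue, and applies Sylvester directly to $\Pi_0 (\tilde{\bf L}^+)^{-1} \Pi_0$; in that formulation the exact reduced pencil actually carries a rank-one (Schur-complement type) correction to $\Pi_0 (\tilde{\bf L}^+)^{-1} \Pi_0$, and the inertia of the plain compression to $\tilde{\bf u}^{\perp}$ is governed by the bordering quantity $\tilde{\bf L}^+ \tilde{\bf u} \cdot \tilde{\bf u}$ rather than directly by $(\tilde{\bf L}^+)^{-1} \tilde{\bf u} \cdot \tilde{\bf u}$, so an extra step is implicitly needed to tie the count to $\sigma_0$. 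You instead restrict to the $(\tilde{\bf L}^+)^{-1}$-orthogonal hyperplane $W = \{(\tilde{\bf L}^+)^{-1}\tilde{\bf u}\}^{\perp}$, which is precisely the constrained space of the negative-index theory used later in Proposition \ref{theorem-count}: there the reduction is exact (every eigenvector and Jordan chain for $\tilde{\lambda} \neq 0$ lies in $W$, and no spurious eigenvalues are created, as your $c=0$ computation shows), $\tilde{\bf L}^-|_W$ is positive definite, and the inertia additivity over the $(\tilde{\bf L}^+)^{-1}$-orthogonal splitting $\mathbb{R}^N = {\rm span}(\tilde{\bf u}) \oplus W$ makes the dependence on the sign of $(\tilde{\bf L}^+)^{-1}\tilde{\bf u}\cdot\tilde{\bf u}$, i.e.\ on $\sigma_0$, completely transparent. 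Your Jordan-chain verification that $\tilde{\lambda}=0$ is algebraically simple is the explicit form of the paper's Fredholm argument. In short: same counting philosophy, but your choice of hyperplane buys an exact spectral equivalence and an immediate $\sigma_0$-bookkeeping, at the cost of a slightly longer write-up; the paper's version is a more compact sketch of the same count.
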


\begin{proof}
	By Lemma \ref{theorem-2}, the matrix $\tilde{\bf L}^-$ has a simple zero eigenvalue with the eigenvector $\tilde{\bf u}$ and $(N-1)$ positive eigenvalues since $N_0(\tilde{\bf u}) = 0$. By Fredholm's theorem, the zero eigenvalue in the generalized eigenvalue problem (\ref{Gener-matrix}) is simple if 
	$$
	(\tilde{\bf L}^+)^{-1} \tilde{\bf u} \cdot \tilde{\bf u} \neq 0
	$$
	Consider the orthogonal complement of $\tilde{\bf u}$ in $\mathbb{R}^N$ 
with the orthogonal projection operator $\Pi_0 : \mathbb{R}^N \to \mathbb{R}^N |_{\tilde{\bf u}^{\perp}}$. The generalized eigenvalue problem (\ref{Gener-matrix}) can be reduced on $\mathbb{R}^N |_{\tilde{\bf u}^{\perp}}$ by using 
\begin{equation}
\Pi_0 \tilde{\bf L}^- \Pi_0 \tilde{\bf w} = \tilde{\lambda}  \Pi_0 (\tilde{\bf L}^+)^{-1} \Pi_0 \tilde{\bf w}. 
\label{Gener-projected}
\end{equation}
Since $\Pi_0 \tilde{\bf L}^- \Pi_0$ is strictly positive, Sylvester's inertial theorem counts the number of remaining $(N-1)$ negative and positive eigenvalues $\tilde{\lambda}$ in (\ref{Gener-projected}) from those for $\Pi_0 (\tilde{\bf L}^+)^{-1} \Pi_0 $. It follows from (\ref{neg-2}) that 
$(\tilde{\bf L}^+)^{-1}$ has $K$ negative and $N-K$ positive eigenvalues. 
If $(\tilde{\bf L}^+)^{-1} \tilde{\bf u} \cdot \tilde{\bf u} < 0$, then 
$\Pi_0 (\tilde{\bf L}^+)^{-1} \Pi_0 $ has $K-1$ negative and $N-K$ positive 
eigenvalues. If $(\tilde{\bf L}^+)^{-1} \tilde{\bf u} \cdot \tilde{\bf u} > 0$, then $\Pi_0 (\tilde{\bf L}^+)^{-1} \Pi_0 $ has $K$ negative and $N-K-1$ positive 
eigenvalues. This yields the result. 
\end{proof}

\begin{remark}
	The spectrally stable codes correspond to $K = \sigma_0$, which can be realized in two different ways. Either no symbols $a_+$ are present if $\sigma_0 = 0$, in which case the assertion follows by Proposition \ref{prop-1}, or only one symbol $a_+$ is present if $\sigma_0 = 1$. In either case, the quadratic forms $\tilde{\bf L}^- \tilde{\bf w}  \cdot \tilde{\bf w} $ and  $(\tilde{\bf L}^+)^{-1} \tilde{\bf w}  \cdot \tilde{\bf w} $ are strictly positive for a positive eigenvalue $\tilde{\lambda}$ with the eigenvector $\tilde{\bf w} \neq {\bf 0}$. Hence, the positive eigenvalues of (\ref{Gener-projected}) are semi-simple and persist as real positive eigenvalues of (\ref{Gener}). This yields the spectral stability of the corresponding ILMs. All other codes which include only sumbols $a_+$ and $A_+$ are spectrally unstable.  In particular, any code of this type that includes more than one symbol $a_+$ is spectrally unstable.
\end{remark}

\begin{prop}
	\label{prop-4}
	Assume that $(\tilde{\bf L}^+)^{-1} \tilde{\bf u} \cdot \tilde{\bf u} \neq 0$ and define $\sigma_0$ as in Proposition \ref{prop-3}. 
	If the code $\mathcal{A}$ includes only sign-alternating symbols $a_\pm$ and $A_\pm$, e.g. $(A_+ a_-a_+A_-)$,  then 	the truncated eigenvalue problem (\ref{Gener-matrix}) admits $N - K - (1-\sigma_0)$ negative eigenvalues, a simple zero eigenvalue, and $K - \sigma_0$ positive eigenvalues.
\end{prop}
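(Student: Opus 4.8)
The plan is to follow the proof of Proposition \ref{prop-3} almost verbatim, with the roles of the semi-definite and the indefinite matrix in the truncated problem (\ref{Gener-matrix}) interchanged. The key observation is that a code built only from sign-alternating symbols has every pair of neighbouring components of $\tilde{\bf u}$ of opposite sign, so $N_0(\tilde{\bf u}) = N-1$. By Lemma \ref{theorem-2}, $\tilde{\bf L}^-$ then has $N-1$ negative eigenvalues, a simple zero eigenvalue, and no positive eigenvalues, and since $\tilde{\bf L}^- \tilde{\bf u} = {\bf 0}$ by (\ref{Eq:D_n^-}), the kernel of $\tilde{\bf L}^-$ is ${\rm span}(\tilde{\bf u})$; equivalently, $-\tilde{\bf L}^-$ is positive semi-definite with one-dimensional kernel ${\rm span}(\tilde{\bf u})$. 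I would therefore rewrite (\ref{Gener-matrix}) in the equivalent form
\begin{equation*}
(-\tilde{\bf L}^-)\,\tilde{\bf w} = \tilde{\lambda}\,\bigl(-(\tilde{\bf L}^+)^{-1}\bigr)\,\tilde{\bf w},
\end{equation*}
which is now of exactly the type treated in Proposition \ref{prop-3}, with $-\tilde{\bf L}^-$ in the place of $\tilde{\bf L}^-$ there.

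Next I would repeat the Fredholm and projection argument. Under the standing hypothesis $(\tilde{\bf L}^+)^{-1}\tilde{\bf u}\cdot\tilde{\bf u}\ne 0$, Fredholm's theorem shows that the zero eigenvalue of the generalized problem is simple with eigenvector $\tilde{\bf u}$. Projecting onto the orthogonal complement $\mathbb{R}^N|_{\tilde{\bf u}^{\perp}}$ by the projector $\Pi_0$, as in the reduction to (\ref{Gener-projected}), gives
\begin{equation*}
\Pi_0(-\tilde{\bf L}^-)\Pi_0\,\tilde{\bf w} = \tilde{\lambda}\,\Pi_0\bigl(-(\tilde{\bf L}^+)^{-1}\bigr)\Pi_0\,\tilde{\bf w},
\end{equation*}
in which $\Pi_0(-\tilde{\bf L}^-)\Pi_0$ is strictly positive definite on the $(N-1)$-dimensional space. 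By Sylvester's inertia theorem the $N-1$ eigenvalues $\tilde{\lambda}$ are then real, and their signs match the signs of the eigenvalues of $\Pi_0\bigl(-(\tilde{\bf L}^+)^{-1}\bigr)\Pi_0$.

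It then remains to count. By (\ref{neg-2}) the matrix $(\tilde{\bf L}^+)^{-1}$ has $K$ negative and $N-K$ positive eigenvalues, hence $-(\tilde{\bf L}^+)^{-1}$ has $N-K$ negative and $K$ positive eigenvalues. Restricting a non-degenerate symmetric matrix $B$ to the hyperplane $\tilde{\bf u}^{\perp}$ deletes exactly one eigenvalue, of the same sign as $B\tilde{\bf u}\cdot\tilde{\bf u}$; applying this with $B = -(\tilde{\bf L}^+)^{-1}$, if $\sigma_0 = 1$, so that $(\tilde{\bf L}^+)^{-1}\tilde{\bf u}\cdot\tilde{\bf u} < 0$ and $\bigl(-(\tilde{\bf L}^+)^{-1}\bigr)\tilde{\bf u}\cdot\tilde{\bf u} > 0$, then $\Pi_0\bigl(-(\tilde{\bf L}^+)^{-1}\bigr)\Pi_0$ has $N-K$ negative and $K-1$ positive eigenvalues; if $\sigma_0 = 0$, it has $N-K-1$ negative and $K$ positive eigenvalues. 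Restoring the simple zero eigenvalue, the truncated problem (\ref{Gener-matrix}) has $N-K-(1-\sigma_0)$ negative eigenvalues, a simple zero eigenvalue, and $K-\sigma_0$ positive eigenvalues, which is the claim.

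I do not expect a serious obstacle, since the argument is a mirror image of that of Proposition \ref{prop-3}; the genuinely new input is only the observation that full sign alternation forces $N_0(\tilde{\bf u}) = N-1$, which makes $-\tilde{\bf L}^-$ rather than $\tilde{\bf L}^-$ the semi-definite operator. The one point worth an explicit remark is that $\Pi_0\bigl(-(\tilde{\bf L}^+)^{-1}\bigr)\Pi_0$ is non-degenerate on $\tilde{\bf u}^{\perp}$, so the inertia count is exhaustive over all $N-1$ eigenvalues $\tilde{\lambda}$; this is the standard interlacing fact used implicitly in Proposition \ref{prop-3}, namely that if a symmetric invertible matrix $B$ satisfies $B\tilde{\bf u}\cdot\tilde{\bf u}\ne 0$, then its restriction to $\tilde{\bf u}^{\perp}$ is again invertible, with inertia equal to that of $B$ with one eigenvalue of sign ${\rm sgn}(B\tilde{\bf u}\cdot\tilde{\bf u})$ removed.
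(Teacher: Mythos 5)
Your argument follows the paper's own proof of Proposition \ref{prop-4} essentially verbatim: the paper likewise notes $N_0(\tilde{\bf u})=N-1$, invokes Lemma \ref{theorem-2} to get a negative semi-definite $\tilde{\bf L}^-$ with kernel ${\rm span}(\tilde{\bf u})$, uses Fredholm's theorem for simplicity of the zero eigenvalue, reduces to the projected problem (\ref{Gener-projected}) with the signs flipped, and reads off the signs of the remaining $N-1$ eigenvalues $\tilde{\lambda}$ from the inertia of $-\Pi_0(\tilde{\bf L}^+)^{-1}\Pi_0$ by Sylvester, ``by the same argument as in Proposition \ref{prop-3}''. Your final counts agree with the statement.

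There is, however, a genuine flaw in the one auxiliary fact you state explicitly: restricting a nondegenerate symmetric matrix $B$ to $\tilde{\bf u}^{\perp}$ deletes one eigenvalue whose sign is ${\rm sgn}(B^{-1}\tilde{\bf u}\cdot\tilde{\bf u})$, \emph{not} ${\rm sgn}(B\tilde{\bf u}\cdot\tilde{\bf u})$; this follows from the $B$-orthogonal splitting $\mathbb{R}^N=\tilde{\bf u}^{\perp}\oplus{\rm span}(B^{-1}\tilde{\bf u})$, and a counterexample to your version is $B={\rm diag}(2,-1)$, $\tilde{\bf u}=(1,1)^T$, where $B\tilde{\bf u}\cdot\tilde{\bf u}=1>0$ but the restricted form is positive definite, so the deleted eigenvalue is negative, consistent with $B^{-1}\tilde{\bf u}\cdot\tilde{\bf u}=-1/2<0$. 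Applied correctly to $B=-(\tilde{\bf L}^+)^{-1}$, your rule would produce a criterion in terms of ${\rm sgn}(\tilde{\bf L}^+\tilde{\bf u}\cdot\tilde{\bf u})$, which is not $\sigma_0$ and can differ in sign from $(\tilde{\bf L}^+)^{-1}\tilde{\bf u}\cdot\tilde{\bf u}$. The repair is to notice that the compression $\Pi_0(\tilde{\bf L}^+)^{-1}\Pi_0$ is not the right reduced operator: for $\tilde{\lambda}\neq 0$ the eigenvectors of (\ref{Gener-matrix}) satisfy $(\tilde{\bf L}^+)^{-1}\tilde{\bf w}\perp\tilde{\bf u}$ rather than $\tilde{\bf w}\perp\tilde{\bf u}$, and eliminating the $\tilde{\bf u}$-component properly (or symmetrizing with $M=(-\tilde{\bf L}^-)^{1/2}$, whose range is $\tilde{\bf u}^{\perp}$, so that the nonzero $\tilde{\lambda}$ are the nonzero eigenvalues of the symmetric matrix $-M\tilde{\bf L}^+M$) shows that the relevant inertia is that of $\tilde{\bf L}^+$ itself restricted to $\tilde{\bf u}^{\perp}$, equivalently of the Schur complement $(\Pi_0\tilde{\bf L}^+\Pi_0)^{-1}$ on $\tilde{\bf u}^{\perp}$. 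For that restriction the deleted-eigenvalue rule is governed precisely by ${\rm sgn}\bigl((\tilde{\bf L}^+)^{-1}\tilde{\bf u}\cdot\tilde{\bf u}\bigr)$, i.e.\ by $\sigma_0$, and your counts $N-K-(1-\sigma_0)$ negative, a simple zero, and $K-\sigma_0$ positive eigenvalues are recovered; the paper's own wording of the reduction in Propositions \ref{prop-3}--\ref{prop-4} glosses over the same point, but your write-up commits to the incorrect version of the lemma, so as it stands the justification of the final count does not go through without this correction.
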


\begin{proof}
	By Lemma \ref{theorem-2}, the matrix $\tilde{\bf L}^-$ has a simple zero eigenvalue with the eigenvector $\tilde{\bf u}$ and $(N-1)$ negative eigenvalues since $N_0(\tilde{\bf u}) = N-1$. By Fredholm's theorem, the zero eigenvalue in the generalized eigenvalue problem (\ref{Gener-matrix}) is simple if 
	$$
	(\tilde{\bf L}^+)^{-1} \tilde{\bf u} \cdot \tilde{\bf u} \neq 0
	$$ 
	By using the same reduction to (\ref{Gener-projected}), 
	we obtain a strictly negative $\Pi_0 \tilde{\bf L}^- \Pi_0$ so that 
	Sylvester's inertial theorem counts the number of remaining $(N-1)$ negative and positive eigenvalues $\tilde{\lambda}$ in (\ref{Gener-projected}) from 
	those of $-\Pi_0 (\tilde{\bf L}^+)^{-1} \Pi_0$. By using the same 
	argument as in the proof of Proposition \ref{prop-3}, this yields the result.
\end{proof}

\begin{remark}
	The spectrally stable codes correspond to $N = K + 1 - \sigma_0$, which can be realized in two different ways. Either no symbols $A_{\pm}$ are present if $\sigma_0 = 1$, in which case the assertion follows by Proposition \ref{prop-2}, or only one symbol $A_{\pm}$ is present if $\sigma_0 = 0$. In either case, the quadratic forms 
	$\tilde{\bf L}^- \tilde{\bf w}  \cdot \tilde{\bf w} $ and  $(\tilde{\bf L}^+)^{-1} \tilde{\bf w}  \cdot \tilde{\bf w} $ are strictly negative for a positive eigenvalue $\tilde{\lambda}$ with the eigenvector $\tilde{\bf w} \neq {\bf 0}$. Hence, the positive eigenvalues of (\ref{Gener-projected}) are semi-simple and persist as real positive eigenvalues of (\ref{Gener}). This yields the spectral stability of the corresponding ILMs. All other codes which include only sign-alternating symbols $a_{\pm}$ and $A_{\pm}$ are spectrally unstable. 
\end{remark}

\subsection{Stability results based on the negative index theory}
\label{Sec:NegativeIndex}

To give a general count of unstable eigenvalues in the generalized eigenvalue problem (\ref{Gener}) for small $\varepsilon > 0$, let us first clarify 
the quantity $ (\tilde{\bf L}^+)^{-1} \tilde{\bf u} \cdot \tilde{\bf u} $ 
which appears in Propositions \ref{prop-3} and \ref{prop-4}. By using 
the definition $\tilde{D}_n^+$ in (\ref{Eq:D_n^+}), we derive 
\begin{equation}
\label{neg-3}
(\tilde{\bf L}^+)^{-1} \tilde{\bf u} \cdot \tilde{\bf u} = \frac{K a^2}{f'(a)} + \frac{(N-K) A^2}{f'(A)},
\end{equation}
where $f'(a) = 1 - p a^{p-1} + \gamma q a^{q-1} < 0$ and $f'(A) = 1 - p A^{p-1} + \gamma q A^{q-1} > 0$ by Lemma \ref{lem-1}. By the continuity of ${\bf u}$ with respect to $\varepsilon > 0$ in Proposition \ref{prop-existence}, we also 
get 
\begin{equation}
\label{neg-4}
\lim_{\varepsilon \to 0} \langle (\bf{L}^+_{\varepsilon})^{-1} {\bf u}, {\bf u} \rangle = (\tilde{\bf L}^+)^{-1} \tilde{\bf u} \cdot \tilde{\bf u}, 
\end{equation}
where $\langle \cdot, \cdot \rangle$ denotes the standard inner product in $\ell^2(\mathbb{Z})$. 

We recall that the ILM with the spatial profile ${\bf u}$ is a critical point 
of the augmented energy functional $\Lambda({\bf u}) = H({\bf u}) + Q({\bf u})$ in (\ref{aug-energy}), where the energy $H({\bf u})$ and the mass $Q({\bf u})$ are given by (\ref{energy}) and (\ref{mass}). It is straightforward to verify that the critical points ${\bf u} \in \ell^2(\mathbb{Z})$ of the augmented energy $\Lambda$ satisfy the difference equation (\ref{Eq:u_gen_pq}). If ${\bf u}$ is real-valued, the quadratic form of $\Lambda$ at ${\bf u}$ with the perturbation ${\bf v} + i {\bf w}$ for real-valued ${\bf v}, {\bf w} \in \ell^2(\mathbb{Z})$ is given by the sum of two quadratic forms 
\begin{equation*}
\Lambda({\bf u} + {\bf v} + i {\bf w}) - \Lambda({\bf u}) = \langle {\bf L}^+_{\varepsilon} {\bf v}, {\bf v} \rangle + \langle {\bf L}^-_{\varepsilon} {\bf w}, {\bf w} \rangle + \mathcal{O}(\| {\bf v} + i {\bf w} \|^3_{\ell^2}),
\end{equation*}
where ${\bf L}^{\pm}_{\varepsilon}$ are the same operators as in the spectral stability problem (\ref{spectral-stab}). By using the stability theory in Hamiltonian systems (see Theorems 1.8 and 3.2 in \cite{GP25}), we give a general count of unstable eigenvalues 
in the generalized eigenvalue problem (\ref{Gener}). 

\begin{prop}
	\label{theorem-count}
	Assume that ${\rm Ker}({\bf L}^+_{\varepsilon}) = \{ 0 \}$, 
	${\rm Ker}({\bf L}^-_{\varepsilon}) = {\rm span}\{{\bf u}\}$, 
$\langle ({\bf L}_{\varepsilon}^+)^{-1} {\bf u}, {\bf u} \rangle \neq 0$, and define 
$$
\sigma = \left\{ \begin{array}{ll} 1, \quad & \mbox{\rm if} \quad \langle ({\bf L}_{\varepsilon}^+)^{-1} {\bf u}, {\bf u} \rangle  < 0, \\
0, \quad & \mbox{\rm if} \quad  \langle ({\bf L}_{\varepsilon}^+)^{-1} {\bf u}, {\bf u} \rangle  > 0. \end{array} \right.
$$ 
Referring to the generalized eigenvalue problem (\ref{Gener}), define the number $N_c$ of complex eigenvalues $\lambda$  with ${\rm Im}(\lambda) > 0$, the number $N_r^{+}$ ($N_r^-$) of real negative eigenvalues $\lambda < 0$ with eigenvectors ${\bf w} \in \ell^2(\mathbb{Z})$ having positive (negative) values of $\langle ({\bf L}_{\varepsilon}^+)^{-1} {\bf w}, {\bf w} \rangle$, and the number $N_i^{+}$ ($N_i^-$) of real positive eigenvalues $\lambda > 0$ with eigenvectors ${\bf w} \in \ell^2(\mathbb{Z})$ having positive (negative) values of $\langle ({\bf L}_{\varepsilon}^+)^{-1} {\bf w}, {\bf w} \rangle$. 
The numbers of eigenvalues satisfy the following completeness relations:
	\begin{align}
	\label{count-1}
	N_c + N_r^- + N_i^- &= n({\bf L}_{\varepsilon}^+) - \sigma, \\
	\label{count-2}
	N_c + N_r^+ + N_i^- &= n({\bf L}_{\varepsilon}^-),
	\end{align}
where $n({\bf L}_{\varepsilon}^{\pm})$ denotes the number of negative eigenvalues of ${\bf L}_{\varepsilon}^{\pm}$ in $\ell^2(\mathbb{Z})$.
\end{prop}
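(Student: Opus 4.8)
The plan is to recast the generalized eigenvalue problem (\ref{Gener}) as a regular self‑adjoint operator pencil on $\ell^2(\mathbb{Z})$ and to read off the closure relations (\ref{count-1})--(\ref{count-2}) from the negative‑index (Pontryagin--space) counting theory, that is, from Theorems 1.8 and 3.2 in \cite{GP25}. First I would record the structural facts needed to apply that theory. Since $\varepsilon>0$ is small, the proof of Lemma \ref{theorem-1} shows that ${\bf L}^+_{\varepsilon}$ is a bounded self‑adjoint operator whose spectrum consists of $n({\bf L}^+_{\varepsilon})<\infty$ negative eigenvalues of order $\mathcal{O}(\varepsilon)$ together with the rest of the spectrum located near $1$; hence, under the hypothesis ${\rm Ker}({\bf L}^+_{\varepsilon})=\{0\}$, the operator $({\bf L}^+_{\varepsilon})^{-1}$ is bounded and self‑adjoint with exactly $n({\bf L}^+_{\varepsilon})$ negative eigenvalues and with strictly positive spectrum bounded away from zero. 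By Lemma \ref{theorem-2}, ${\bf L}^-_{\varepsilon}$ is bounded self‑adjoint with ${\rm Ker}({\bf L}^-_{\varepsilon})={\rm span}\{{\bf u}\}$, with $n({\bf L}^-_{\varepsilon})<\infty$ negative eigenvalues, and with the positive part of its spectrum bounded away from zero by $C_0\varepsilon$. Because the essential spectra of ${\bf L}^{\pm}_{\varepsilon}$ sit near $1$, the indefinite inner products $\langle ({\bf L}^+_{\varepsilon})^{-1}\cdot,\cdot\rangle$ and $\langle {\bf L}^-_{\varepsilon}\cdot,\cdot\rangle$ have finite‑dimensional negative subspaces, so the Pontryagin--space machinery applies and one may work, after trivial reduction, on the finite‑dimensional non‑positive part.

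Next I would dispose of the zero eigenvalue and reduce to a regular pencil. From ${\bf L}^-_{\varepsilon}{\bf u}={\bf 0}$ the value $\lambda=0$ is an eigenvalue of (\ref{Gener}) with one‑dimensional eigenspace ${\rm span}\{{\bf u}\}$, and by Fredholm's alternative its algebraic multiplicity equals one exactly because $\langle ({\bf L}^+_{\varepsilon})^{-1}{\bf u},{\bf u}\rangle\neq0$. Pairing (\ref{Gener}) with ${\bf u}$ and using ${\bf L}^-_{\varepsilon}{\bf u}={\bf 0}$ shows that every eigenvector ${\bf w}$ with $\lambda\neq0$ satisfies $\langle {\bf w},({\bf L}^+_{\varepsilon})^{-1}{\bf u}\rangle=0$; let $X_c=\{({\bf L}^+_{\varepsilon})^{-1}{\bf u}\}^{\perp}$ denote this constrained subspace of codimension one. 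A standard lemma on the negative index of a quadratic form restricted to the orthogonal complement of a vector — the one underlying the Vakhitov--Kolokolov criterion — gives that $\langle ({\bf L}^+_{\varepsilon})^{-1}\cdot,\cdot\rangle|_{X_c}$ has $n({\bf L}^+_{\varepsilon})-\sigma$ negative directions, the drop being controlled by the sign of $\langle ({\bf L}^+_{\varepsilon})^{-1}{\bf u},{\bf u}\rangle$; and a short interlacing argument (using that $\Pi_{X_c}{\bf L}^-_{\varepsilon}|_{X_c}$ has trivial kernel, which follows again from $\langle ({\bf L}^+_{\varepsilon})^{-1}{\bf u},{\bf u}\rangle\neq0$) gives that $\langle {\bf L}^-_{\varepsilon}\cdot,\cdot\rangle|_{X_c}$ is nondegenerate with exactly $n({\bf L}^-_{\varepsilon})$ negative directions, no shift occurring. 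This asymmetry — a $-\sigma$ correction on the ${\bf L}^+_{\varepsilon}$--side and none on the ${\bf L}^-_{\varepsilon}$--side — is precisely what produces the two different right‑hand sides of (\ref{count-1}) and (\ref{count-2}).

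Finally I would apply the counting theorem. On $X_c$ the problem (\ref{Gener}) is a regular self‑adjoint pencil ${\bf L}^-_{\varepsilon}{\bf w}=\lambda({\bf L}^+_{\varepsilon})^{-1}{\bf w}$ with no spectrum at $\lambda=0$, and Theorems 1.8 and 3.2 in \cite{GP25} identify the negative index of each of the two forms with a weighted count of its eigenvalues. For a real eigenvalue $\lambda$ with eigenvector ${\bf w}$ one has $\langle {\bf L}^-_{\varepsilon}{\bf w},{\bf w}\rangle=\lambda\langle ({\bf L}^+_{\varepsilon})^{-1}{\bf w},{\bf w}\rangle$, so the eigenvectors contributing a negative direction to $\langle ({\bf L}^+_{\varepsilon})^{-1}\cdot,\cdot\rangle|_{X_c}$ are those with $\lambda<0$ and negative signature ($N_r^-$), those with $\lambda>0$ and negative signature ($N_i^-$), together with one direction per complex quadruplet ($N_c$); equating with $n({\bf L}^+_{\varepsilon})-\sigma$ gives (\ref{count-1}). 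Likewise, the eigenvectors contributing a negative direction to $\langle {\bf L}^-_{\varepsilon}\cdot,\cdot\rangle|_{X_c}$ are those with $\lambda<0$ and positive signature ($N_r^+$), those with $\lambda>0$ and negative signature ($N_i^-$), together with one direction per complex quadruplet ($N_c$); equating with $n({\bf L}^-_{\varepsilon})$ gives (\ref{count-2}).

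The main obstacle is the bookkeeping around $\lambda=0$: one must confirm that the $2\times 2$ Jordan block at $\omega=0$ of the original problem (\ref{spectral-stab}) collapses to a simple eigenvalue $\lambda=0$ of (\ref{Gener}) under the hypothesis $\langle ({\bf L}^+_{\varepsilon})^{-1}{\bf u},{\bf u}\rangle\neq0$, and that restricting $\langle ({\bf L}^+_{\varepsilon})^{-1}\cdot,\cdot\rangle$ to $X_c$ removes exactly $\sigma$ (and no further) negative directions while restricting $\langle {\bf L}^-_{\varepsilon}\cdot,\cdot\rangle$ removes none; once these inertia computations are pinned down and the spectral gaps of ${\bf L}^{\pm}_{\varepsilon}$ at zero are invoked to keep all negative subspaces finite‑dimensional, the abstract theorems apply verbatim and the completeness relations follow.
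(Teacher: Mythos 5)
Your proposal is correct and takes essentially the same route as the paper: the paper gives no separate proof of Proposition \ref{theorem-count} beyond invoking the negative index (Hamiltonian--Krein) counting theory of Theorems 1.8 and 3.2 in \cite{GP25}, which is exactly the machinery you apply. Your supporting bookkeeping --- simplicity of $\lambda=0$ from $\langle ({\bf L}^+_{\varepsilon})^{-1}{\bf u},{\bf u}\rangle\neq 0$, and the restriction to $X_c=\{({\bf L}^+_{\varepsilon})^{-1}{\bf u}\}^{\perp}$ removing exactly $\sigma$ negative directions of the $({\bf L}^+_{\varepsilon})^{-1}$-form and none of the ${\bf L}^-_{\varepsilon}$-form --- is precisely how the right-hand sides $n({\bf L}^+_{\varepsilon})-\sigma$ in (\ref{count-1}) and $n({\bf L}^-_{\varepsilon})$ in (\ref{count-2}) arise.
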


The following remarks clarify the statement of Proposition \ref{theorem-count}. 

\begin{remark}
	The assumptions of Proposition \ref{theorem-count} are satisfied 
	for small $\varepsilon > 0$ due to invertibility of $\tilde{\bf L}^+$, 
	a simple zero eigenvalue of $\tilde{\bf L}^-$, and the continuity result (\ref{neg-4}) if $\langle (\tilde{\bf L}^+)^{-1} \tilde{\bf u}, \tilde{\bf u} \rangle \neq 0$. Hence, $\sigma = \sigma_0$, where $\sigma_0$ appears in Propositions \ref{prop-3} and \ref{prop-4}.
\end{remark}

\begin{remark}
	If $n({\bf L}_{\varepsilon}^+) = n({\bf L}_{\varepsilon}^-) = 0$, 
	then necessarily $\sigma_0 = 0$ and the ILM is a local minimizer of the augmented energy $\Lambda$ which is degenerate only due to the rotational symmetry represented by ${\rm Ker}({\bf L}^-_{\varepsilon}) = {\rm span}\{{\bf u}\}$. Since $N_c = N_r^{\pm} = N_i^- = 0$ follow from (\ref{count-1})--(\ref{count-2}), the local minimizer is spectrally stable. It is also nonlinearly (orbitally) stable (see Theorem 2.3 in \cite{GP25}).
\end{remark}

\begin{remark}
	If $n({\bf L}_{\varepsilon}^+) = 1$, $n({\bf L}_{\varepsilon}^-) = 0$, 
	then the ILM is a saddle point of the augmented energy $\Lambda$. 
	If $\sigma = 1$, it is a local minimizer of energy $H$ for fixed mass $Q$, which is degenerate only due to the rotational symmetry represented by ${\rm Ker}({\bf L}^-_{\varepsilon}) = {\rm span}\{{\bf u}\}$. In this context, the quantity $\langle ({\bf L}_{\varepsilon}^+)^{-1} {\bf u}, {\bf u} \rangle$ is related to the Vakhitov--Kolokolov slope condition 
	\begin{equation*}
	\langle ({\bf L}_{\varepsilon}^+)^{-1} {\bf u}, {\bf u} \rangle = - \frac{1}{2}\frac{d}{d\omega} \| {\bf u}(\omega) \|^2_{\ell^2} |_{\omega = 1},
	\end{equation*}
	where ${\bf u}(\omega) \in \ell^2(\mathbb{Z})$ is a solution of the difference equation augmented with the parameter $\omega \in \mathbb{R}$: 
	\begin{equation}
	\varepsilon \Delta_2 u_n(\omega) - \omega u_n(\omega) + |u_n(\omega) |^{p-1} u_n(\omega) - \gamma |u_n(\omega)|^{q-1} u_n(\omega) = 0.
	\label{second-order}
	\end{equation}
	By taking derivative of (\ref{second-order}) with respect to $\omega$ and evaluating it at $\omega = 1$, we obtain 
	\begin{equation*}
	{\bf L}_{\varepsilon}^+ {\bf u}'(\omega) |_{\omega = 1} = -{\bf u}, \quad \Rightarrow \quad {\bf u}'(\omega) |_{\omega = 1} = -  ({\bf L}_{\varepsilon}^+)^{-1} {\bf u},
	\end{equation*}
since ${\bf L}_{\varepsilon}^+$ is invertible. If 
$$
\frac{d}{d\omega} \| {\bf u}(\omega) \|^2_{\ell^2} |_{\omega = 1} > 0,
$$ 
then ${\bf u}$ is  local constrained minimizer of energy, and since $N_c = N_r^{\pm} = N_i^- = 0$ follow from (\ref{count-1})--(\ref{count-2}), the local constrained minimizer is spectrally stable. It is also nonlinearly (orbitally) stable (see Theorems 2.8 and 3.3 in \cite{GP25}).
\end{remark}

\begin{remark}
	Under the assumptions of Proposition \ref{theorem-count}, the zero 
	eigevalue of the generalized eigenvalue problem (\ref{Gener}) is simple. 
	The eigenvalues in $N_c$, $N_r^{\pm}$, $N_i^{\pm}$ could be multiple, in which case the positive (negative) values of $\langle ({\bf L}_{\varepsilon}^+)^{-1} {\bf w}, {\bf w} \rangle$ are defined on the invariant subspaces of $\ell^2(\mathbb{Z})$ related to multiple eigenvalues. 
	The case with $\langle ({\bf L}_{\varepsilon}^+)^{-1} {\bf u}, {\bf u} \rangle = 0$ is degenerate with the zero eigenvalue of (\ref{Gener}) being at least double. 
\end{remark}

As an application of Proposition \ref{theorem-count}, we consider spectral stability of the simplest ILMs with the shortest codes.

\begin{example}
	\label{ex-stable-1}
If $N = 1$, there are two such ILMs, $(A_+)$ and $(a_+)$, up to the sign reflection. 
	\begin{itemize}
		\item  For the code $(A_+)$, we have $K = 0$ and by (\ref{neg-3}), $\sigma_0 = 0$. Since $n({\bf L}^+_\varepsilon) = n({\bf L}^-_\varepsilon) = 0$, then $N_c = N_r^\pm = N_i^- = 0$ follow from (\ref{count-1})--(\ref{count-2}). Therefore  $(A_+)$ is spectrally stable (a local minimizer of $\Lambda$).
		\item  For the code $(a_+)$, we have $K =1$ and by (\ref{neg-3}), $\sigma_0 = 1$. Also $n({\bf L}^+_\varepsilon)=1$,  $n({\bf L}^-_\varepsilon) = 0$, then $N_c=N_r^\pm =N_i^-=0$ follow from (\ref{count-1})--(\ref{count-2}). This implies that  $(a_+)$ is spectrally stable (a local constrained minimizer of $H$ for fixed $Q$).
	\end{itemize}
\end{example}

\begin{remark}
	The stability of the two ILMs with codes $(A_+)$ and $(a_+)$ takes place for any $p,q\in \mathbb{N}$ and $2 \leq p < q$ and any $\gamma \in (0,\gamma_{{p,q}})$. See the blue color branches for codes $(A)$ and $(a)$ for small $\varepsilon > 0$ in Figure \ref{Fig:Bifurcations} below.
\end{remark}

\begin{example}
		\label{ex-stable-2}
	If $N = 2$, there are six irreducible ILMs (see Example \ref{ex-2}). We have definite results for only four ILMs.
\begin{itemize}
		\item For the code $(A_+A_+)$, we have $K = 0$ and by (\ref{neg-3}), $\sigma_0 = 0$. Then, $n({\bf L}^+_\varepsilon) = n({\bf L}^-_\varepsilon) = 0$ implies $N_c = N_r^{\pm} = N_i^- = 0$ by (\ref{count-1})--(\ref{count-2}). Hence, $(A_+A_+)$ is spectrally  stable (a local minimizer of $\Lambda$).
		
		\item For the code $(A_+A_-)$, we have $K = 0$ and by (\ref{neg-3}), $\sigma_0 = 0$. Then, $n({\bf L}^+_\varepsilon) = 0$ implies $N_c = N_r^- = N_i^- = 0$ from (\ref{count-1}) and $n({\bf L}^-_\varepsilon) = 1$ implies $N_r^+ = 1$ from (\ref{count-2}). Hence, $(A_+A_-)$ is spectrally unstable. 
		
		\item For the code $(a_+a_+)$, we have $K = 2$ and by (\ref{neg-3}), $\sigma_0 = 1$. Then, $n({\bf L}^+_\varepsilon) - \sigma_0 = 1$ 
		implies $N_r^- = 1$ from (\ref{count-1}) and $n({\bf L}^-_\varepsilon) = 0$ implies  $N_c = N_r^+ = N_i^- = 0$ from (\ref{count-2}).  Hence, $(a_+a_+)$ is spectrally unstable. 
		
		\item For the code $(a_+a_-)$, we have $K = 2$ and by (\ref{neg-3}), $\sigma_0 = 1$. Then, $n({\bf L}^+_\varepsilon) - \sigma_0 = n({\bf L}^-_\varepsilon) = 1$ imply $N_i^- = 1$ and $N_c = N_r^{\pm} = 0$ from (\ref{count-1})--(\ref{count-2}). The other possible cases 
		$N_c = 1$, $N_r^{\pm} = N_0^- = 0$ or $N_r^+ = N_r^- = 1$, $N_c = N_0^- = 0$ in (\ref{count-1})--(\ref{count-2}) are excluded since they require two nonzero eigenvalues in the truncated generalized eigenvalue problem (\ref{Gener-matrix}), whereas we only have one nonzero and one zero eigenvalue for $N = 2$. Hence, 
		$(a_+a_-)$ is spectrally stable (but it is not a constrained minimizer of $H$ for fixed $Q$).
\end{itemize} 
\end{example}
\begin{remark}
	Stability or instability of the ILMs listed above takes place  for any $p,q\in \mathbb{N}$ and $2 \leq p < q$ and any $\gamma \in (0,\gamma_{{p,q}})$. See the blue color branch for code $(AA)$ and the red color branch for code $(aa)$ for small $\varepsilon > 0$ in Figure \ref{Fig:Bifurcations} below.
\end{remark}

For the two remaining ILMs with $N = 2$ with codes $(a_+A_+)$ and $(a_+A_-)$, we have $K = 1$ and, hence, we need to compute the quantity in (\ref{neg-3}), that is, 
$$
(\tilde{\bf L}^+)^{-1} \tilde{\bf u} \cdot \tilde{\bf u} = \frac{a^2}{f'(a)} + \frac{A^2}{f'(A)}.
$$
The following lemma shows that the quantity is exactly zero for $(p,q) = (3,5)$, as a part of a more general statement. 

\begin{lemma}
	\label{lem-3-5}
	If $N = 2K$ and $(p,q) = (3,5)$, then $(\tilde{\bf L}^+)^{-1} \tilde{\bf u} \cdot \tilde{\bf u} = 0$ for any $\gamma \in (0,\gamma_{{3,5}})$.
\end{lemma}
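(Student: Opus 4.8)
The plan is to compute $(\tilde{\bf L}^+)^{-1}\tilde{\bf u}\cdot\tilde{\bf u}$ explicitly using formula (\ref{neg-3}) and show it vanishes when $(p,q)=(3,5)$ and $N=2K$. By (\ref{neg-3}) with $N=2K$ we have
\begin{equation*}
(\tilde{\bf L}^+)^{-1}\tilde{\bf u}\cdot\tilde{\bf u} = \frac{Ka^2}{f'(a)} + \frac{Ka^2}{f'(A)} \cdot \frac{N-K}{K}\Big|_{N=2K} = K\left(\frac{a^2}{f'(a)} + \frac{A^2}{f'(A)}\right),
\end{equation*}
so it suffices to prove $\dfrac{a^2}{f'(a)} + \dfrac{A^2}{f'(A)} = 0$, i.e. $a^2 f'(A) + A^2 f'(a) = 0$, for $(p,q)=(3,5)$ and any $\gamma\in(0,\gamma_{3,5})$.

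First I would record the defining relations for $a$ and $A$. Setting $s = a^2$ and $S = A^2$, the equation $f(u)=0$ for $u>0$ becomes, after dividing by $u$, the quadratic $\gamma u^4 - u^2 + 1 = 0$, so $s$ and $S$ are the two roots of $\gamma z^2 - z + 1 = 0$. Hence $s + S = 1/\gamma$ and $sS = 1/\gamma$, which already gives the clean identity $s + S = sS$. Next I would express $f'$ in these variables: $f'(u) = 1 - 3u^2 + 5\gamma u^4$. Using $\gamma u^4 = u^2 - 1$ at a root (from the quadratic above), we get $f'(u) = 1 - 3u^2 + 5(u^2-1) = 2u^2 - 4 = 2(u^2 - 2)$ at $u=a$ or $u=A$. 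Therefore $f'(a) = 2(s-2)$ and $f'(A) = 2(S-2)$.

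The remaining step is a short algebraic check. We need $s\cdot 2(S-2) + S\cdot 2(s-2) = 0$, i.e. $2sS - 4s + 2sS - 4S = 0$, i.e. $4sS - 4(s+S) = 0$, i.e. $sS = s+S$ — which is exactly the identity derived above from Vieta's formulas. This completes the proof. I would also note in passing that $s,S>0$ and $s\neq S$ for $\gamma\in(0,\gamma_{3,5})$ (guaranteeing $a,A$ are genuine distinct positive roots with $0<a<A$, consistent with Lemma \ref{lem-1}), and that $f'(a)<0$, $f'(A)>0$ forces $s<2<S$, so neither denominator vanishes.

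There is essentially no serious obstacle here: the computation is elementary once one reduces $f'$ at the roots using the defining quadratic and invokes Vieta's relations. The only thing to be careful about is the bookkeeping of the ``more general statement'' — the claim is stated for $N=2K$ (equal numbers of $\pm a$-type and $\pm A$-type symbols, since $K$ counts the $\pm a$ symbols and $N-K=K$ counts the $\pm A$ symbols), and the vanishing is a genuine degeneracy special to the exponent pair $(p,q)=(3,5)$: for other pairs the reduction $f'(u) = $ (affine in $u^2$) either fails or produces a different constant, so the cancellation with Vieta's identity $\frac{1}{\gamma}(s+S \text{ vs } sS)$ no longer occurs.
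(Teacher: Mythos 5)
Your proof is correct and takes essentially the same route as the paper's: both evaluate $f'$ at the roots via the root equation $\gamma u^4 = u^2-1$ and then use the relation $\gamma(a^2+A^2)=1$ (your Vieta identity $s+S=sS$ for the quadratic $\gamma z^2-z+1=0$; the paper gets it by subtracting the two root equations) to force the cancellation in formula (\ref{neg-3}) with $N-K=K$. Only a harmless typo: in your first display the second term should read $\frac{(N-K)A^2}{f'(A)}$, not $\frac{Ka^2}{f'(A)}\cdot\frac{N-K}{K}$, though your final expression $K\bigl(\frac{a^2}{f'(a)}+\frac{A^2}{f'(A)}\bigr)$ is the right one.
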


\begin{proof}
We use (\ref{neg-3}) and write it explicitly  as
\begin{align*}
(\tilde{\bf L}^+)^{-1} \tilde{\bf u} \cdot \tilde{\bf u} &= \frac{Ka^2}{f'(a)} + \frac{KA^2}{f'(A)} = \frac{Ka^2}{-2a^2 + 4 \gamma a^4} + \frac{KA^2}{-2A^2 + 4 \gamma A^4} \\
&= -\frac{K(1-\gamma(a^2 + A^2))}{(1-2\gamma a^2)(1-2 \gamma A^2)}.
\end{align*}
Since $a^2 - \gamma a^4 = 1$, $A^2 - \gamma A^4 = 1$, 
we have $(a^2 - A^2) (1 - \gamma(a^2 + A^4)) = 0$ and since $a \neq A$, 
we must have $1 = \gamma (a^2 + A^2)$ which implies  $(\tilde{\bf L}^+)^{-1} \tilde{\bf u} \cdot \tilde{\bf u} = 0$ in this case.
\end{proof}

Consistently with Lemma \ref{lem-3-5}, we checked numerically $	(\tilde{\bf L}^+)^{-1} \tilde{\bf u} \cdot \tilde{\bf u} > 0$ for $(p,q) = (2,3)$ and $(p,q) = (3,4)$ and $(\tilde{\bf L}^+)^{-1} \tilde{\bf u} \cdot \tilde{\bf u} < 0$ 
for $(p,q) = (3,6)$. The result holds for any $\gamma \in (0,\gamma_{{p,q}})$. 
Given this crucial information, we complete the study of stability of the two 
remaining ILMs for these cases. 

\begin{example}
	\label{ex-stable-3}
	For $(p,q) = (2,3)$ and $(p,q) = (3,4)$, we have  $	(\tilde{\bf L}^+)^{-1} \tilde{\bf u} \cdot \tilde{\bf u} > 0$ which implies $\sigma_0 = 0$.  
\begin{itemize}
	\item For the code $(a_+A_+)$, we have $n({\bf L}^-_\varepsilon) = 0$ and $n({\bf L}^+_\varepsilon) = 1$ so that the counts (\ref{count-1})--(\ref{count-2}) with $\sigma_0 = 0$ imply 
	$N_c = N_r^+ = N_i^- =0$  and $N_r^-  = 1$. Hence, $(a_+A_+)$ is spectrally unstable. See the red color branch for code $(aA)$ in Figure \ref{Fig:Bifurcations} below.

	\item For the code $(a_+A_-)$, we have  $n({\bf L}^-_\varepsilon) = 1$ and $n({\bf L}^+_\varepsilon) = 1$ so that the counts (\ref{count-1})--(\ref{count-2}) with $\sigma_0 = 0$ imply $N_c = N_r^- = N_r^+ = 0$ and $N_i^- = 1$ by the same reasoning as in the code $(a_+,a_-)$. Hence, $(a_+A_-)$ is spectrally stable  (but it is not a constrained minimizer of $H$ for fixed $Q$).
\end{itemize}
For $(p,q)=(3,6)$, we have  $	(\tilde{\bf L}^+)^{-1} \tilde{\bf u} \cdot \tilde{\bf u} < 0$ which implies $\sigma_0 = 1$. 
\begin{itemize}
	\item For the code $(a_+A_+)$, we have $n({\bf L}^-_\varepsilon) = 0$ and $n({\bf L}^+_\varepsilon) = 1$ so that the counts (\ref{count-1})--(\ref{count-2}) with $\sigma_0 = 1$ imply 
	$N_c = N_r^+ = N_r^- = N_i^- = 0$. Hence, $(a_+A_+)$ is spectrally stable (a local constrained minimizer of $H$ for fixed $Q$).
	\item For the code $(a_+A_-)$, we have  $n({\bf L}^-_\varepsilon) = 1$ and $n({\bf L}^+_\varepsilon) = 1$ so that the counts (\ref{count-1})--(\ref{count-2}) with $\sigma_0 = 1$ imply $N_c = N_r^- =\textcolor{black}{N_i^-} = 0$ and $N_r^+ = 1$. Hence, $(a_+A_-)$ is spectrally unstable.
\end{itemize}
\end{example}

\begin{remark}
Since $(\tilde{\bf L}^+)^{-1} \tilde{\bf u} \cdot \tilde{\bf u} = 0$ for $(p,q) = (3,5)$ by Lemma \ref{lem-3-5}, the zero eigenvalue in the truncated generalized eigenvalue problem (\ref{Gener-matrix}) 
is at least double and no conclusive information is available for 
the full problem (\ref{Gener}) for small $\varepsilon > 0$ without computing higher orders of the perturbation theory.
\end{remark}

\section{Numerical results on continuation of ILMs for $\varepsilon>0$}
\label{Sect:Continuation}

The $\varepsilon$-depended branches of ILMs are characterized by their codes $\mathcal{A}$ as $\varepsilon \to 0$, see Proposition \ref{prop-existence}. When 
$\varepsilon$ grows, these branches may undergo bifurcations. Two types of bifurcations are common in the parameter continuations of ILMs and these bifurcations have been studied in \cite{AKA25}: 
\begin{itemize}
	\item the fold bifurcation (merging of two branches not related to each other with symmetries) and 
	\item the pitchfork bifurcation (two branches related to each other with either ${\bf R}$ or $-{\bf R}$ symmetry are connected to a branch of solutions that are invariant with respect to the same symmetry). 
\end{itemize}

It was found in \cite{AKA25} that the global picture of bifurcations is qualitatively the same for the cases $(p,q)=(2,3)$, $(p,q)=(3,4)$ and $(p,q)=(3,5)$, at least for  ILMs with codes of length $N\leq 3$. For different values of $\gamma$, the same branch may undergo different bifurcations. Figure~\ref{Fig:Two_bifurcations} illustrates this fact by the case of  ILMs with code $(A_+a_-)$ for $(p,q)=(3,4)$. If $\gamma=0.12$ the  branch with this code merges at $\varepsilon\approx 0.105$ with  the branch with code $(a_+A_+a_-)$ (the fold bifurcation). However, if $\gamma=0.22$, the pair of ILMs with codes $(A_+a_-)$ and $(a_+A_-)$ that are related to each other by $-{\bf R}$-symmetry is connected at $\varepsilon\approx0.099$ to the branches of $-{\bf R}$-symmetric ILMs with codes $(A_+A_-)$ and $(a_+a_-)$ (the pitchfork bifurcation).

\begin{figure}[htb!]
	\centerline{\includegraphics[width=0.85\textwidth]{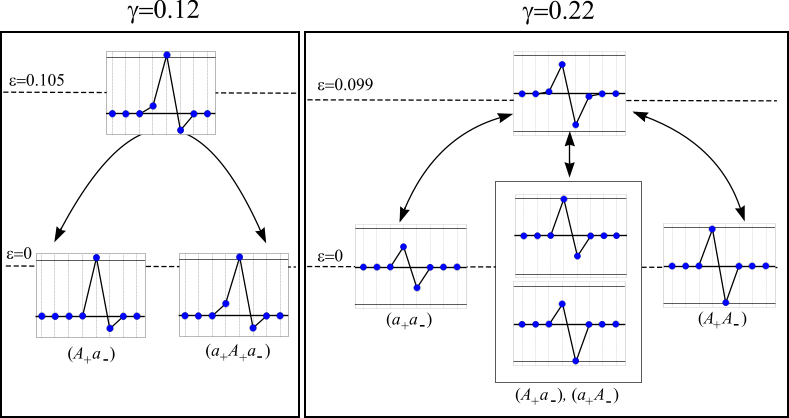}}
	\caption{Bifurcations of the branch of ILM with code $(A_+a_-)$, $(p,q)=(3,4)$. Left panel: $\gamma=0.12$. At $\varepsilon\approx 0,105$ this branch merges with  the branch with code $(a_+A_+a_-)$, the fold bifurcation.  Right panel:  $\gamma=0.22$.  At $\varepsilon\approx0.099$  the pair of ILMs with codes $(A_+a_-)$ and $(a_+A_-)$ is connected to the branches of $-R$-symmetric ILMs with codes $(A_+A_-)$ and $(a_+a_-)$, the pitchfork bifurcation. }
	\label{Fig:Two_bifurcations}
\end{figure}

\begin{figure}[htb!]
	\centerline{\includegraphics[width=0.9\textwidth]{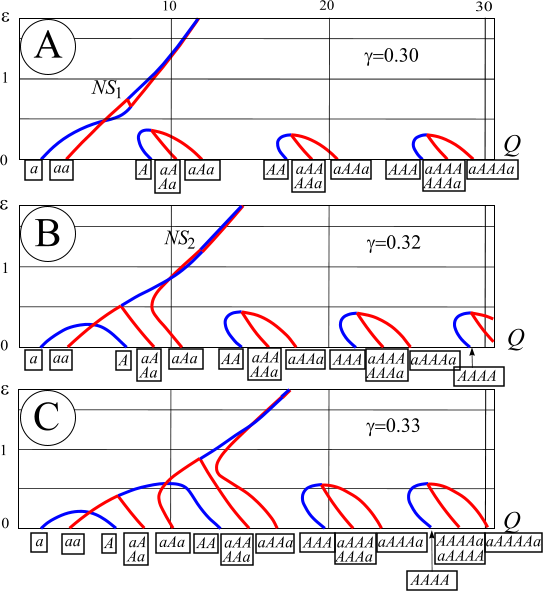}}
	\caption{Bifurcations of branches of ILMs for the case $(p,q) = (3,4)$: Panel A: $\gamma \approx 0.30$; panel B: $\gamma \approx0.32$; panel C:  $\gamma \approx0.33$. Values of $\varepsilon$ are shown versus 
		$Q({\bf u}) = \sum_{n \in \mathbb{Z}} u_n^2$.   Parts of the branches that correspond to stable (unstable) solutions are shown in blue (red).}
	\label{Fig:Bifurcations}
\end{figure}

Figure ~\ref{Fig:Bifurcations} shows the bifurcation diagram of  ILMs with short codes for the case $(p,q)=(3,4)$. Since onle positive solutions are considered,  the subscript ``$+$'' is omitted in the codes. One can see that the bifurcating counterparts depend on $\gamma$.  For instance, when $\gamma= 0.30$ the branch of ILM with code $(A_+)$ undergoes the fold bifurcation with the branch with code $(a_+A_+a_+)$ (see panel A), whereas  when $\gamma = 0.32$ and $\gamma = 0.33$ it undergoes the fold bifurcation with the branch with code $(a_+)$ (see panels B and C).  If $\gamma = 0.30$ the pair of branches of  ILMs, $(a_+A_+)$ and $(A_+a_+)$, related to each other by ${\bf R}$-symmetry,  undergoes the pitchfork bifurcation with the branch with code $(a_+A_+a_+)$ (see panel A). However, for $\gamma = 0.32$ and $\gamma = 0.33$ the same pair of branches undergoes the pitchfork bifurcation with the branch with code $(a_+a_+)$ (see panels B and C). 

For all the cases represented in Figure \ref{Fig:Bifurcations}, exactly two branches exist, that  extend to large values of $\varepsilon$ (called $\infty$-branches in \cite{AKA25}) . All other branches undergo bifurcations at finite values of $\varepsilon$. If $\gamma$ is far from the critical value $\gamma_{p,q}$ these two branches are the analogs of the Sievers-Takeno and the Page modes of the cubic DNLS equation (\ref{Eq:DNLS}), $(a_+)$ and $(a_+a_+)$, respectively. However, when $\gamma$ grows a cascade of switchings between the branches occurs that results in birth of new $\infty$-branches and death of old ones.  As a result, the length of the code of $\infty$-branches grows. They are $(a_+)$ and $(a_+a_+)$ at panel A, $(a_+a_+)$ and $(a_+A_+a_+)$ at panel B, and $(a_+A_+a_+)$ and $(a_+A_+A_+a_+)$ at panel C. We also show in panels (A) and (B) branches of asymmetric ILMs (denoted  $NS_1$ and $NS_2$ respectively), that disappear before the ACL. These branches of ILMs are typical for ``snaking'' phenomenon that was described for various cases of $(p,q)$,  \cite{TD10,CP11,Droplets02}.

Figure ~\ref{Fig:Bifurcations}  also provides insight into the stability or instability of the solution branches. One can see that the overall picture is rather complex, since even on the same branch the intervals of stability (blue) alternate with intervals of instability (red).  By this reason, we have restricted our study of stability of ILMs to the ACL.

\section{Numerical results on stable ILMs for small $\varepsilon > 0$}
\label{Sect:Total}

We employed numerical computations in order to identify all spectrally stable ILMs in the ACL. We computed eigenvalues of the truncated spectral problem 
(\ref{Gener-matrix}) with $0<\gamma<\gamma_{{p,q}}$. For better presentation, we denote $\gamma = \delta \gamma_{{p,q}}$ and consider $0<\delta<1$. Only irreducible codes of length $1 \leq N \leq 10$ have been studied. We consider the physically relevant cases $(p,q)=(2,3)$,  $(p,q)=(3,4)$ and $(p,q)=(3,5)$ but we also add data for $(p,q) = (3,6)$. Tables \ref{tab:2-3}, \ref{tab:3-4}, \ref{tab:3-5}, and \ref{tab:3-6} contain codes of spectrally stable ILMs for these four choices of $(p,q)$.

\begin{table}[ht]
\scriptsize
\centering
\begin{tabular}{|c|c|c|c|c|c|}
\toprule
 & \multicolumn{5}{c}{$\delta$,\quad\quad $\gamma=\delta\gamma_{{2,3}}$} \\
\cmidrule(lr){2-6}
$N$ & 0.2 &0.4 & 0.6 
& 0.96 & 0.996 
\\ \hline
2 & 
\begin{tabular}{c}
$\mathcal{A}_A,$  $\mathcal{A}_a$,\\  
$\mathcal{A}_{1,1}^-$ \\ 
\end{tabular}
&
\begin{tabular}{c}
$\mathcal{A}_A,$  $\mathcal{A}_a$,\\  
$\mathcal{A}_{1,1}^-$ \\ 
\end{tabular}
&
\begin{tabular}{c}
$\mathcal{A}_A,$  $\mathcal{A}_a$,\\ 
$\mathcal{A}_{1,1}^-$ \\ 
\end{tabular}
&
\begin{tabular}{c}
$\mathcal{A}_A,$ $\mathcal{A}_a$,\\  
$\mathcal{A}_{1,1}^-$ \\ 
\end{tabular}
&
\begin{tabular}{c}
$\mathcal{A}_A,$  $\mathcal{A}_a$,\\  
$\mathcal{A}_{1,1}^-$ \\ 
\end{tabular}
\\
\hline
3 
& 
\begin{tabular}{c}
$\mathcal{A}_A,$ $\mathcal{A}_a$, \\ 
$\mathcal{A}_{2,1}^-$  $\mathcal{A}_{1,2}^-$\\ 
{\tiny $(a_+A_-a_+)$}\\ 
\end{tabular}
& 
\begin{tabular}{c}
$\mathcal{A}_A,$ $\mathcal{A}_a$, \\ 
$\mathcal{A}_{2,1}^-$  $\mathcal{A}_{1,2}^-$\\ 
{\tiny $(a_+A_-a_+)$}\\ 
\end{tabular}
&
\begin{tabular}{c}
$\mathcal{A}_A,$ $\mathcal{A}_a$, \\ 
$\mathcal{A}_{2,1}^-$  $\mathcal{A}_{1,2}^-$\\ 
{\tiny $(a_+A_-a_+)$}\\ 
\end{tabular}
&
\begin{tabular}{c}
$\mathcal{A}_A,$ $\mathcal{A}_a$, \\ 
$\mathcal{A}_{2,1}^-$, $\mathcal{A}_{1,2}^+$\\ 
\end{tabular}
&
\begin{tabular}{c}
$\mathcal{A}_A,$ $\mathcal{A}_a$, \\ 
$\mathcal{A}_{2,1}^-$, $\mathcal{A}_{1,2}^+$\\ 
\end{tabular}
\\
\hline
4 
&
\begin{tabular}{c}
$\mathcal{A}_A,$ $\mathcal{A}_a$, \\ 
$\mathcal{A}_{1,3}^-$, $\mathcal{A}_{2,2}^-$,\\
{\tiny $(A_+a_-a_-A_+)$}\\ 
{\tiny $(a_+a_-A_+a_-)$}\\ 
{\tiny $(a_+A_-A_-a_+)$}\\ 
\end{tabular}
&
\begin{tabular}{c}
$\mathcal{A}_A,$ $\mathcal{A}_a$, \\ 
$\mathcal{A}_{1,3}^-$, $\mathcal{A}_{2,2}^-$,\\
{\tiny $(A_+a_-a_-A_+)$}\\ 
{\tiny $(a_+a_-A_+a_-)$}\\ 
{\tiny $(a_+A_-A_-a_+)$}\\ 
\end{tabular}
&
\begin{tabular}{c}
$\mathcal{A}_A,$ $\mathcal{A}_a$, \\ 
$\mathcal{A}_{1,3}^-$, $\mathcal{A}_{2,2}^-$,\\
{\tiny $(a_+a_-A_+a_-)$}\\ 
\end{tabular}
&
\begin{tabular}{c}
$\mathcal{A}_A,$ $\mathcal{A}_a$ \\ 
\end{tabular}
&
\begin{tabular}{c}
$\mathcal{A}_A,$ $\mathcal{A}_a$, \\ 
\end{tabular}
\\ 
\hline
5
&
\begin{tabular}{c}
$\mathcal{A}_A,$ $\mathcal{A}_a$, \\ 
$\mathcal{A}_{2,3}^-$, $\mathcal{A}_{1,4}^-$, $\mathcal{A}_{4,1}^-$\\
{\tiny $(a_+a_-A_+A_+a_-)$},\\
{\tiny $(A_+A_+a_-a_-A_+)$}\\
{\tiny $(A_+a_-a_-A_+a_-)$}\\
{\tiny $(a_+a_-A_+a_-a_+)$}\\
{\tiny $(a_+a_-a_+A_-a_+)$}\\
\end{tabular}
&
\begin{tabular}{c}
$\mathcal{A}_A,$ $\mathcal{A}_a$, \\ 
$\mathcal{A}_{2,3}^-$, $\mathcal{A}_{1,4}^-$,\\
{\tiny $(a_+a_-A_+A_+a_-)$},\\
{\tiny $(A_+A_+a_-a_-A_+)$}\\
{\tiny $(A_+a_-a_-A_+a_-)$}\\
{\tiny $(a_+a_-A_+a_-a_+)$}\\
{\tiny $(a_+a_-a_+A_-a_+)$}\\
\end{tabular}
&
\begin{tabular}{c}
$\mathcal{A}_A,$ $\mathcal{A}_a$ \\  
$\mathcal{A}_{2,3}^-$, $\mathcal{A}_{1,4}^-$\\
{\tiny $(a_+a_-A_+a_-a_+)$},\\
{\tiny $(a_+a_-a_+A_-a_+)$},\\
\end{tabular}
&
\begin{tabular}{c}
$\mathcal{A}_A,$ $\mathcal{A}_a$, \\ 
$\mathcal{A}_{3,2}^-$, $\mathcal{A}_{2,3}^-$,\\
$\mathcal{A}_{2,3}^+$\\ 
{\tiny $(A_+a_+a_-a_+A_-)$},\\
{\tiny $(A_+a_-A_+a_+a_-)$},\\
{\tiny $(a_+a_-A_-A_-a_+)$},\\
\end{tabular}
&
\begin{tabular}{c}
$\mathcal{A}_A,$ $\mathcal{A}_a$, \\ 
$\mathcal{A}_{3,2}^-$\\
\end{tabular}
\\
\hline
6
&
\begin{tabular}{c}
$\mathcal{A}_A,$ $\mathcal{A}_a$, \\ 
$\mathcal{A}_{5,1}^-$, $\mathcal{A}_{4,2}^-$\\
 $\mathcal{A}_{2,4}^-$, $\mathcal{A}_{1,5}^-$\\
{\tiny + 10 more}\\
\end{tabular}
&
\begin{tabular}{c}
$\mathcal{A}_A,$ $\mathcal{A}_a$, \\ 
$\mathcal{A}_{5,1}^-$, $\mathcal{A}_{4,2}^-$\\
 $\mathcal{A}_{2,4}^-$, $\mathcal{A}_{1,5}^-$\\
{\tiny + 8 more}\\
\end{tabular}
&
\begin{tabular}{c}
$\mathcal{A}_A,$ $\mathcal{A}_a$ \\ 
$\mathcal{A}_{5,1}^-$, $\mathcal{A}_{4,2}^-$,\\
 $\mathcal{A}_{2,4}^-$\\
 {\tiny + 1 more}\\
\end{tabular}
&
\begin{tabular}{c}
$\mathcal{A}_A,$ $\mathcal{A}_a$, \\ 
\end{tabular}
&
\begin{tabular}{c}
$\mathcal{A}_A,$ $\mathcal{A}_a$, \\ 
\end{tabular}
\\
\hline
7
&
\begin{tabular}{c}
$\mathcal{A}_A,$ $\mathcal{A}_a$, \\ 
$\mathcal{A}_{5,2}^-$, $\mathcal{A}_{4,3}^-$,\\
$\mathcal{A}_{2,5}^-$, $\mathcal{A}_{1,6}^-$\\
{\tiny +15 more}\\
\end{tabular}
&
\begin{tabular}{c}
$\mathcal{A}_A,$ $\mathcal{A}_a$, \\ 
$\mathcal{A}_{4,3}^-$, $\mathcal{A}_{2,5}^-$, $\mathcal{A}_{1,6}^-$\\
{\tiny +9 more}\\
\end{tabular}
&
\begin{tabular}{c}
$\mathcal{A}_A,$ $\mathcal{A}_a$ \\  
$\mathcal{A}_{2,5}^-$\\
\end{tabular}
&
\begin{tabular}{c}
$\mathcal{A}_A,$ $\mathcal{A}_a$ \\ 
$\mathcal{A}_{4,3}^-$\\
\end{tabular}
&
\begin{tabular}{c}
$\mathcal{A}_A,$ $\mathcal{A}_a$, \\ 
$\mathcal{A}_{4,3}^-$\\
\end{tabular}
\\
\hline
8
&
\begin{tabular}{c}
$\mathcal{A}_A,$ $\mathcal{A}_a$, \\
$\mathcal{A}_{4,4}^-$,  $\mathcal{A}_{2,6}^-$, $\mathcal{A}_{1,7}^-$\\ 
$\mathcal{A}_{7,1}^-$,  $\mathcal{A}_{5,3}^-$\\ 
{\tiny +23 more}\\
\end{tabular}
&
\begin{tabular}{c}
$\mathcal{A}_A,$ $\mathcal{A}_a$, \\
$\mathcal{A}_{4,4}^-$,  $\mathcal{A}_{2,6}^-$, $\mathcal{A}_{1,7}^-$\\ 
{\tiny +15 more}\\
\end{tabular}
&
\begin{tabular}{c}
$\mathcal{A}_A,$ $\mathcal{A}_a$, \\ 
$\mathcal{A}_{5,3}^-$\\
{\tiny + 1 more}\\
\end{tabular}
&
\begin{tabular}{c}
$\mathcal{A}_A,$ $\mathcal{A}_a$\\  
\end{tabular}
&
\begin{tabular}{c}
$\mathcal{A}_A,$ $\mathcal{A}_a$, \\ 
\end{tabular}
\\
\hline
9 
&
\begin{tabular}{c}
$\mathcal{A}_A,$ $\mathcal{A}_a$ ,\\ 
$\mathcal{A}_{2,7}^-$, $\mathcal{A}_{1,8}^-$,\\
$\mathcal{A}_{4,5}^-$, $\mathcal{A}_{5,4}^-$,\\
 $\mathcal{A}_{7,2}^-$ $\mathcal{A}_{8,1}^-$ \\  
{\tiny +29 more}\\
\end{tabular}
&
\begin{tabular}{c}
$\mathcal{A}_A,$ $\mathcal{A}_a$ ,\\ 
$\mathcal{A}_{2,7}^-$, $\mathcal{A}_{4,5}^-$,\\
 $\mathcal{A}_{7,2}^-$ $\mathcal{A}_{8,1}^-$ \\  
{\tiny +15 more}\\
\end{tabular}
&
\begin{tabular}{c}
$\mathcal{A}_A,$ $\mathcal{A}_a$ \\ 
$\mathcal{A}_{2,7}^-$, $\mathcal{A}_{4,5}^-$,\\
$\mathcal{A}_{8,1}^-$ \\  
\end{tabular}
&
\begin{tabular}{c}
$\mathcal{A}_A,$ $\mathcal{A}_a$ \\  
\end{tabular}
&
\begin{tabular}{c}
$\mathcal{A}_A,$ $\mathcal{A}_a$, \\ 
$\mathcal{A}_{5,4}^-$\\
\end{tabular}
\\
\hline
10
&
\begin{tabular}{c}
$\mathcal{A}_A,$ $\mathcal{A}_a$, \\ 
$\mathcal{A}_{1,9}^-$, $\mathcal{A}_{5,5}^-$, $\mathcal{A}_{8,2}^-$\\
$\mathcal{A}_{7,3}^-$, $\mathcal{A}_{4,6}^-$, $\mathcal{A}_{2,8}^-$\\
{\tiny +42 more}
\end{tabular}
&
\begin{tabular}{c}
$\mathcal{A}_A,$ $\mathcal{A}_a$, \\ 
$\mathcal{A}_{7,3}^-$, $\mathcal{A}_{4,6}^-$, $\mathcal{A}_{2,8}^-$\\
{\tiny +19 more}
\end{tabular}
&
\begin{tabular}{c}
$\mathcal{A}_A,$ $\mathcal{A}_a$ \\  
$\mathcal{A}_{7,3}^-$, $\mathcal{A}_{2,8}^-$ \\
\end{tabular}
&
\begin{tabular}{c}
$\mathcal{A}_A,$ $\mathcal{A}_a$ \\ 
\end{tabular}
&
\begin{tabular}{c}
$\mathcal{A}_A,$ $\mathcal{A}_a$, \\   
\end{tabular}
\\
\hline
\end{tabular}
\vspace{0.25cm}
\caption{Codes of the spectrally stable ILMs for different values of parameter $\gamma$ in the quadratic-cubic case $(p,q) = (2,3)$.}
\label{tab:2-3}
\end{table}

\begin{table}[ht]
\scriptsize
\centering
\begin{tabular}{|c|c|c|c|c|c|}
\toprule
 & \multicolumn{5}{c}{$\delta$,\quad\quad $\gamma = \delta\gamma_{{3,4}}$} \\
\cmidrule(lr){2-6}
$N$ &0.2 & 0.4 & 0.6 
& 0.96 & 0.996 
\\ 
\hline
2 & 
\begin{tabular}{c}
$\mathcal{A}_A,$  $\mathcal{A}_a$,\\  
$\mathcal{A}_{1,1}^-$ \\ 
\end{tabular}
&
\begin{tabular}{c}
$\mathcal{A}_A,$  $\mathcal{A}_a$,\\  
$\mathcal{A}_{1,1}^-$ \\ 
\end{tabular}
&
\begin{tabular}{c}
$\mathcal{A}_A,$  $\mathcal{A}_a$,\\ 
$\mathcal{A}_{1,1}^-$ \\ 
\end{tabular}
&
\begin{tabular}{c}
$\mathcal{A}_A,$ $\mathcal{A}_a$,\\  
$\mathcal{A}_{1,1}^-$ \\ 
\end{tabular}
&
\begin{tabular}{c}
$\mathcal{A}_A,$  $\mathcal{A}_a$,\\  
$\mathcal{A}_{1,1}^-$ \\ 
\end{tabular}
\\
\hline
3 
& 
\begin{tabular}{c}
$\mathcal{A}_A,$ $\mathcal{A}_a$, \\ 
$\mathcal{A}_{2,1}^-$ \\ 
{\tiny $(A_+a_-a_-)$}\\
\end{tabular}
& 
\begin{tabular}{c}
$\mathcal{A}_A,$ $\mathcal{A}_a$, \\ 
$\mathcal{A}_{2,1}^-$ \\ 
\end{tabular}
&
\begin{tabular}{c}
$\mathcal{A}_A,$ $\mathcal{A}_a$, \\ 
$\mathcal{A}_{2,1}^-$ \\ 
\end{tabular}
&
\begin{tabular}{c}
$\mathcal{A}_A,$ $\mathcal{A}_a$, \\ 
$\mathcal{A}_{2,1}^-$ \\ 
\end{tabular}
&
\begin{tabular}{c}
$\mathcal{A}_A,$ $\mathcal{A}_a$, \\ 
$\mathcal{A}_{2,1}^-$ \\ 
\end{tabular}
\\
\hline
4 
&
\begin{tabular}{c}
$\mathcal{A}_A,$ $\mathcal{A}_a$, \\ 
$\mathcal{A}_{3,1}^-$, $\mathcal{A}_{2,2}^-$,\\
{\tiny $(a_+A_-A_-a_+)$}\\ 
\end{tabular}
&
\begin{tabular}{c}
$\mathcal{A}_A,$ $\mathcal{A}_a$, \\ 
$\mathcal{A}_{3,1}^-$, $\mathcal{A}_{2,2}^-$,\\
{\tiny $(a_+A_-A_-a_+)$}\\ 
\end{tabular}
&
\begin{tabular}{c}
$\mathcal{A}_A,$ $\mathcal{A}_a$, \\ 
$\mathcal{A}_{3,1}^-$, $\mathcal{A}_{2,2}^-$,\\
{\tiny $(a_+A_-A_-a_+)$}\\ 
\end{tabular}
&
\begin{tabular}{c}
$\mathcal{A}_A,$ $\mathcal{A}_a$ \\ 
\end{tabular}
&
\begin{tabular}{c}
$\mathcal{A}_A,$ $\mathcal{A}_a$, \\ 
\end{tabular}
\\ 
\hline
5
&
\begin{tabular}{c}
$\mathcal{A}_A,$ $\mathcal{A}_a$, \\ 
$\mathcal{A}_{3,2}^-$, $\mathcal{A}_{2,3}^-$, $\mathcal{A}_{4,1}^-$\\
{\tiny $(a_+a_-A_+A_+a_-)$},\\
{\tiny $(a_+A_-A_-A_-a_+)$}\\
\end{tabular}
&
\begin{tabular}{c}
$\mathcal{A}_A,$ $\mathcal{A}_a$, \\ 
$\mathcal{A}_{3,2}^-$, $\mathcal{A}_{2,3}^-$,\\
{\tiny $(a_+a_-A_+A_+a_-)$},\\
{\tiny $(a_+A_-A_-A_-a_+)$}\\
\end{tabular}
&
\begin{tabular}{c}
$\mathcal{A}_A,$ $\mathcal{A}_a$ \\   
\end{tabular}
&
\begin{tabular}{c}
$\mathcal{A}_A,$ $\mathcal{A}_a$, \\ 
$\mathcal{A}_{3,2}^-$\\ 
\end{tabular}
&
\begin{tabular}{c}
$\mathcal{A}_A,$ $\mathcal{A}_a$, \\ 
$\mathcal{A}_{3,2}^-$\\
\end{tabular}
\\
\hline
6
&
\begin{tabular}{c}
$\mathcal{A}_A,$ $\mathcal{A}_a$, \\ 
$\mathcal{A}_{5,1}^-$, $\mathcal{A}_{3,3}^-$,$\mathcal{A}_{4,2}^-$\\
{\tiny $(a_+a_-A_+A_+A_+a_-)$}\\
{\tiny $(A_+A_+a_-a_-A_+A_+)$}\\
{\tiny $(a_+a_-a_-A_+A_+a_-)$}\\
{\tiny $(a_+A_-A_-A_-A_-a_+)$}\\
\end{tabular}
&
\begin{tabular}{c}
$\mathcal{A}_A,$ $\mathcal{A}_a$, \\ 
$\mathcal{A}_{3,3}^-$,\\
{\tiny $(a_+a_-A_+A_+A_+a_-)$}\\
\end{tabular}
&
\begin{tabular}{c}
$\mathcal{A}_A,$ $\mathcal{A}_a$ \\ 
\end{tabular}
&
\begin{tabular}{c}
$\mathcal{A}_A,$ $\mathcal{A}_a$, \\ 
\end{tabular}
&
\begin{tabular}{c}
$\mathcal{A}_A,$ $\mathcal{A}_a$, \\ 
\end{tabular}
\\
\hline
7
&
\begin{tabular}{c}
$\mathcal{A}_A,$ $\mathcal{A}_a$, \\ 
$\mathcal{A}_{5,2}^-$,$\mathcal{A}_{4,3}^-$,$\mathcal{A}_{3,4}^-$,\\
{\tiny $(a_+a_-A_+A_+A_+a_-a_+)$},\\
{\tiny $(a_+a_-a_+A_-A_-A_-a_+)$}\\ 
{\tiny $(a_+a_-A_+A_+A_+A_+a_-)$}\\ 
{\tiny $(A_+A_+A_+a_-a_-A_+A_+)$}\\ 
\end{tabular}
&
\begin{tabular}{c}
$\mathcal{A}_A,$ $\mathcal{A}_a$, \\ 
$\mathcal{A}_{3,4}^-$,\\
{\tiny $(a_+a_-A_+A_+A_+a_-a_+)$},\\
{\tiny $(a_+a_-a_+A_-A_-A_-a_+)$}\\ 
\end{tabular}
&
\begin{tabular}{c}
$\mathcal{A}_A,$ $\mathcal{A}_a$ \\  
\end{tabular}
&
\begin{tabular}{c}
$\mathcal{A}_A,$ $\mathcal{A}_a$ \\ 
\end{tabular}
&
\begin{tabular}{c}
$\mathcal{A}_A,$ $\mathcal{A}_a$, \\ 
$\mathcal{A}_{4,3}^-$\\
\end{tabular}
\\
\hline
8
&
\begin{tabular}{c}
$\mathcal{A}_A,$ $\mathcal{A}_a$, \\ 
$\mathcal{A}_{5,3}^-$,$\mathcal{A}_{4,4}^-$,$\mathcal{A}_{3,5}^-$\\
{\tiny + 8 more}\\
\end{tabular}
&
\begin{tabular}{c}
$\mathcal{A}_A,$ $\mathcal{A}_a$, \\ 
\end{tabular}
&
\begin{tabular}{c}
$\mathcal{A}_A,$ $\mathcal{A}_a$, \\ 
$\mathcal{A}_{6,2}^-$\\
\end{tabular}
&
\begin{tabular}{c}
$\mathcal{A}_A,$ $\mathcal{A}_a$\\  
\end{tabular}
&
\begin{tabular}{c}
$\mathcal{A}_A,$ $\mathcal{A}_a$, \\ 
\end{tabular}
\\
\hline
9
&
\begin{tabular}{c}
$\mathcal{A}_A,$ $\mathcal{A}_a$ ,\\ 
$\mathcal{A}_{5,4}^-$,$\mathcal{A}_{4,5}^-$,$\mathcal{A}_{3,5}^-$\\
{\tiny + 8 more}\\
\end{tabular}
&
\begin{tabular}{c}
$\mathcal{A}_A,$ $\mathcal{A}_a$ ,\\ 
{\tiny + 1 more}\\
\end{tabular}
&
\begin{tabular}{c}
$\mathcal{A}_A,$ $\mathcal{A}_a$ \\ 
\end{tabular}
&
\begin{tabular}{c}
$\mathcal{A}_A,$ $\mathcal{A}_a$ \\  
\end{tabular}
&
\begin{tabular}{c}
$\mathcal{A}_A,$ $\mathcal{A}_a$, \\ 
$\mathcal{A}_{5,4}^-$\\
\end{tabular}
\\
\hline
10
&
\begin{tabular}{c}
$\mathcal{A}_A,$ $\mathcal{A}_a$, \\ 
$\mathcal{A}_{5,5}^-$,$\mathcal{A}_{4,6}^-$\\
{\tiny + 12 more}\\
\end{tabular}
&
\begin{tabular}{c}
$\mathcal{A}_A,$ $\mathcal{A}_a$, \\ 
$\mathcal{A}_{8,2}^-$\\
\end{tabular}
&
\begin{tabular}{c}
$\mathcal{A}_A,$ $\mathcal{A}_a$ \\   
\end{tabular}
&
\begin{tabular}{c}
$\mathcal{A}_A,$ $\mathcal{A}_a$ \\ 
\end{tabular}
&
\begin{tabular}{c}
$\mathcal{A}_A,$ $\mathcal{A}_a$, \\   
\end{tabular}
\\
\hline
\end{tabular}
\vspace{0.25cm}
\caption{Codes of the spectrally stable ILMs for different values of parameter $\gamma$ in the cubic-quartic case $(p,q) = (3,4)$.}
\label{tab:3-4}
\end{table}
\normalsize

\begin{table}[ht]
\scriptsize
\centering
\begin{tabular}{|c|c|c|c|c|c|}
\toprule
 & \multicolumn{5}{c}{$\delta$,\quad\quad $\gamma = \delta\gamma_{3,5}$} \\
\cmidrule(lr){2-6}
$N$ & 0.2 &0.4 & 0.6 
& 0.96 & 0.996 
\\ 
\hline
2 &
\begin{tabular}{c}
$\mathcal{A}_A,$  $\mathcal{A}_a$,\\  
\textcolor{red}{$\mathcal{A}_{1,1}^-$}, \textcolor{red}{$\mathcal{A}_{1,1}^+$}\\
\end{tabular}
&
\begin{tabular}{c}
$\mathcal{A}_A,$  $\mathcal{A}_a$,\\  
\textcolor{red}{$\mathcal{A}_{1,1}^-$}, \textcolor{red}{$\mathcal{A}_{1,1}^+$}\\
\end{tabular}
&
\begin{tabular}{c}
$\mathcal{A}_A,$  $\mathcal{A}_a$,\\ 
\textcolor{red}{$\mathcal{A}_{1,1}^-$}, \textcolor{red}{$\mathcal{A}_{1,1}^+$}\\
\end{tabular}
&
\begin{tabular}{c}
$\mathcal{A}_A,$ $\mathcal{A}_a$,\\  
\textcolor{red}{$\mathcal{A}_{1,1}^-$}, \textcolor{red}{$\mathcal{A}_{1,1}^+$}\\
\end{tabular}
&
\begin{tabular}{c}
$\mathcal{A}_A,$  $\mathcal{A}_a$,\\  
\textcolor{red}{$\mathcal{A}_{1,1}^-$}, \textcolor{red}{$\mathcal{A}_{1,1}^+$}\\
\end{tabular}
\\
\hline
3 &
\begin{tabular}{c}
$\mathcal{A}_A,$ $\mathcal{A}_a$, \\ 
$\mathcal{A}_{2,1}^-$ \\ 
\end{tabular}
& 
\begin{tabular}{c}
$\mathcal{A}_A,$ $\mathcal{A}_a$, \\ 
$\mathcal{A}_{2,1}^-$ \\ 
\end{tabular}
&
\begin{tabular}{c}
$\mathcal{A}_A,$ $\mathcal{A}_a$, \\ 
$\mathcal{A}_{2,1}^-$ \\ 
\end{tabular}
&
\begin{tabular}{c}
$\mathcal{A}_A,$ $\mathcal{A}_a$, \\ 
$\mathcal{A}_{2,1}^-$ \\ 
\end{tabular}
&
\begin{tabular}{c}
$\mathcal{A}_A,$ $\mathcal{A}_a$, \\ 
$\mathcal{A}_{2,1}^-$ \\ 
\end{tabular}
\\
\hline
4 &
\begin{tabular}{c}
$\mathcal{A}_A,$ $\mathcal{A}_a$, \\ 
$\mathcal{A}_{3,1}^-$, \textcolor{red}{$\mathcal{A}_{2,2}^-$},\textcolor{red}{$\mathcal{A}_{2,2}^+$}\\
\textcolor{red}{{\tiny $(a_+A_-A_-a_+)$}}\\ 
\textcolor{red}{{\tiny $(A_+A_+a_-a_-)$}}\\ 
\textcolor{red}{{\tiny $(A_+a_+a_-A_+)$}}\\ 
\textcolor{red}{{\tiny $(a_+A_+A_+a_-)$}}\\ 
\end{tabular}
&
\begin{tabular}{c}
$\mathcal{A}_A,$ $\mathcal{A}_a$, \\ 
$\mathcal{A}_{3,1}^-$, \textcolor{red}{$\mathcal{A}_{2,2}^-$}, \textcolor{red}{$\mathcal{A}_{2,2}^+$},\\
\textcolor{red}{{\tiny $(a_+A_-A_-a_+)$}}\\ 
\textcolor{red}{{\tiny $(A_+a_+a_-A_+)$}}\\ 
\textcolor{red}{{\tiny $(a_+A_+A_+a_-)$}}\\ 
\end{tabular}
&
\begin{tabular}{c}
$\mathcal{A}_A,$ $\mathcal{A}_a$, \\ 
\textcolor{red}{$\mathcal{A}_{2,2}^+$}, \textcolor{red}{$\mathcal{A}_{2,2}^-$},\\
\textcolor{red}{{\tiny $(a_+A_-A_-a_+)$}}\\ 
\textcolor{red}{{\tiny $(A_+a_+a_-A_+)$}}\\ 
\textcolor{red}{{\tiny $(a_+A_+A_+a_-)$}}\\ 
\end{tabular}
&
\begin{tabular}{c}
$\mathcal{A}_A,$ $\mathcal{A}_a$ \\ 
\textcolor{red}{$\mathcal{A}_{2,2}^+$}\\
\textcolor{red}{{\tiny $(A_+a_+a_-A_+)$}}\\ 
\textcolor{red}{{\tiny $(a_+A_+A_+a_-)$}}\\ 
\end{tabular}
&
\begin{tabular}{c}
$\mathcal{A}_A,$ $\mathcal{A}_a$, \\ 
\textcolor{red}{$\mathcal{A}_{2,2}^+$}\\
\textcolor{red}{{\tiny $(A_+a_+a_-A_+)$}}\\ 
\textcolor{red}{{\tiny $(a_+A_+A_+a_-)$}}\\ 
\end{tabular}
\\ 
\hline
5 &
\begin{tabular}{c}
$\mathcal{A}_A,$ $\mathcal{A}_a$, \\ 
$\mathcal{A}_{4,1}^-$, $\mathcal{A}_{3,2}^-$,\\
{\tiny $(a_+A_-A_-A_-a_+)$}\\
\end{tabular}
&
\begin{tabular}{c}
$\mathcal{A}_A,$ $\mathcal{A}_a$, \\ 
\end{tabular}
&
\begin{tabular}{c}
$\mathcal{A}_A,$ $\mathcal{A}_a$ \\   
\end{tabular}
&
\begin{tabular}{c}
$\mathcal{A}_A,$ $\mathcal{A}_a$, \\ 
$\mathcal{A}_{3,2}^-$\\ 
\end{tabular}
&
\begin{tabular}{c}
$\mathcal{A}_A,$ $\mathcal{A}_a$, \\ 
$\mathcal{A}_{3,2}^-$\\
\end{tabular}
\\
\hline
6 &
\begin{tabular}{c}
$\mathcal{A}_A,$ $\mathcal{A}_a$, \\ 
\textcolor{red}{$\mathcal{A}_{3,3}^-$},\\
\textcolor{red}{{\tiny $(A_+A_+A_+a_-a_-a_+)$}}\\
\textcolor{red}{{\tiny $(a_+a_-A_+A_+A_+a_-)$}}\\
\end{tabular}
&
\begin{tabular}{c}
$\mathcal{A}_A,$ $\mathcal{A}_a$, \\ 
\textcolor{red}{$\mathcal{A}_{3,3}^-$},\\
\textcolor{red}{{\tiny $(a_+a_-A_-A_-A_-a_+)$}}\\
\end{tabular}
&
\begin{tabular}{c}
$\mathcal{A}_A,$ $\mathcal{A}_a$ \\ 
\textcolor{red}{$\mathcal{A}_{3,3}^+$},\\
\textcolor{red}{{\tiny $(a_+a_-A_-A_-A_-a_+)$}}\\
\end{tabular}
&
\begin{tabular}{c}
$\mathcal{A}_A,$ $\mathcal{A}_a$, \\ 
\textcolor{red}{$\mathcal{A}_{3,3}^+$},\\
\end{tabular}
&
\begin{tabular}{c}
$\mathcal{A}_A,$ $\mathcal{A}_a$, \\ 
\textcolor{red}{$\mathcal{A}_{3,3}^+$},\\
\end{tabular}
\\
\hline
7 &
\begin{tabular}{c}
$\mathcal{A}_A,$ $\mathcal{A}_a$, \\ 
\end{tabular}
&
\begin{tabular}{c}
$\mathcal{A}_A,$ $\mathcal{A}_a$, \\ 
\end{tabular}
&
\begin{tabular}{c}
$\mathcal{A}_A,$ $\mathcal{A}_a$ \\  
$\mathcal{A}_{5,2}^-$\\
\end{tabular}
&
\begin{tabular}{c}
$\mathcal{A}_A,$ $\mathcal{A}_a$ \\ 
$\mathcal{A}_{4,3}^-$\\
\end{tabular}
&
\begin{tabular}{c}
$\mathcal{A}_A,$ $\mathcal{A}_a$, \\ 
$\mathcal{A}_{4,3}^-$\\
\end{tabular}
\\
\hline
8 &
\begin{tabular}{c}
$\mathcal{A}_A,$ $\mathcal{A}_a$, \\ 
\textcolor{red}{$\mathcal{A}_{4,4}^-$}\\
\end{tabular}
&
\begin{tabular}{c}
$\mathcal{A}_A,$ $\mathcal{A}_a$, \\ 
$\mathcal{A}_{6,2}^-$\\
\end{tabular}
&
\begin{tabular}{c}
$\mathcal{A}_A,$ $\mathcal{A}_a$, \\ 
\end{tabular}
&
\begin{tabular}{c}
$\mathcal{A}_A,$ $\mathcal{A}_a$\\  
\end{tabular}
&
\begin{tabular}{c}
$\mathcal{A}_A,$ $\mathcal{A}_a$, \\ 
\textcolor{red}{$\mathcal{A}_{4,4}^+$}\\
\end{tabular}
\\
\hline
9 &
\begin{tabular}{c}
$\mathcal{A}_A,$ $\mathcal{A}_a$ \\ 
\end{tabular}
&
\begin{tabular}{c}
$\mathcal{A}_A,$ $\mathcal{A}_a$ ,\\ 
\end{tabular}
&
\begin{tabular}{c}
$\mathcal{A}_A,$ $\mathcal{A}_a$ \\ 
\end{tabular}
&
\begin{tabular}{c}
$\mathcal{A}_A,$ $\mathcal{A}_a$ \\  
\end{tabular}
&
\begin{tabular}{c}
$\mathcal{A}_A,$ $\mathcal{A}_a$, \\ 
$\mathcal{A}_{5,4}^-$\\
\end{tabular}
\\
\hline
10 &\begin{tabular}{c}
$\mathcal{A}_A,$ $\mathcal{A}_a$, \\ 
\end{tabular}
&
\begin{tabular}{c}
$\mathcal{A}_A,$ $\mathcal{A}_a$, \\ 
\end{tabular}
&
\begin{tabular}{c}
$\mathcal{A}_A,$ $\mathcal{A}_a$ \\   
\end{tabular}
&
\begin{tabular}{c}
$\mathcal{A}_A,$ $\mathcal{A}_a$ \\ 
\end{tabular}
&
\begin{tabular}{c}
$\mathcal{A}_A,$ $\mathcal{A}_a$, \\  
\textcolor{red}{$\mathcal{A}_{5,5}^+$}\\ 
\end{tabular}
\\
\hline
\end{tabular}
\vspace{0.25cm}
\caption{Codes of the spectrally stable ILMs for different values of parameter $\gamma$ in the cubic-quintic case $(p,q) = (3,5)$. }
\label{tab:3-5}
\end{table}
\normalsize

\begin{table}[ht]
\scriptsize
\centering
\begin{tabular}{|c|c|c|c|c|c|}
\toprule
 & \multicolumn{5}{c}{$\delta$,\quad\quad $\gamma = \delta\gamma_{3,6}$} \\
\cmidrule(lr){2-6}
$N$ & 0.2 &0.4 & 0.6 
& 0.96 & 0.996 
\\ 
\hline
2 &
\begin{tabular}{c}
$\mathcal{A}_A,$  $\mathcal{A}_a$,\\  
 $\mathcal{A}_{1,1}^+$\\
\end{tabular}
&
\begin{tabular}{c}
$\mathcal{A}_A,$  $\mathcal{A}_a$,\\  
$\mathcal{A}_{1,1}^+$\\
\end{tabular}
&
\begin{tabular}{c}
$\mathcal{A}_A,$  $\mathcal{A}_a$,\\ 
$\mathcal{A}_{1,1}^+$\\
\end{tabular}
&
\begin{tabular}{c}
$\mathcal{A}_A,$ $\mathcal{A}_a$,\\  
$\mathcal{A}_{1,1}^+$\\
\end{tabular}
&
\begin{tabular}{c}
$\mathcal{A}_A,$  $\mathcal{A}_a$,\\  
$\mathcal{A}_{1,1}^+$\\
\end{tabular}
\\
\hline
3 &
\begin{tabular}{c}
$\mathcal{A}_A,$ $\mathcal{A}_a$, \\ 
$\mathcal{A}_{2,1}^-$ \\ 
\end{tabular}
& 
\begin{tabular}{c}
$\mathcal{A}_A,$ $\mathcal{A}_a$, \\ 
$\mathcal{A}_{2,1}^-$ \\ 
\end{tabular}
&
\begin{tabular}{c}
$\mathcal{A}_A,$ $\mathcal{A}_a$, \\ 
$\mathcal{A}_{2,1}^-$ \\ 
\end{tabular}
&
\begin{tabular}{c}
$\mathcal{A}_A,$ $\mathcal{A}_a$, \\ 
$\mathcal{A}_{2,1}^-$ \\ 
\end{tabular}
&
\begin{tabular}{c}
$\mathcal{A}_A,$ $\mathcal{A}_a$, \\ 
$\mathcal{A}_{2,1}^-$ \\ 
\end{tabular}
\\
\hline
4 &
\begin{tabular}{c}
$\mathcal{A}_A,$ $\mathcal{A}_a$, \\ 
$\mathcal{A}_{3,1}^-$\\ 
\end{tabular}
&
\begin{tabular}{c}
$\mathcal{A}_A,$ $\mathcal{A}_a$, \\ 
$\mathcal{A}_{3,1}^-$\\ 
\end{tabular}
&
\begin{tabular}{c}
$\mathcal{A}_A,$ $\mathcal{A}_a$, \\ 
$\mathcal{A}_{2,2}^+$\\
\end{tabular}
&
\begin{tabular}{c}
$\mathcal{A}_A,$ $\mathcal{A}_a$ \\ 
$\mathcal{A}_{2,2}^+$\\

\end{tabular}
&
\begin{tabular}{c}
$\mathcal{A}_A,$ $\mathcal{A}_a$, \\ 
$\mathcal{A}_{2,2}^+$\\
\end{tabular}
\\ 
\hline
5 &
\begin{tabular}{c}
$\mathcal{A}_A,$ $\mathcal{A}_a$, \\ 
$\mathcal{A}_{3,2}^-$,\\
\end{tabular}
&
\begin{tabular}{c}
$\mathcal{A}_A,$ $\mathcal{A}_a$, \\ 
\end{tabular}
&
\begin{tabular}{c}
$\mathcal{A}_A,$ $\mathcal{A}_a$ \\   
\end{tabular}
&
\begin{tabular}{c}
$\mathcal{A}_A,$ $\mathcal{A}_a$, \\ 
$\mathcal{A}_{3,2}^-$\\ 
\end{tabular}
&
\begin{tabular}{c}
$\mathcal{A}_A,$ $\mathcal{A}_a$, \\ 
$\mathcal{A}_{3,2}^-$\\
\end{tabular}
\\
\hline
6 &
\begin{tabular}{c}
$\mathcal{A}_A,$ $\mathcal{A}_a$, \\ 
\end{tabular}
&
\begin{tabular}{c}
$\mathcal{A}_A,$ $\mathcal{A}_a$, \\ 
\end{tabular}
&
\begin{tabular}{c}
$\mathcal{A}_A,$ $\mathcal{A}_a$ \\ 
\end{tabular}
&
\begin{tabular}{c}
$\mathcal{A}_A,$ $\mathcal{A}_a$, \\ 
$\mathcal{A}_{3,3}^+$,\\
\end{tabular}
&
\begin{tabular}{c}
$\mathcal{A}_A,$ $\mathcal{A}_a$, \\ 
$\mathcal{A}_{3,3}^+$,\\
\end{tabular}
\\
\hline
7 &
\begin{tabular}{c}
$\mathcal{A}_A,$ $\mathcal{A}_a$, \\ 
\end{tabular}
&
\begin{tabular}{c}
$\mathcal{A}_A,$ $\mathcal{A}_a$, \\ 
\end{tabular}
&
\begin{tabular}{c}
$\mathcal{A}_A,$ $\mathcal{A}_a$ \\  
\end{tabular}
&
\begin{tabular}{c}
$\mathcal{A}_A,$ $\mathcal{A}_a$ \\ 
$\mathcal{A}_{4,3}^-$\\
\end{tabular}
&
\begin{tabular}{c}
$\mathcal{A}_A,$ $\mathcal{A}_a$, \\ 
$\mathcal{A}_{4,3}^-$\\
\end{tabular}
\\
\hline
8 &
\begin{tabular}{c}
$\mathcal{A}_A,$ $\mathcal{A}_a$, \\ 
\end{tabular}
&
\begin{tabular}{c}
$\mathcal{A}_A,$ $\mathcal{A}_a$, \\ 
\end{tabular}
&
\begin{tabular}{c}
$\mathcal{A}_A,$ $\mathcal{A}_a$, \\ 
\end{tabular}
&
\begin{tabular}{c}
$\mathcal{A}_A,$ $\mathcal{A}_a$\\  
$\mathcal{A}_{4,4}^+$\\
\end{tabular}
&
\begin{tabular}{c}
$\mathcal{A}_A,$ $\mathcal{A}_a$, \\ 
$\mathcal{A}_{4,4}^+$\\
\end{tabular}
\\
\hline
9 &
\begin{tabular}{c}
$\mathcal{A}_A,$ $\mathcal{A}_a$ \\ 
\end{tabular}
&
\begin{tabular}{c}
$\mathcal{A}_A,$ $\mathcal{A}_a$ ,\\ 
\end{tabular}
&
\begin{tabular}{c}
$\mathcal{A}_A,$ $\mathcal{A}_a$ \\ 
\end{tabular}
&
\begin{tabular}{c}
$\mathcal{A}_A,$ $\mathcal{A}_a$ \\  
\end{tabular}
&
\begin{tabular}{c}
$\mathcal{A}_A,$ $\mathcal{A}_a$, \\ 
$\mathcal{A}_{5,4}^-$\\
\end{tabular}
\\
\hline
10 &\begin{tabular}{c}
$\mathcal{A}_A,$ $\mathcal{A}_a$, \\ 
\end{tabular}
&
\begin{tabular}{c}
$\mathcal{A}_A,$ $\mathcal{A}_a$, \\ 
\end{tabular}
&
\begin{tabular}{c}
$\mathcal{A}_A,$ $\mathcal{A}_a$ \\   
\end{tabular}
&
\begin{tabular}{c}
$\mathcal{A}_A,$ $\mathcal{A}_a$ \\ 
\end{tabular}
&
\begin{tabular}{c}
$\mathcal{A}_A,$ $\mathcal{A}_a$, \\  
$\mathcal{A}_{5,5}^+$\\ 
\end{tabular}
\\
\hline
\end{tabular}
\vspace{0.25cm}
\caption{Codes of the spectrally stable ILMs for different values of parameter $\gamma$ in the cubic-sextic case $(p,q) = (3,6)$.}
\label{tab:3-6}
\end{table}
\normalsize

If $N = 1$, the only two stable ILMs are given by the codes $(A_+)$ and $(a_+)$ as in Example \ref{ex-stable-1}, where the code $(A_+)$ corresponds to the energy minimizer and the code $(a_+)$ corresponds to the constrained energy minimizer, both are spectrally and orbitally stable for every $(p,q)$ with $2\leq p<q$.  The codes $(A_-)$ and $(a_-)$ related to $(A_+)$ and $(a_+)$ by the sign-reversing symmetry are also stable.

If $N = 2$, the ILMs with codes $(A_+A_+)$ and $(a_+a_-)$ are stable for any $(p,q)$ with $2\leq p<q$ as in Example \ref{ex-stable-2}. In addition, 
it follows from Example \ref{ex-stable-3} that $(a_+A_-)$ is stable for  $(p,q)=(2,3)$ and  $(p,q)=(3,4)$, whereas $(a_+A_+)$ is stable for $(p,q) = (3,6)$. Stability of both codes is inconclusive for $(p,q) = (3,5)$ due to a multiple (double) zero eigenvalue in the truncated spectral problem (\ref{Gener-matrix}). All other codes of length $N = 2$ are unstable. 

For larger values, for $N \geq 3$, we summarize the following observations. 
\begin{itemize}
\item For $(p,q)=(2,3)$ and $(p,q)=(3,4)$ the number of spectrally stable codes  quickly grows when $\gamma \to 0$ (see Tables \ref{tab:2-3} and \ref{tab:3-4}). In comparison, there are very few spectrally stable codes for values of $\gamma$ near $\gamma_{p,q}$.

\item If $p=3$ and $q$ grows, $q=4,5,6$,   (see Tables \ref{tab:3-4}, \ref{tab:3-5}, and \ref{tab:3-6}), the number of stable codes decreases. For any value $q=4,5,6$ and for any $\gamma\in(0;\gamma_{p,q})$  the codes $\mathcal{A}_A$ and $\mathcal{A}_a$ remain stable. For $(p,q)=(3,5)$ and $(p,q)=(3,6)$ and $N=9, 10$ the codes $\mathcal{A}_A$ and $\mathcal{A}_a$ are the only stable ILMs for all $\gamma\in(0;\gamma_{p,q})$, except a small vicinity of $\gamma_{p,q}$.

\item If $N=2k+1$ the codes  $\mathcal{A}_{k+1,k}^-$ are  stable in small vicinity of $\gamma_{p,q}$.

\item If $N=2k$ the codes $\mathcal{A}_{k,k}^+$ are stable in small vicinity of $\gamma_{p,q}$ if $q$ exceeds some threshold: $q > 6$ for $p = 2$, 
$q > 5$ for $p = 3$, and any $q > p$ for $p \geq 4$. This fact is illustrated by Figure~\ref{Fig:5_5+}, where eigenvalues of the truncated spectral problem (\ref{Gener-matrix}) are plotted versus $\gamma$ in $(0,\gamma_{p,q})$ for $\mathcal{A}_{5,5}^+$ in three cases: below the threshold $(p,q)=(3,4)$, on the threshold  $(p,q)=(3,5)$, and above the threshold $(p,q)=(3,6)$. The negative (unstable) eigenvalue exist for every $\gamma \in (0,\gamma_{3,4})$ for $(p,q) = (3,4)$ but does not exist near $\gamma_{3,6}$ for $(p,q) = (3,6)$. 

\item If $(p,q)=(3,5)$ codes with equal numbers of symbols $A$ and $a$ (any combinations and any signs are admissible) have a multiple zero eigenvalue by Lemma \ref{lem-3-5}. The stability analysis is inconclusive, since the multiple zero eigenvalue can split for any $\varepsilon>0$. These codes are highlighted in red in Table \ref{tab:3-5}.
\end{itemize}

Figure~\ref{Fig:Stacked_3-4} shows eigenvalues of the truncated spectral problem (\ref{Gener-matrix}) versus $\gamma$ in $(0,\gamma_{3,4})$ for $(p,q)=(3,4)$. 
The three panels correspond to the codes $\mathcal{A}_{5,4}^-$, $\mathcal{A}_{5,4}^+$, and $\mathcal{A}_{5,5}^-$. 

\begin{itemize}
	\item The code $\mathcal{A}_{5,4}^-$ is stable for $\gamma$ near $\gamma_{3,4}$ but becomes unstable for intermediate values of $\gamma$ due to coalescence of real eigenvalues forming complex pairs. As $\gamma$ becomes small, all splitting have been resolved and the code becomes stable again. 
	
	\item The code $\mathcal{A}_{5,4}^+$ is unstable for any $\gamma \in (0,\gamma_{3,4})$ due to the negative eigenvalue. Nevertheless, there exist additional complex eigenvalues for the intermediate values of $\gamma$ including the values near $\gamma_{3,4}$. 
	
	\item The code $\mathcal{A}_{5,5}^-$ is unstable near $\gamma_{3,4}$ due to complex eigenvalues, which persist towards smaller values of $\gamma$ but reappear back as positive  eigenvalues for $\gamma$ near $0$. 
\end{itemize}

\begin{figure}
	\centerline{\includegraphics[width=0.95\textwidth,height=0.85\textheight]{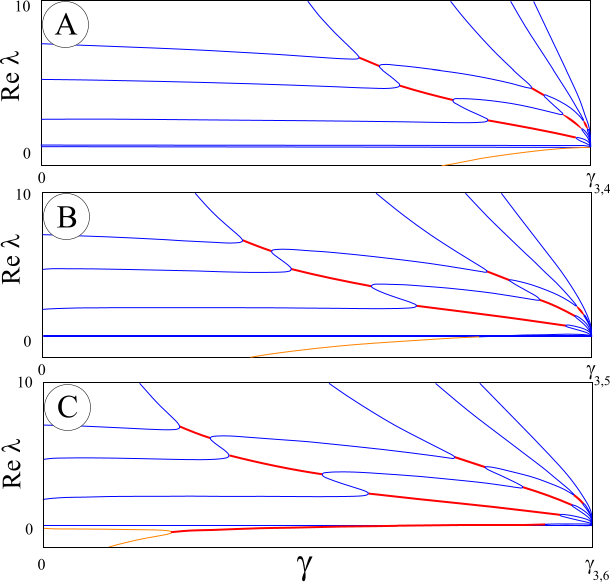}}
	\caption{Eigenvalues of the truncated spectral problem (\ref{Gener-matrix}) versus $\gamma$ in $(0,\gamma_{p,q})$ for the code $\mathcal{A}_{5,5}^+$ 
	with $(p,q)=(3,4)$ (panel A), $(p,q)=(3,5)$ (panel B), and $(p,q)=(3,6)$ (panel C). The real parts of the eigenvalues are shown: orange for real negative eigenvalues, blue for real positive eigenvalues, and red (bold) for complex eigenvalues.}
	\label{Fig:5_5+}
\end{figure}

\begin{figure}
	\centerline{\includegraphics[width=0.95\textwidth,height=0.85\textheight]{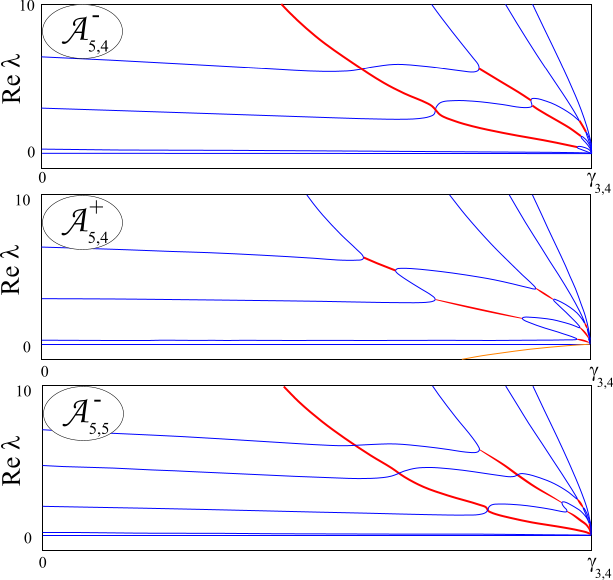}}
	\caption{Eigenvalues of the truncated spectral problem (\ref{Gener-matrix}) versus $\gamma$ in $(0,\gamma_{3,4})$ with $(p,q) = (3,4)$ for the codes $\mathcal{A}_{5,4}^-$ (upper panel), $\mathcal{A}_{5,4}^+$ (middle panel),  and $\mathcal{A}_{5,5}^-$ (lower panel). The color scheme is the same as in Fig.~\ref{Fig:5_5+}.}
	\label{Fig:Stacked_3-4}
\end{figure}

\section{Conclusion}\label{Sect:Conclusion}

We have analyzed spectral stability of intrinsic localized modes (ILMs) in the DNLS equation with competing power nonlinearities. The analysis holds in the anticontinuum limit (ACL) and relies on the count of eigenvalues of the truncated generalized eigenvalue problem and their persistence as eigenvalues 
of the spectral stability problem. In addition, we also analyzed eigenvalues 
of the Hessian operators associated with the variational characterization 
of ILMs and computed minimizers and constrained minimizers of energy 
in the ACL.

The outcome of this work shows a complicated pattern of stability of ILMs 
depending on the strength parameter $\gamma$ of the competing nonlinearities 
with powers $(p,q)$, which is defined in $(0,\gamma_{p,q})$. We identified 
the universally stable codes $\mathcal{A}_A$ and $\mathcal{A}_a$ in (\ref{Eq:uniform}), where $\mathcal{A}_A$ corresponds to the local energy minimizers for any length $N$ and $\mathcal{A}_a$ corresponds to a local constrained energy minimizer for $N = 1$. In addition, we studied stability 
of the codes for stacked modes $\mathcal{A}_{n,m}^+$ and $\mathcal{A}_{n,m}^-$ in (\ref{Eq:Stacked}) and found universal stability of $\mathcal{A}_{k+1,k}^-$ for $N = 2k+1$ for the values of $\gamma$ near $\gamma_{p,q}$. Additionally, the codes $\mathcal{A}_{k,k}^+$ for $N = 2k$ are also stable for the values of $\gamma$ near $\gamma_{p,q}$ but this stability holds for $q > 7$ if $p = 2$, for $q > 5$ if $p = 3$, and for $q > p$ if $p \geq 4$. The asymptotic analysis 
of the spectral stability problem in the limit $\gamma \to \gamma_{p,q}$ is an open question for further studies.

We also observed that eigenvalues of the spectral stability problem 
are very different in magnitude in the limit $\gamma \to 0$ and this 
explains the appearance of many stable codes for small values of $\gamma$ 
especially for physically relevant cases $(p,q) = (2,3)$ and $(p,q) = (3,4)$. 
The asymptotic analysis in the limit $\gamma \to 0$ is another open question 
for further studies. 

Finally, variational characterization of global minimizers of energy 
and constrained minimizers of energy is also an interesting mathematical 
problem for further studies, both in the ACL and for other values of $\varepsilon > 0$.

\end{document}